\documentclass[11pt,reqno]{amsart}

\usepackage[
  letterpaper,
  margin=1in
]{geometry}

\usepackage{tikz}
\usepackage{secdot}
\usepackage{titlesec}
\titleformat{\subsection}
  {\normalfont\bfseries}
  {\thesubsection.}
  {0.6em}
  {}
  [\vspace{0.4\baselineskip}]

\usepackage{amsmath,amssymb,amsthm,mathtools}
\DeclareMathOperator{\Det}{Det}
\DeclareMathOperator{\Tr}{Tr}

\usepackage{newtxtext,newtxmath}
\usepackage{bm}

\makeatletter
\tagsleft@false
\makeatother

\usepackage{microtype}

\usepackage{graphicx}
\usepackage{subcaption}
\usepackage{booktabs}
\usepackage{float}
\usepackage[section]{placeins}
\usepackage{tabularx}
\usepackage{multirow}
\usepackage{array}

\newcolumntype{L}{>{\raggedright\arraybackslash}X}

\newcommand{\cond}[1]{\ensuremath{#1}}

\usepackage[hidelinks]{hyperref}
\hypersetup{
  colorlinks=true,
  linkcolor=blue,
  citecolor=blue,
  urlcolor=blue
}

\usepackage[nameinlink,noabbrev]{cleveref}

\usepackage{comment}

\theoremstyle{plain}
\newtheorem{theorem}{Theorem}[section]
\newtheorem{proposition}[theorem]{Proposition}
\newtheorem{lemma}[theorem]{Lemma}

\theoremstyle{definition}

\theoremstyle{remark}

\title[Harvesting and Group Defense]{Dynamics of a Harvested Predator-Prey Model with a Generalized Holling Type~III Response Capturing Group Defense}

\makeatletter
\def\@setauthors{\begingroup
  \def\thanks{\protect\thanks@warning}  \trivlist \centering\footnotesize \@topsep30\p@\relax
  \item\relax
  \andify\authors
  {\normalsize\normalfont \authors\par}  \endtrivlist
\endgroup}
\makeatother

\title[Harvesting and Group Defense]{Dynamics of a Harvested Predator-Prey Model with a Generalized Holling Type~III Response Capturing Group Defense}

\author{Wael El Khateeb\textsuperscript{1,2}}
\thanks{\textsuperscript{1} Department of Mathematics, Emirates Aviation University, Dubai, UAE.
Email: \href{mailto:eaus559@eau.ac.ae}{eaus559@eau.ac.ae}.}

\thanks{\textsuperscript{2} Department of Mathematics and Statistics, University of Toledo, Toledo, OH, USA.
Email: \href{mailto:wael.elkhateeb@rockets.utoledo.edu}{wael.elkhateeb@rockets.utoledo.edu}.}
\author{Farjana Akter\textsuperscript{3}}
\thanks{\textsuperscript{3} Department of Mathematics and Statistics, University of Toledo, Toledo, OH, USA.
Email: \href{mailto:Farjana.Akter@rockets.utoledo.edu}{Farjana.Akter@rockets.utoledo.edu}.}

\author{Chanaka Kottegoda\textsuperscript{4}}
\thanks{\textsuperscript{4} Department of Mathematics and Physics, Marshall University, Huntington, WV, USA.
Email: \href{mailto:kottegoda@marshall.edu}{kottegoda@marshall.edu}.}

\author{Muner Mustafa Abou Hasan\textsuperscript{5}}
\thanks{\textsuperscript{5} Department of Mathematics, Emirates Aviation University, Dubai, UAE.
Email: \href{mailto:muner.abouhasan@emirates.com}{muner.abouhasan@emirates.com}.}

\date{}

\begin{document}
\maketitle

\begin{abstract}
We investigate a predator-prey model with constant prey harvesting and a generalized Holling type III functional response, focusing on the biological and dynamical consequences of allowing the parameter $b$ to be negative. In this regime, predation efficiency declines at high prey densities, potentially reflecting group defense. This change permits an additional coexistence equilibrium and introduces new bifurcation mechanisms affecting population persistence. We establish the existence of a codimension-three Bogdanov--Takens bifurcation, derive a codimension-three nilpotent-saddle unfolding, and investigate a codimension-two generalized Hopf bifurcation and the emergence of multiple limit cycles. Our analysis show that moderate harvesting weakens the protection conferred by high prey abundance; excessive harvesting leaves too few prey to sustain predator growth; and low harvesting preserves strong group defense, requiring high predator density for population persistence.
\end{abstract}

\noindent\textbf{Keywords:} predator-prey model; mathematical ecology; population dynamics; harvesting; bifurcation theory\\
\textbf{MSC 2020:} 34C23, 92D25, 92D40, 34C05, 34C07, 37G10, 37G15.

\section{Introduction}

The Lotka--Volterra model originated with Lotka's work on autocatalytic reactions \cite{Lotka1910}, was adapted by Lotka to predator-prey interactions \cite{lotka1925elements}, and was independently applied by Volterra to Adriatic fish dynamics \cite{volterra1926fluctuations}. Its classical formulation describes prey growth, predation, conversion of prey into predator biomass, and predator mortality through ordinary differential equations. Later models introduced Holling functional responses to account for feeding saturation and prey-dependent intake rates \cite{holling1959components}, as well as spatial structure and delays to describe movement and the time required for predator recruitment \cite{hupfaker1958experimental}. For example, El Khateeb et al.\ \cite{khateeb2025globalhopfbifurcationconnected} studied local and global Hopf bifurcations in a delayed predator-prey model and showed that delays can permit oscillatory coexistence at lower carrying capacities than in the corresponding ODE model. These developments show how biological mechanisms can change both the conditions for coexistence and the onset of population oscillations.\\

Harvesting and group defense can have competing effects on predator-prey interactions. Constant prey harvesting reduces the food available to predators and can threaten the persistence of both populations \cite{Clark1974}. Prey that form defensive groups, however, may become less vulnerable to predation as their abundance increases \cite{geritz2013group}. A non-monotone functional response can capture this reduction in feeding efficiency at high prey densities \cite{lamontagne2008bifurcation,huang2014bifurcations}. In this setting, removing prey changes not only their abundance but also the protection associated with that abundance. The effect of harvesting on predator establishment therefore depends on whether prey density lies below or above the level at which predation is most effective. This interaction motivates a closer study of harvesting in models that capture a group defense strategy.\\

A starting point is the harvested predator-prey system studied by Etoua and Rousseau \cite{etoua2010bifurcation}:

\begin{equation}\label{Original Model}
    \begin{cases}
         \frac{dx}{dt}=r x(1-\frac{x}{K})-myp(x)-h,\\
         \frac{dy}{dt}=y(-d+cmp(x)),
    \end{cases}
\end{equation}
where $x$ and $y$ are the prey and predator populations, respectively, $K$ is the carrying capacity, $r$ is the growth rate of the prey, $h$ is the constant prey-harvesting rate, $d$ is the natural per capita death rate of predators, $m$ is a positive predation-rate coefficient, and $c$ is the conversion efficiency. The functional response is taken as a Holling type III $p(x)=\frac{x^2}{a x^2+b x+1}$, where $a>0$ is the inverse of the satiation limit and $b\geq0$ is a constant. System \eqref{Original Model} exhibits at most three equilibria, at most two of which are on the $x$-axis. The nilpotent equilibrium of multiplicity three arises on the invariant line and, while its formal codimension can reach four, for $b>0$, the presence of this invariant line reduces the effective codimension of this degeneracy to three. Working with normal forms that preserve the invariant line, one obtains a systematic description of the nilpotent saddle bifurcation and its adjacent transition sets. The bifurcation diagram of a nilpotent saddle includes attracting and repelling Hopf bifurcations, a codimension-two Hopf bifurcation, attracting and repelling heteroclinic-loop bifurcations, a codimension-two heteroclinic loop, as well as double limit cycles. The center and reversibility cases at $b=0$ were subsequently resolved by Laurin and Rousseau \cite{laurin2011organizing}.\\

For $-2\sqrt a<b<0$, the functional response is non-monotone: it increases at low prey density, reaches a maximum at $x=-\frac{2}{b}$, and then decreases toward $\frac{1}{a}$ as prey density grows (Fig.\ \ref{holling_types_a},\ref{holling_types_b}) \cite{lamontagne2008bifurcation,huang2014bifurcations}. The decrease in intake at high prey densities provides a way to represent group defense while retaining a bounded feeding rate. The corresponding model without harvesting, including its codimension-three Bogdanov--Takens bifurcation, was studied in \cite{lamontagne2008bifurcation}.\\

In this paper, we extend the analysis of \cite{etoua2010bifurcation} to this negative-$b$ regime. We determine when two coexistence equilibria are biologically feasible and study their collisions with each other and with the predator-free equilibria. The analysis establishes a codimension-three Bogdanov--Takens bifurcation, a codimension-three nilpotent-saddle unfolding, and a codimension-two generalized Hopf bifurcation that can generate two small-amplitude limit cycles. We then connect the boundary bifurcations to predator invasion and identify an intermediate harvesting interval in which a rare predator population can establish. The comparison with the positive-$b$ regime is summarized in Table~\ref{tab:differences between b>0 and b<0}.\\

\begin{figure}[!htbp]
  \centering
  \begin{subfigure}{0.45\linewidth}
    \centering
    \begin{tikzpicture}
      \node[inner sep=0] (img) {
        \includegraphics[width=\linewidth]{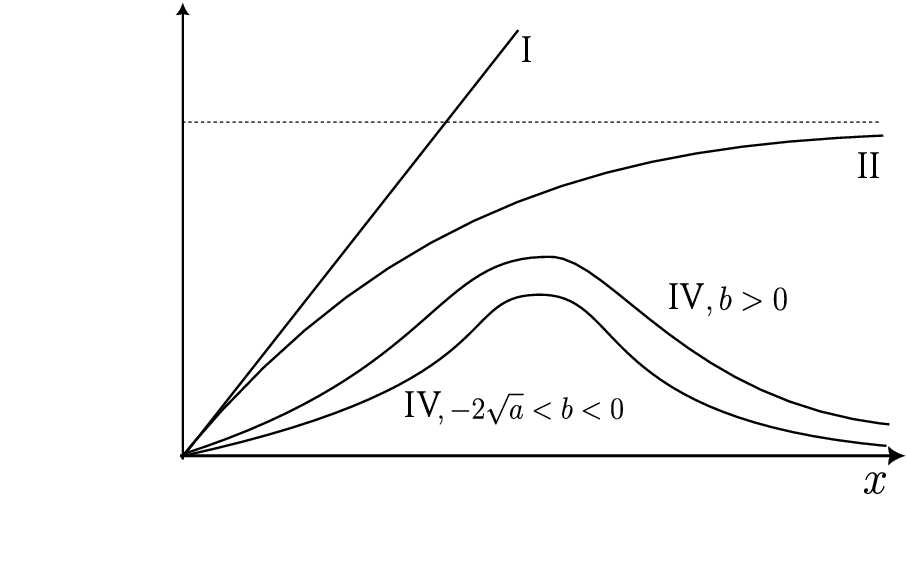}
      };
      \node[anchor=south west, align=left, font=\small]
        at ([xshift=10mm,yshift=1mm]img.north west)
        {Number of prey\\consumed};
    \end{tikzpicture}
    \caption{}
    \label{holling_types_a}
  \end{subfigure}
  \hfill
  \begin{subfigure}{0.45\linewidth}
    \centering
    \begin{tikzpicture}
      \node[inner sep=0] (img) {
        \includegraphics[width=\linewidth]{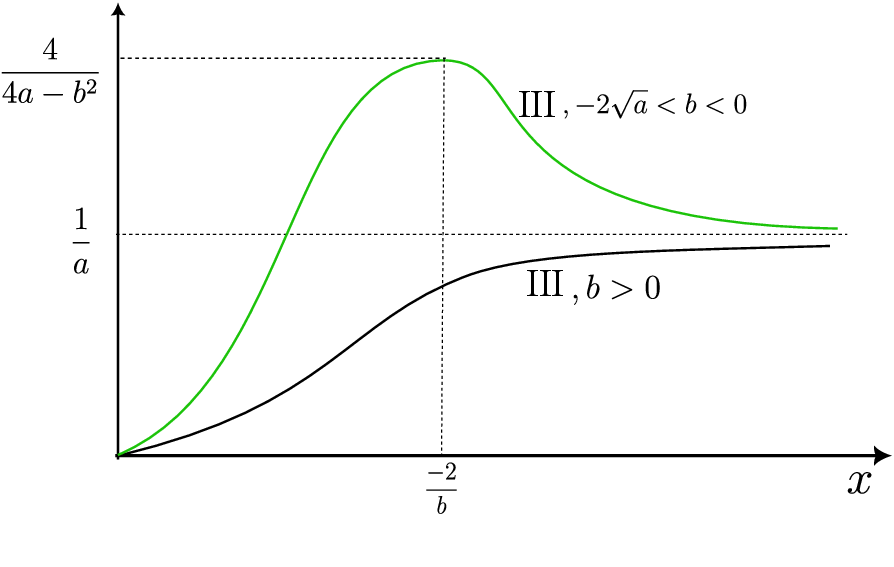}
      };
      \node[anchor=south west, align=left, font=\small]
        at ([xshift=5mm,yshift=1mm]img.north west)
        {Number of prey\\consumed};
    \end{tikzpicture}
    \caption{}
    \label{holling_types_b}
  \end{subfigure}
  \caption{Holling Types I, II, III and IV}
  \label{holling_types}
\end{figure}

This paper is organized as follows:
\begin{itemize}
\item \textbf{Section 2.} We nondimensionalize the model, determine the conditions for equilibria to exist, and establish boundedness for solutions that remain in the first quadrant.

\item \textbf{Section 3.} We classify the equilibria and study their bifurcations, including the codimension-2 generalized Hopf bifurcation, the codimension-3 Bogdanov--Takens bifurcation (BT-3), and the codimension-3 nilpotent-saddle bifurcation (NS-3). We derive versal unfoldings of the latter two and examine collisions with the predator-free equilibria.

\item \textbf{Section 4.} We discuss the biological implications, with particular attention to how moderate harvesting can facilitate predator establishment when prey exhibit group defense.
\end{itemize}

\section{Qualitative Dynamics of the ODE Model}

Let
\(
\Omega=\{(x,y)\in\mathbb{R}^2:\ x\ge0,\ y\ge0\}.
\)
By applying the nondimensional rescaling
\begin{equation}
  (\bar{x},\bar{y},\bar{t},\bar{h},\bar{a},\bar{b},\bar{K},\bar{d})
  \to
  \Bigl(
  \sqrt{\frac{cm}{r}}x,\
  \sqrt{\frac{m}{cr}}y,\
  rt,\
  \sqrt{\frac{cm}{r}}\frac{h}{r},\
  \frac{ar}{cm},\
  \frac{b\sqrt{r}}{\sqrt{cm}},\
  \sqrt{\frac{cm}{r}}K,\
  \frac{d}{r}
  \Bigr),
\end{equation}
Omitting the bars from the rescaled variables and parameters, and writing $p$ for the rescaled functional response, system~\eqref{Original Model} takes the planar form
\begin{equation}\label{Rescaled Main Model}
    \begin{cases}
        \dot{x}=x\bigl(1-\frac{x}{K}\bigr)-y\,p(x)-h,\\
        \dot{y}=y\bigl(-d+p(x)\bigr).
    \end{cases}
\end{equation}
where
\[
p(x)=\frac{x^2}{ax^2+bx+1}.
\]
Throughout this paper, we assume $a,d,K,h>0$ and $-2\sqrt a<b<0$. The equilibria of \eqref{Rescaled Main Model}, together with their existence conditions in $\Omega$, are summarized as follows:\\
\begin{itemize}
\item The predator-free equilibria are
$
E_1=\Bigl(x_{b_1}=\frac{K-\sqrt{K^2-4hK}}{2},\,0\Bigr)\text{ and }
E_2=\Bigl(x_{b_2}=\frac{K+\sqrt{K^2-4hK}}{2},\,0\Bigr),
$
which exist for $K\ge4h$.

\item The coexistence equilibrium
$
E_3=\Bigl(
x_\ell=\frac{-bd-\sqrt{b^2d^2+4d(1-ad)}}{2(ad-1)},\
y_\ell=\frac{x_\ell\bigl(1-\frac{x_\ell}{K}\bigr)-h}{d}
\Bigr)\in\Omega
$
for $0<d<\frac{4}{4a-b^2}$, $d\ne\frac{1}{a}$, $K>4h$, and $x_{b_1}<x_\ell<x_{b_2}$. For $d=\frac{1}{a}$, $x_\ell=-\frac{1}{b}$, and $E_3$ lies in the open first quadrant when $K>4h$ and $x_{b_1}<x_\ell<x_{b_2}$.

\item The coexistence equilibrium
$
E_4=\Bigl(
x_r=\frac{-bd+\sqrt{b^2d^2+4d(1-ad)}}{2(ad-1)},\
y_r=\frac{x_r\bigl(1-\frac{x_r}{K}\bigr)-h}{d}
\Bigr)\in\Omega
$
for $\frac{4}{4a-b^2}>d>\frac{1}{a}$, $K>4h$, and $x_{b_1}<x_r<x_{b_2}$.\\
\end{itemize}
For $\frac{4}{4a-b^2}>d>\frac{1}{a}$ with $x_{b_1}<x_\ell,x_r<x_{b_2}$, both $E_3$ and $E_4$ lie in the open first quadrant and therefore coexist in $\Omega$. If a coexistence equilibrium reaches $E_1$ or $E_2$, its predator coordinate becomes zero, yielding a boundary collision. In contrast, when $d<\frac{1}{a}$, the equilibrium $E_4$ has a negative $x$-coordinate and is not biologically relevant, so it is discarded.

The theorem below establishes boundedness for trajectories that remain in the first quadrant.

\begin{theorem}
Assume $K>4h$. There exist positive constants $\mathcal H,\mathcal P,$ and $\mathcal R$ such that every orbit starting in the open first quadrant either approaches a segment of the invariant $x$-axis, exits the first quadrant through the positive $y$-axis, or intersects the rectangle $[\mathcal H,\mathcal P]\times[0,\mathcal R]$. Moreover, every orbit that remains in $\Omega$ for all forward time is bounded (see Fig. \ref{fig:H,P,R}).
\end{theorem}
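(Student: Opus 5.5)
The plan is to work directly with the vector field \eqref{Rescaled Main Model} and build the rectangle from sign information on $\dot x$ and $\dot y$, combined with a Lyapunov-type total-population function. Since $-2\sqrt a<b<0$, the denominator $ax^2+bx+1$ is positive for all $x$. Hence $p$ is smooth and positive for $x>0$, with $0\le p(x)\le p_{\max}=p(-2/b)$.

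\textbf{Upper bound on $x$.} For $y\ge0$ we have $\dot x\le x(1-x/K)-h$. This is negative for $x>x_{b_2}$. So I will take $\mathcal P>x_{b_2}$, for instance $\mathcal P=K$. Any orbit with $x>\mathcal P$ has $\dot x\le -\delta<0$ there, so it enters $\{x\le\mathcal P\}$ in finite time and never leaves. The line $x=\mathcal P$ is crossed only leftward.

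\textbf{Upper bound on $y$.} Set $V=x+y$. Then
\[
\dot V = x\Bigl(1-\frac{x}{K}\Bigr)-h-dy \le \frac{K}{4}-h-dy .
\]
Add $\eta V$ with $0<\eta<d$. On $\{0\le x\le \mathcal P\}$ this gives $\dot V+\eta V\le \frac K4+\eta\mathcal P - (d-\eta)y\le C$, hence $\dot V\le C-\eta V$. So $\limsup V\le C/\eta$ as long as the orbit stays in $\Omega$. I will take $\mathcal R$ slightly larger than $C/\eta$. This already yields the boundedness claim. An orbit staying in $\Omega$ has $x$ eventually at most $\mathcal P$ and $y\le V$ eventually at most $\mathcal R$, and before that time it is bounded by continuity on a compact time interval.

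\textbf{Lower bound $\mathcal H$ and the trichotomy.} Choose $\mathcal H$ with $0<\mathcal H<x_{b_1}$ small enough that $p(x)<d$ for $0<x\le\mathcal H$; this is possible because $p(x)\sim x^2$ near $0$. In the strip $0<x<\mathcal H$:
\begin{itemize}
\item $\dot y<0$, so $y$ decreases;
\item $\dot x\le x(1-x/K)-h<0$, since $\mathcal H<x_{b_1}$, so $x$ decreases.
\end{itemize}
Thus an orbit that enters the strip has $\dot x\le -\min_{[0,\mathcal H]}(h-x(1-x/K))<0$, and it reaches $x=0$ in finite time with $y>0$. That is, it exits through the positive $y$-axis. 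Now take an orbit in the open first quadrant. After a finite time it lies in $x\le\mathcal P$ with $y\le\mathcal R$, unless it leaves $\Omega$ first. If it ever has $x\le\mathcal H$, it exits through the $y$-axis as above. Otherwise it stays in $x\ge\mathcal H$, so it eventually meets $[\mathcal H,\mathcal P]\times[0,\mathcal R]$. Orbits whose $y$ tends to $0$ while $x$ stays in $[\mathcal H,\mathcal P]$ fall into the ``approaches the $x$-axis'' alternative; by invariance of the axis and the bounds, their $\omega$-limit lies on the segment $[\mathcal H,\mathcal P]\times\{0\}$. I will state this case separately, following the statement.

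\textbf{Main obstacle.} The delicate step is the time ordering in the trichotomy. An orbit starting with huge $y$ may have $\dot x$ very negative and could hit the $y$-axis before $V$ has decayed. This is harmless, since that is precisely the exit alternative. I must also make sure the $V$-estimate is only used while $x\ge0$; I will handle this by stopping trajectories at the exit time. A secondary point is that the upper bound on $x$ uses only $y\ge0$, so it holds before $y$ is controlled.
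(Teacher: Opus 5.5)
Your proposal is correct and follows essentially the paper's route: vertical barriers $x=\mathcal H<x_{b_1}$ and $x=\mathcal P>x_{b_2}$ across which $\dot x<0$, combined with a linear differential inequality for the total population $x+y$. The differences are cosmetic. The paper takes $\eta=d$ and uses $(1+d)x-x^2/K\le K(1+d)^2/4$ for all $x$, so $\dot W+dW\le C$ holds from time zero and you do not need to wait for $x\le\mathcal P$ or argue on an initial compact interval; your extra requirement $p<d$ on $(0,\mathcal H]$ is also not needed.
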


\begin{proof}
We consider only orbits in $\Omega$, as in \cite{etoua2010bifurcation}. The $x$-axis is invariant, and no orbit can cross it. For $h\ne0$, trajectories can cross the $y$-axis only from right to left. Define the vertical lines $x=\mathcal P$ and $x=\mathcal H$ such that $0<\mathcal H<x_{b_1}<x_{b_2}<\mathcal P$ and $\mathcal P>K$. If $x>\mathcal P$ or $0\le x<\mathcal H$, then
\[
\frac{dx}{dt}\le x\Bigl(1-\frac{x}{K}\Bigr)-h<0.
\]
Hence, orbits cross the lines $x=\mathcal P$ and $x=\mathcal H$ from right to left. On $[0,\mathcal H]$, the right-hand side is bounded above by a strictly negative constant. A trajectory entering this strip therefore leaves the first quadrant in finite time. For a trajectory starting in the open first quadrant, its predator coordinate remains positive at every finite time before exit, so this exit occurs through the positive $y$-axis.\\

\noindent To establish boundedness without imposing any additional conditions on the response roots, set $W=x+y$. As long as a trajectory remains in $\Omega$,
\[
\dot W=x-\frac{x^2}{K}-h-dy,
\qquad
\dot W+dW=(1+d)x-\frac{x^2}{K}-h
\le \frac{K(1+d)^2}{4}-h=:C.
\]
Since $K>4h$, we have $C>0$. Integration yields
\[
W(t)\le W(0)e^{-dt}+\frac{C}{d}(1-e^{-dt}).
\]
Thus every orbit that remains in $\Omega$ for all forward time is bounded. Choose $\mathcal R>\frac{C}{d}$. Every such orbit eventually satisfies $y<\mathcal R$ and $x<\mathcal P$, and it cannot enter $x\le\mathcal H$ without subsequently leaving $\Omega$. Consequently it intersects $[\mathcal H,\mathcal P]\times[0,\mathcal R]$. The rectangle is an intersection region and need not be positively invariant. The different cases are summarized in Fig. \ref{fig:H,P,R}. This completes the proof.
\end{proof}

\begin{figure}[!htbp]
  \centering
  \begin{subfigure}{0.35\linewidth}
    \centering
    \includegraphics[width=\linewidth]{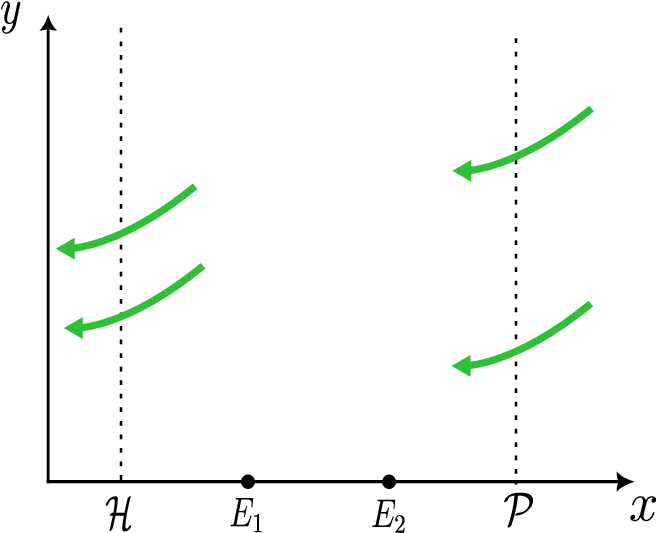}
    \caption{$d>p\!\left(\frac{-2}{b}\right)$}
    \label{fig:SvsXvsT}
  \end{subfigure}
  \hspace{3cm}
  \begin{subfigure}{0.35\linewidth}
    \centering
    \includegraphics[width=\linewidth]{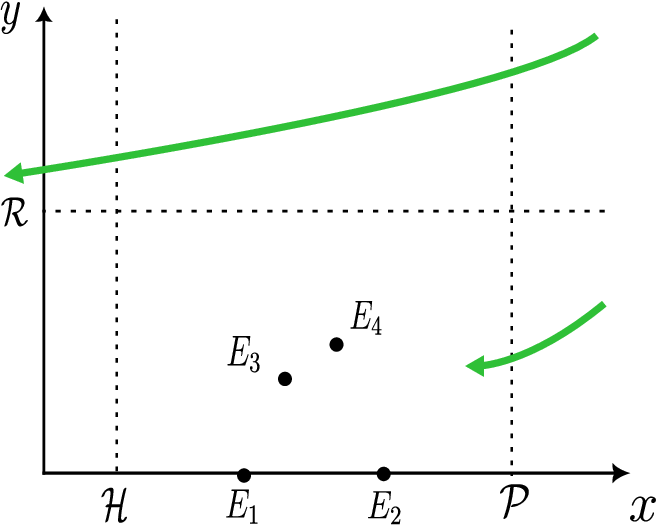}
    \caption{$d<p\!\left(\frac{-2}{b}\right)$}
    \label{fig:H,P,R}
  \end{subfigure}
  \caption{Orbits in the left panel either converge to an equilibrium point on the \(x\)-axis or leave the first quadrant through the positive \(y\)-axis. Orbits in the right panel either enter the rectangle \([\mathcal H,\mathcal P]\times[0,\mathcal R]\) from above or from the right, or leave the first quadrant through the positive \(y\)-axis.}
  \label{fig:orbits_boundedness}
\end{figure}

\section{High codimension bifurcations}

Having established the corresponding boundedness for trajectories in $\Omega$, we now turn to the qualitative behavior near the equilibrium points. In particular, we analyze the local dynamics in a sufficiently small neighborhood of each equilibrium by linearizing \eqref{Rescaled Main Model} and computing the associated Jacobian matrix:
\begin{equation}
J(x,y)=
\begin{bmatrix}
1-\frac{2x}{K}
-\frac{2yx}{ax^2+bx+1}
+\frac{yx^2(2ax+b)}{(ax^2+bx+1)^2}
&
-\frac{x^2}{ax^2+bx+1}
\\
y\left(
\frac{2x}{ax^2+bx+1}
-\frac{x^2(2ax+b)}{(ax^2+bx+1)^2}
\right)
&
-d+\frac{x^2}{ax^2+bx+1}
\end{bmatrix}.
\end{equation}
We denote the determinant and trace of $J(x,y)$ by $\Det(J(x,y))$ and $\Tr(J(x,y))$, respectively.
For $i=1,2,3,4$, let $\lambda_1^i$ and $\lambda_2^i$ denote the eigenvalues of $J(E_i)$. 
At a merged equilibrium $E_{jk}$ or $E_{jkl}$, we denote the two eigenvalues by $\lambda_1^{jk},\lambda_2^{jk}$ or $\lambda_1^{jkl},\lambda_2^{jkl}$, respectively, where $j,k,l\in\{1,2,3,4\}$.

According to the signs of $\Det(J(x,y))$ and $\Tr(J(x,y))$, and by examining higher-order terms in the degenerate cases, we classify the equilibria as hyperbolic saddles, sinks, sources, cusps, elliptic points, nilpotent saddles \cite{zhang1992qualitative,perko2013differential,hartman1964ode,kuznetsov2004elements,Chow_Li_Wang_1994,dumortier1987generic}. Each configuration is then studied separately, as it corresponds to a distinct local scenario arising in the bifurcations that occur when equilibria coalesce \cite{hirsch2013dedschaos,Chow_Li_Wang_1994}. The resulting classification is summarized in the discussion section (Table~\ref{bifurcation-summary}).

Here, an equilibrium is called an antisaddle when $\Det(J)>0$.

\begin{theorem}\label{E_3 antisaddle, E_4 saddle}
Assume that the distinct coexistence equilibria $E_3$ and $E_4$ exist in the open first quadrant. Then $E_3$ is of antisaddle type, whereas $E_4$ is of saddle type.
\end{theorem}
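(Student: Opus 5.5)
At a coexistence equilibrium $(x^*,y^*)$ with $y^*>0$ we have $p(x^*)=d$, so the $(2,2)$ entry of $J$ vanishes. The Jacobian then takes the form
\[
J(x^*,y^*)=\begin{bmatrix} * & -p(x^*)\\ y^*p'(x^*) & 0\end{bmatrix},
\]
using $\frac{2x}{ax^2+bx+1}-\frac{x^2(2ax+b)}{(ax^2+bx+1)^2}=p'(x)$. Hence
\[
\Det J(x^*,y^*)=y^*\,p(x^*)\,p'(x^*)=d\,y^*\,p'(x^*).
\]
Since $d>0$ and $y^*>0$ in the open first quadrant, the type of the equilibrium is decided entirely by the sign of $p'(x^*)$. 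The plan is therefore to compute $p'$ explicitly and locate $x_\ell$ and $x_r$ relative to its critical point.

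A direct computation gives
\[
p'(x)=\frac{x(bx+2)}{(ax^2+bx+1)^2}.
\]
The denominator is positive because $b^2<4a$, and $x>0$. With $b<0$, the sign of $p'(x)$ is therefore that of $bx+2$: $p$ is increasing on $(0,-2/b)$ and decreasing on $(-2/b,\infty)$, matching the description of the non-monotone response in the introduction.

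The remaining step, which is the only real content, is to show $x_\ell<-2/b<x_r$ whenever both equilibria exist. Both are positive roots of $p(x)=d$, i.e.\ of
\[
(ad-1)x^2+bdx+d=0,
\]
and both exist in the open quadrant only when $1/a<d<4/(4a-b^2)$, so $ad-1>0$. Their sum is $-bd/(ad-1)$. The vertex of this parabola is at $-bd/(2(ad-1))$, and since $x_\ell<x_r$ are distinct (the discriminant is positive for $d<4/(4a-b^2)$), they lie on opposite sides of the vertex. It then suffices to check that $p(-2/b)=\frac{4}{4a-b^2}>d$. The function $p$ is strictly increasing on $(0,-2/b)$ and strictly decreasing afterwards, and it tends to $1/a<d$ as $x\to\infty$. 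So the equation $p(x)=d$ has exactly one root on each side of $-2/b$, and the smaller one is $x_\ell$. This unimodality argument is cleaner than comparing the quadratic's vertex to $-2/b$ algebraically, and I would use it.

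Consequently $p'(x_\ell)>0$ and $p'(x_r)<0$. This gives $\Det J(E_3)>0$, so $E_3$ is an antisaddle, and $\Det J(E_4)<0$, so $E_4$ is a saddle. The only care needed is confirming the labeling, namely that $x_\ell$ as given by the formula is the smaller root when $ad>1$. That follows because the denominator $2(ad-1)$ is positive and $x_\ell$ carries the minus sign on the square root.
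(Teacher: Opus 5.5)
Your proposal is correct and follows the paper's proof: both reduce the classification to $\Det J=d\,y^*p'(x^*)$ and the sign of $p'(x)=x(bx+2)/(ax^2+bx+1)^2$, except that you actually justify $x_\ell<-2/b<x_r$ (via unimodality of $p$ and $p(-2/b)=\frac{4}{4a-b^2}>d>\frac1a$), which the paper only asserts. Your aside about the quadratic's vertex is superfluous, since that vertex coincides with $-2/b$ only at $d=d_0$, so you are right to drop it; alternatively, $p(x)-d=-\bigl((ad-1)x^2+bdx+d\bigr)/(ax^2+bx+1)$ with $ad-1>0$ shows directly that $p(-2/b)>d$ places $-2/b$ strictly between the two roots.
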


\begin{proof}
Evaluating the determinant $\Det(J(x,y))$ at $E_3=(x_\ell,y_\ell)$ and $E_4=(x_r,y_r)$ yields $\Det(J(E_3))=d\,y_\ell\,p'(x_\ell),$
$\Det(J(E_4))=d\,y_r\,p'(x_r).$ Since $d>0$ and $y_\ell,y_r>0$ for these interior equilibria, the sign of the determinant is determined by the sign of $p'(x)$. Since $p'(x)=\frac{x(bx+2)}{(ax^2+bx+1)^2}$ and $x_\ell<-\frac{2}{b}<x_r$, we have $p'(x_\ell)>0$ and $p'(x_r)<0$. Consequently, $\Det(J(E_3))>0$ and $\Det(J(E_4))<0$.
\end{proof}

When $E_3$ and $E_4$ are distinct and lie in the open first quadrant, their types are determined by the sign of the Jacobian determinant. We now consider the case in which these equilibria coincide. At $d=d_0=\frac{4}{4a-b^2}$, the two coexistence equilibria merge, and the Jacobian has a zero eigenvalue. The following theorem characterizes this saddle-node bifurcation and the stability of the resulting equilibrium.
\subsection{Codimension Three Bogdanov-Takens bifurcations }

\begin{theorem}
Let
\(
d=d_0=\frac{4}{4a-b^2},\)
and suppose
\(
K>-\frac{2}{b},
0<h<\frac{4+2bK}{-b^2K}.
\)
Then the coexistence equilibria $E_3$ and $E_4$ merge at
\[
E_{34}=(x_{lr},y_{lr}),\qquad
\text{where}\qquad
x_{lr}=-\frac{2}{b},\qquad
y_{lr}=\frac{x_{lr}(1-\frac{x_{lr}}{K})-h}{d_0}>0.
\]
If $K\ne K_0=-\frac{4}{b}$, then $E_{34}$ is a semi-hyperbolic saddle-node. Its nonzero eigenvalue is
\(
\lambda_2^{34}=\frac{bK+4}{bK}.
\)
Thus $\lambda_2^{34}<0$ for $-\frac{2}{b}<K<K_0$ and $\lambda_2^{34}>0$ for $K>K_0$. At $K=K_0$, the equilibrium has a double zero eigenvalue and is treated separately below.
\end{theorem}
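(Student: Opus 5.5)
The plan is to verify the collision directly from the formulas for $E_3,E_4$, then evaluate the Jacobian at the merged point, and finally run a one-dimensional center-manifold reduction to check the quadratic nondegeneracy condition of a saddle-node.

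\textbf{Merging of the equilibria.} First I locate the merger. The $x$-coordinates $x_\ell,x_r$ coincide exactly when $b^2d^2+4d(1-ad)=0$, that is, when $d(b^2d+4-4ad)=0$, which gives $d=d_0=\frac{4}{4a-b^2}$. At this value,
\[
ad_0-1=\frac{b^2}{4a-b^2},
\]
so that $x_{lr}=\frac{-bd_0}{2(ad_0-1)}=-\frac{2}{b}$. This is the maximizer of $p$, and one checks directly that $p(-2/b)=d_0$. For positivity of $y_{lr}$ I write
\[
x_{lr}\Bigl(1-\frac{x_{lr}}{K}\Bigr)=-\frac{2(bK+2)}{b^2K}=\frac{4+2bK}{-b^2K}.
\]
The hypothesis $K>-\frac{2}{b}$ gives $bK+2<0$, so this quantity is positive. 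Together with $h<\frac{4+2bK}{-b^2K}$ it yields $y_{lr}>0$.

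\textbf{Jacobian at $E_{34}$.} Next I evaluate $J$ at $E_{34}$ using $p(x_{lr})=d_0$ and $p'(x_{lr})=0$. The second row vanishes, and the upper-left entry reduces to $1-\frac{2x_{lr}}{K}-y_{lr}p'(x_{lr})=1+\frac{4}{bK}$. Hence
\[
J(E_{34})=\begin{bmatrix}\frac{bK+4}{bK} & -d_0\\ 0 & 0\end{bmatrix},
\]
with eigenvalues $0$ and $\lambda_2^{34}=\frac{bK+4}{bK}$. Since $bK<0$, we have $\lambda_2^{34}>0$ if and only if $bK+4<0$, i.e.\ $K>K_0=-\frac{4}{b}$. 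This gives the stated sign dichotomy on either side of $K_0$, and a double zero eigenvalue when $K=K_0$.

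\textbf{Saddle-node nondegeneracy (main obstacle).} The step requiring care is showing that, for $K\neq K_0$, the equilibrium is a genuine saddle-node rather than a more degenerate semi-hyperbolic point. Since $\lambda_2^{34}\neq0$, I will reduce to the one-dimensional center manifold, following \cite{zhang1992qualitative,perko2013differential}.
\begin{itemize}
\item The right null vector of $J(E_{34})$ is $v=(d_0,\lambda_2^{34})^T$, and the left null vector is $w=(0,1)$.
\item Translating $E_{34}$ to the origin and writing $(x-x_{lr},\,y-y_{lr})=u\,v+(\text{higher-order terms})$, the projected equation is $\dot u\propto w\cdot F=y\bigl(p(x)-d_0\bigr)$.
\item Because $p(x)-d_0=\tfrac12p''(x_{lr})(x-x_{lr})^2+O\bigl((x-x_{lr})^3\bigr)$ has no linear term, the reduced equation is
\[
\dot u=\tfrac12\,y_{lr}\,p''(x_{lr})\,d_0^2\,u^2+O(u^3),
\]
up to a nonzero normalization factor.
\end{itemize}
The quadratic coefficient is nonzero: $y_{lr}>0$, $d_0>0$, and $p''(x_{lr})<0$ because $x_{lr}=-2/b$ is a nondegenerate maximum of $p$. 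This can be verified by differentiating $p'(x)=\frac{x(bx+2)}{(ax^2+bx+1)^2}$ at the zero of $bx+2$, which gives $p''(-2/b)=\frac{-2/b\cdot b}{(\cdots)^2}<0$.

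By the standard theorem on semi-hyperbolic equilibria, $E_{34}$ is therefore a saddle-node. Its hyperbolic direction is attracting when $\lambda_2^{34}<0$ and repelling when $\lambda_2^{34}>0$.
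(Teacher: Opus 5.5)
Your proposal is correct and follows essentially the same route as the paper: evaluate $J(E_{34})$, reduce to the one-dimensional center manifold, and check that the quadratic coefficient is nonzero. You write that coefficient structurally, using $p(x_{lr})=d_0$, $p'(x_{lr})=0$ and $p''(x_{lr})=-2b^4/(4a-b^2)^2<0$; after setting $y-y_{lr}=\lambda_2^{34}u$, it agrees with the paper's explicit $\frac{4b^4K(b^2hK+2bK+4)}{(4a-b^2)^3(bK+4)^2}$, and your form makes clear why its nonvanishing is equivalent to $y_{lr}\neq0$.
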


\begin{proof}
The conditions
\[
C_2:\ 0<h<\frac{4+2bK}{-b^2K},\qquad
C_3:\ K>-\frac{2}{b},
\]
ensure that the $y$-coordinate of $E_{34}$ is positive. Evaluating the Jacobian matrix at $E_{34}$ yields
\begin{equation}
J(E_{34})=
\begin{bmatrix}
\frac{bK+4}{bK}&-\frac{4}{4a-b^2}\\
0&0
\end{bmatrix}.
\end{equation}
$J(E_{34})$ is singular, possessing a zero eigenvalue. For $K\ne K_0$, the corresponding eigenvalue-eigenvector pairs are given by:
\[
\left(
\lambda_1^{34}=0,\quad
v_1=
\begin{bmatrix}
\frac{4bK}{(-b^2+4a)(Kb+4)}\\
1
\end{bmatrix}
\right)
\quad\text{and}\quad
\left(
\lambda_2^{34}=\frac{Kb+4}{bK},\quad
v_2=
\begin{bmatrix}
1\\0
\end{bmatrix}
\right).
\]
To characterize the local phase portrait, we translate $E_{34}$ to the origin via $x_1=x-x_{lr},y_1=y-y_{lr}$. In the new coordinates $(x_1,y_1)$, the Taylor expansion of the vector field at the origin is
\begin{equation}\label{SN step 1}
\begin{cases}
\dot x_1=
\frac{bK+4}{bK}x_1
-\frac{4}{-b^2+4a}y_1
-\frac{b^4hK+2b^3K+16a}{4K(-b^2+4a)}x_1^2
+O(|x_1,y_1|^3),\\
\dot y_1=
\frac{(b^2hK+2bK+4)b^2}{4K(-b^2+4a)}x_1^2
+O(|x_1,y_1|^3).
\end{cases}
\end{equation}
Next, we simplify the linear part by putting it into Jordan form. This is achieved by the linear change of variables
\[
y_2=y_1,\qquad
x_2=x_1-\frac{4bK}{(-b^2+4a)(bK+4)}y_1.
\]
System (\ref{SN step 1}) becomes:
\begin{equation}\label{SN step 2}
\begin{cases}
\dot x_2=\frac{bK+4}{bK}x_2+O(|x_2,y_2|^2),\\
\dot y_2=
\frac{b^2(b^2hK+2bK+4)}{4K(-b^2+4a)}x_2^2
+\frac{2b^3(b^2hK+2bK+4)}{(-b^2+4a)^2(bK+4)}y_2x_2\\
\hspace{1.4cm}
+\frac{4b^4K(b^2hK+2bK+4)}{(-b^2+4a)^3(bK+4)^2}y_2^2
+O(|x_2,y_2|^3).
\end{cases}
\end{equation}
At this stage, the presence of a single zero eigenvalue implies the existence of a one-dimensional invariant center manifold $W^c(0)$ tangent to the eigenspace spanned by $v_1$. In particular, $W^c(0)$ admits a local graph representation
$W^c(0)=\{(x_2,y_2)\in R^2\mid x_2=\chi(y_2),\text{ for small }y_2\}.$ Restricting \eqref{SN step 2} to $W^c(0)$ yields the reduced dynamics
\begin{equation}
\begin{cases}
\dot x_2=\frac{bK+4}{bK}x_2+O(|x_2,y_2|^2),\\
\dot y_2=
\frac{4b^4K(b^2hK+2bK+4)}{(-b^2+4a)^3(bK+4)^2}y_2^2
+O(y_2^3).
\end{cases}
\end{equation}
The coefficient of $y_2^2$ in the second equation is nonzero under $C_2$, because the excluded value
\[
-\frac{2(bK+2)}{b^2K}
=\frac{4+2bK}{-b^2K}
\]
is precisely the boundary value for which $y_{lr}=0$. Consequently, for $K\ne K_0$, $E_{34}$ is a saddle-node equilibrium. Moreover, its transverse stability is determined by the sign of the linear coefficient in the $x_2$-equation of \eqref{SN step 2}, namely $\frac{bK+4}{bK}$. If $K>K_0=-\frac{4}{b}$, then $\lambda_2^{34}>0$, whereas if $-\frac{2}{b}<K<K_0$, then $\lambda_2^{34}<0$. At $K=K_0$ both eigenvalues vanish, so the semi-hyperbolic reduction no longer applies.\\
\end{proof}

We recall the normal-form reduction in Proposition~5.3 of \cite{lamontagne2008bifurcation}:
\begin{proposition}\label{propostion from Reousou 2008}
Let $a_{ij}$ be real constants for $i,j\in\mathbb N$, where $\mathbb N=\{0,1,2,\ldots\}$. The system
\[
\dot x=y,
\]
\[
\dot y=x^2+a_{30}x^3+a_{40}x^4
+y(a_{21}x^2+a_{31}x^3)
+y^2(a_{12}x+a_{22}x^2)
+O(|x,y|^5),
\]
is equivalent to
\[
X'=Y,
\]
\[
Y'=X^2+EX^3Y+O(|X,Y|^5),
\]
where $E=a_{31}-a_{30}a_{21}$.
\end{proposition}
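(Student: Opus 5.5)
The plan is to reduce the given system to $\dot X=Y$, $\dot Y=X^2+\cdots$ by a polynomial near-identity change of coordinates combined with a time reparametrization. At each step we remove only the terms of degree at most $4$ that are not resonant for the Bogdanov--Takens normal form, and we track how the surviving degree-$4$ coefficient picks up contributions from the lower-order terms.

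\textbf{Step 1: remove the cubic terms.} The degree-3 part of the $\dot y$-equation is $a_{30}x^3+a_{21}x^2y+a_{12}xy^2$.
\begin{itemize}
\item The term $a_{12}xy^2$ is removed by a change of the form $x\mapsto x+\alpha x^2$, $y\mapsto$ the time-derivative of the new $x$. Equivalently, we use the standard Takens reduction, which shows that modulo $\dot x=y$ the only resonant cubic terms are $x^3$ and $x^2y$.
\item The term $a_{30}x^3$ is removed by a rescaling of the form $x\mapsto x(1+\beta x)$ together with a time change $dt=(1+\gamma x)\,d\tau$. We choose $\beta,\gamma$ proportional to $a_{30}$ so that the coefficient of $X^3$ vanishes while the coefficient of $X^2$ stays equal to $1$.
\item The term $a_{21}x^2y$ we keep for now: it is resonant, but in the stated normal form it must not appear at quadratic order in $X$.
\end{itemize}
Since $x^2y$ has weighted degree $3\cdot 2+3=9$ under the weights $(x,y)\sim(2,3)$, i.e. it is higher order than $X^2$, the natural reading is that $E$ multiplies the first nonvanishing damping term. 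I will therefore check whether the intended statement implicitly has $a_{21}=0$, or whether the $x^2y$ term is absorbed into $O(\cdot)$ by a convention on weighted degree. In either case the computation proceeds the same way for the coefficient of $X^3Y$.

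\textbf{Step 2: compute the degree-4 coefficients.} Next I will compose the transformations of Step 1 and read off the degree-4 terms. The contributions $a_{40}X^4$ and $a_{22}X^2Y^2$ are removed by further near-identity changes $x\mapsto x+\delta x^3$ and a time change $1+\epsilon x^2$; these affect only nonresonant or higher-order terms.

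The coefficient of $X^3Y$ receives two contributions:
\begin{itemize}
\item $a_{31}$ directly;
\item a cross term from the time change $1+\gamma x$ acting on $a_{21}x^2y$. Since $\gamma$ is linear in $a_{30}$, this gives $-a_{30}a_{21}$ after normalizing.
\end{itemize}
Together these yield $E=a_{31}-a_{30}a_{21}$.

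\textbf{Main obstacle.} The hard part is the bookkeeping in Step 2: verifying that the constants $\beta$ and $\gamma$ chosen to kill $x^3$ produce exactly the cross-term coefficient $-a_{21}a_{30}$, and not, for example, $-\tfrac12 a_{21}a_{30}$. I will do this by substituting a general ansatz $X=x+\beta x^2$, $d\tau=(1+\gamma x)dt$, $Y=dX/d\tau$ and matching coefficients order by order up to degree $4$. Since the result is Proposition~5.3 of \cite{lamontagne2008bifurcation}, it can also simply be cited.
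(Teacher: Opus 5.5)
Citing Proposition~5.3 of \cite{lamontagne2008bifurcation}, as you suggest at the end, is exactly what the paper does. The derivation you sketch, however, has a genuine gap: the term $a_{21}x^2y$. The target normal form has no $X^2Y$ term, yet your Step~1 keeps it, and neither of your proposed readings fixes this. If $a_{21}=0$ were intended, the correction $-a_{30}a_{21}$ in $E$ would be vacuous. Moreover, in the paper's own application at $E_{34}$ the corresponding coefficient is $\varpi_{21}=-\tfrac34(4a-b^2)^2-b^2<0$, and $\vartheta=\varpi_{31}-\varpi_{30}\varpi_{21}=-\tfrac14 b^2(4a-b^2)^2$. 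Nor can $X^2Y$ be hidden in the remainder. Its ordinary degree is $3$, and its weighted degree for $(x,y)\sim(2,3)$ is $2\cdot2+3=7$, not $9$ (nine is the weight of $X^3Y$, the term you keep), so it is of lower order than $EX^3Y$. Finally, the transformations you allow cannot remove it. With $x=\phi(X)$, $dt/d\tau=\psi(X)$ and $Y=dX/d\tau$, the new coefficient of $X^2Y$ is $a_{21}\psi(0)\phi'(0)^2$, i.e.\ still $a_{21}$ for near-identity changes. Carried out as planned, your computation therefore ends at $\dot Y=X^2+a_{21}X^2Y+(a_{31}-a_{30}a_{21})X^3Y+O(|X,Y|^5)$, which is not the stated form. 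The term is resonant for the linear part $y\partial_x$, but it becomes removable once the quadratic term $x^2$ is exploited.

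The missing step is a transformation involving $y$, for instance a $y$-dependent time rescaling. After Step~1, put $dt/d\tau=1+cY$ and $W=Y+cY^2$. Then $X'=W$ and $W'=(1+3cY+2c^2Y^2)\bigl(X^2+a_{21}X^2Y+(a_{31}-a_{30}a_{21})X^3Y+\cdots\bigr)$, so the coefficient of $X^2W$ becomes $a_{21}+3c$ and vanishes for $c=-a_{21}/3$. This is the same device as the step $y_3=y_4+\frac{p_{21}}{3p_{20}}y_4^2+\frac{p_{21}^2}{36p_{20}^2}y_4^3$, with its time rescaling, in the paper's codimension-three Bogdanov--Takens reduction. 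Because the $X^3$ term is already gone, this step leaves the coefficient of $X^3W$ unchanged. The $X^2W^2$ term it creates is removed by a quartic change of $x$, which does not touch $X^3W$ either, so $E=a_{31}-a_{30}a_{21}$. (If you remove $X^2Y$ first instead, the cross term appears at that stage, as $3c\,a_{30}=-a_{21}a_{30}$.) There are also some smaller slips. Removing $xy^2$ needs a cubic change, $x=X+\frac{a_{12}}{6}X^3$, $y=(1+\frac{a_{12}}{2}X^2)Y$, because $X=x+\alpha x^2$ with $Y=\dot X$ creates a degree-two term $2\alpha Y^2$. The term $x^2y^2$ likewise needs a quartic change, not a cubic one. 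In your ordering the cross term comes half from the substitution $x^2=X^2-2\beta X^3$ and half from the time change (with $\beta=a_{30}/4$, $\gamma=2\beta$). Attributing it to the time change alone would therefore give exactly the $-\tfrac12a_{30}a_{21}$ you were worried about.
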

The preceding proposition is obtained via a sequence of near-identity coordinate changes and a time rescaling, and it provides a convenient normal form reduction for degenerate equilibria. We now apply this framework to the coalesced equilibrium $E_{34}$. In particular, we observe that along the parameter locus $K=K_0=-\frac{4}{b}$ the nonzero eigenvalue $\lambda_2^{34}=\frac{Kb+4}{bK}$ vanishes, so that $E_{34}$ acquires a double zero eigenvalue. Using normal form theory, we determine the codimension and the type of the resulting singularity.

\begin{theorem}
Let $K_0=-\frac{4}{b}$ and $h_0=\frac{-2b^2+4a}{b^3}$, and assume $0<h<-\frac{1}{b}$. For $d=d_0$ and $K=K_0$, the equilibrium $E_{34}$ possesses a zero eigenvalue with algebraic multiplicity two. If $h\ne h_0$, it can be reduced to a generalized cusp normal form of codimension $2$. If, in addition, $2a<b^2<4a$ and $h=h_0$, then $h_0>0$ and the degeneracy increases to codimension $3$.
\end{theorem}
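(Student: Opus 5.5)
Translate $E_{34}$ to the origin with $d=d_0$, $K=K_0=-\frac{4}{b}$, Taylor expand, reduce to the form of Proposition~\ref{propostion from Reousou 2008}, and read off the codimension from the coefficient $E$.

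\textbf{Linear part.} At $K=K_0$ the first entry $\frac{bK+4}{bK}$ of $J(E_{34})$ vanishes, so
\[
J(E_{34})=\begin{bmatrix}0&-d_0\\0&0\end{bmatrix},
\]
which is a nonzero nilpotent matrix. Hence the zero eigenvalue has algebraic multiplicity two and geometric multiplicity one. For this we need $y_{lr}>0$, that is $x_{lr}(1-x_{lr}/K_0)-h>0$. With $x_{lr}=-2/b$ and $K_0=-4/b$ this reads $-\frac{1}{b}-h>0$, which is exactly the hypothesis $0<h<-\frac1b$.

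\textbf{Reduction to Proposition~\ref{propostion from Reousou 2008}.} Set $x_1=x-x_{lr}$, $y_1=y-y_{lr}$ and $X=x_1$, $Y=-d_0y_1+(\text{quadratic terms of }\dot x_1)$. The system becomes $\dot X=Y$, $\dot Y=F(X,Y)$ up to order four. Since $\dot y=y(p(x)-d)$ with $p'(x_{lr})=0$, the $\dot y$-equation carries no $x_1$ or $x_1y_1$ linear-in-$x_1$ terms. Its leading term is $y_{lr}\frac{p''(x_{lr})}{2}x_1^2$, which is nonzero because $p''(x_{lr})<0$ (a maximum) and $y_{lr}>0$. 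So the $X^2$ coefficient of $F$ is nonzero and the cusp is nondegenerate.

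Next, apply near-identity changes $X\mapsto X+\alpha X^2+\dots$ and a time rescaling. These remove the $XY$ and $Y^2$ terms, make the $X^2$ coefficient equal to $1$, and bring the system to the exact shape of Proposition~\ref{propostion from Reousou 2008}. The generic codimension-two cusp is then decided by the coefficient $a_{11}$ of $XY$ before elimination, equivalently $\partial_x\Tr J$ along the center direction at $E_{34}$. I expect this coefficient to vanish identically at $K=K_0$, since $\Tr J$ on the equilibrium curve has $x$-derivative proportional to $1-2x/K+\dots$. That is why the problem lands in the degenerate setting of the Proposition rather than the ordinary Bogdanov--Takens setting.

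\textbf{Codimension count.} Using the Proposition, the system is equivalent to $Y'=X^2+EX^3Y+O(|X,Y|^5)$ with $E=a_{31}-a_{30}a_{21}$.
1. First check that $a_{21}$ does not vanish generically. I would compute it explicitly and factor it. I expect it to be proportional to $b^3h-4a+2b^2$, or a similar linear expression in $h$, which vanishes exactly at $h=h_0=\frac{4a-2b^2}{b^3}$. So for $h\ne h_0$ we get the generalized cusp of codimension~2, the $y x^2$ term giving the second degeneracy beyond the cusp.
2. When $h=h_0$, compute $E$ from the fourth-order Taylor coefficients and show $E\neq0$, so the codimension is exactly three.
3. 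Check positivity. For $2a<b^2<4a$ with $b<0$ we have $b^3<0$ and $4a-2b^2<0$, so $h_0>0$.
4. Check admissibility. Verify $h_0<-1/b$, equivalently $4a-2b^2>-b^2$ after multiplying by $b^3<0$ (flipping the sign), i.e.\ $4a>b^2$, which holds.

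\textbf{Main obstacle.} The hardest step is the symbolic computation of $a_{21}$, $a_{30}$ and $a_{31}$ after the coordinate changes. One must track the quadratic part of $\dot x_1$, which feeds into $Y$, without error. I would do this in a computer algebra system, substituting $d_0$, $K_0$ and $x_{lr}$ early, and then factor the results to identify $h_0$ and confirm $E\neq0$ on the given parameter range.
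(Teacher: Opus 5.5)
Your set-up matches the paper's: translate $E_{34}$, take $X=x_1$, $Y=\dot x_1$, and invoke Proposition~\ref{propostion from Reousou 2008} at $h=h_0$. Several of your steps are correct: the linear-part argument, the nonvanishing of the $X^2$ coefficient $-\frac{b^3(1+bh)}{(4a-b^2)^2}$, and the checks $h_0>0\iff b^2>2a$ and $h_0<-\frac1b\iff b^2<4a$.

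\textbf{The gap: the codimension count uses the wrong coefficient.} With $G(x)=x(1-\frac{x}{K})-h$ one has exactly
\[
\dot Y=T(x)\,Y+\frac{p'(x)}{p(x)}\,Y^2-G(x)\bigl(p(x)-d\bigr),\qquad T=G'-G\,\frac{p'}{p}+p-d,
\]
so the $X^kY$ coefficients are the Taylor coefficients of $T$ at $x_{lr}$. The value $T(x_{lr})=1+\frac{4}{bK}$ vanishes at $K_0$; that is the double zero. But at $K=K_0$
\[
T'(x_{lr})=\frac{b\,(4a-2b^2-b^3h)}{2(4a-b^2)},
\]
which is the $x_2y_2$ coefficient in \eqref{Cusp step 2} and vanishes exactly at $h=h_0$. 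Your expectation that the $XY$ coefficient vanishes identically on $K=K_0$ confuses the vanishing of the trace with the vanishing of its derivative.

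Moreover, $XY$ is a resonant term of the Bogdanov--Takens normal form. No near-identity change or time rescaling removes it, so for $h\neq h_0$ you cannot bring the system to the shape of the Proposition. Instead, $\dot Y=\alpha X^2+\beta XY+O(|X,Y|^3)$ with $\alpha\beta\neq0$ is already the codimension-two cusp, and that is the entire argument for $h\neq h_0$.

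\textbf{Your $a_{21}$ criterion is also invalid.} You propose that $a_{21}\neq0$ gives codimension two and that $a_{21}=0$ with $E\neq0$ gives codimension three. Once the $XY$ term is absent, however, the $X^2Y$ term is removable. This is exactly what the Proposition accomplishes: $a_{21}$ enters only through $E=a_{31}-a_{30}a_{21}$, so it carries no codimension information. Nor does it vanish at $h_0$. At $K=K_0$ the $X^2Y$ coefficient is
\[
\tfrac12T''(x_{lr})=-\frac{b^4\bigl(3(4a-b^2)(1+bh)+4\bigr)}{4(4a-b^2)^2},
\]
which is strictly negative for $0<h<-\frac1b$; at $h=h_0$ it equals the paper's $\rho_{21}$. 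Followed literally, your plan would classify $E_{34}$ as codimension two at $h=h_0$ as well.

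\textbf{The correct structure} is the paper's. For $h\neq h_0$, $T'(x_{lr})\neq0$ gives the codimension-two cusp. At $h=h_0$ the $XY$ term drops out, and since $p'(x_{lr})=0$ there is no pure $Y^2$ term, so the Proposition applies. After normalizing the $X^2$ coefficient one gets $E=-\frac{b^2(4a-b^2)^2}{4}\neq0$, i.e.\ codimension exactly three.
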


\begin{proof}
In what follows, $\rho_{ij}$ and $\varpi_{ij}$, $i,j\in\mathbb N$, are defined in Appendix~A. Substituting $K=K_0$ into $J(E_{34})$ yields a double zero eigenvalue. We translate $E_{34}$ to the origin via the affine change of variables
\[
x_1=x-x_{lr},\qquad y_1=y-y_{lr},
\]
and compute the Taylor expansion of the resulting vector field in a neighborhood of the origin. Next, applying the local change of variables
\[
x_2=x_1,\qquad y_2=\dot x_1,
\]
system~\eqref{Rescaled Main Model} is converted into
\begin{equation}\label{Cusp step 2}
\begin{cases}
\dot x_2=y_2,\\
\dot y_2=
-\frac{b^3(bh+1)}{(-b^2+4a)^2}x_2^2
+\frac{b(-b^3h-2b^2+4a)}{-2b^2+8a}x_2y_2
+O(|x_2,y_2|^3).
\end{cases}
\end{equation}

For $h\ne h_0$, the normal form~\eqref{Cusp step 2} admits a cusp singularity of codimension~$2$ at the origin, characterized by two hyperbolic sectors separated by two separatrices \cite{Chow_Li_Wang_1994,dumortier1991bifurcations}. The qualitative dynamics of system~\eqref{Rescaled Main Model} are preserved under the coordinate changes applied above; consequently, for the specified parameter values, system~\eqref{Rescaled Main Model} in a neighborhood of $E_{34}$ is locally equivalent to system~\eqref{Cusp step 2} in a neighborhood of the origin. To establish that $E_{34}$ is a cusp singularity of codimension~$3$, we show that our system is $C^\infty$ equivalent to the generalized system introduced in~\cite{dumortier1987generic}:

\begin{equation}\label{cusp 3 standard form}
\begin{cases}
\dot x=y,\\
\dot y=x^2+\vartheta x^3y+O(|x,y|^5).
\end{cases}
\end{equation}
provided $\vartheta$ is a nonzero real coefficient. For $h=h_0$, the coefficient of $x_2y_2$ in (\ref{Cusp step 2}) is equal to zero. Moreover,
\[
h_0>0\quad\Longleftrightarrow\quad b^2>2a,
\]
so the biologically admissible codimension-$3$ case requires $2a<b^2<4a$. Under this condition $h_0<-\frac{1}{b}$, and \eqref{Cusp step 2} becomes of the form:
\begin{equation}\label{eq:cusp-rho-expansion}
\begin{cases}
\dot x_2=y_2,\\
\dot y_2=
\rho_{20}x_2^2+\rho_{30}x_2^3
+\rho_{21}x_2^2y_2+\rho_{12}x_2y_2^2
+\rho_{40}x_2^4+\rho_{31}x_2^3y_2
+\rho_{22}x_2^2y_2^2
+O(|x_2,y_2|^5).
\end{cases}
\end{equation}
The coefficient of $x_2^2$ is normalized by the rescaling $x_3=\rho_{20}x_2,y_3=\rho_{20}y_2$, which in turn gives us:
\begin{equation}\label{eq:cusp-varpi-expansion}
\begin{cases}
\dot x_3=y_3,\\
\dot y_3=x_3^2
+\varpi_{30}x_3^3
+\varpi_{21}x_3^2y_3
+\varpi_{12}x_3y_3^2
+\varpi_{40}x_3^4
+\varpi_{31}x_3^3y_3
+\varpi_{22}x_3^2y_3^2
+O(|x_3,y_3|^5).
\end{cases}
\end{equation}
By proposition (\ref{propostion from Reousou 2008}), $\vartheta=-\frac{(-b^2+4a)^2b^2}{4}\ne0$, and system (\ref{Rescaled Main Model}) is $C^\infty$ equivalent to:
\begin{equation}
\begin{cases}
\dot x_4=y_4,\\
\dot y_4=x_4^2-\frac{(-b^2+4a)^2b^2}{4}x_4^3y_4
+O(|x_4,y_4|^5).
\end{cases}
\end{equation}
\end{proof}

Under the parameter restrictions stated above, $E_{34}$ is a degenerate equilibrium of codimension~$3$. We introduce a three-parameter perturbation and derive its local unfolding to describe the nearby dynamics.

\begin{theorem}\label{BT3 theorem}
Assume $2a<b^2<4a$ and let
\[
\epsilon=(\epsilon_1,\epsilon_2,\epsilon_3)
=(d-d_0,\;K-K_0,\;h-h_0).
\]
Then for $\epsilon\approx0$, system \eqref{Rescaled Main Model} exhibits a codimension 3 Bogdanov--Takens bifurcation near the point $E_{34}$.
\end{theorem}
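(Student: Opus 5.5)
Our plan is to verify that the three-parameter family $\epsilon=(d-d_0,K-K_0,h-h_0)$ is a versal unfolding of the codimension-3 cusp established in the preceding theorem. Concretely, we will bring the perturbed system to the standard unfolding of Dumortier--Roussarie--Sotomayor \cite{dumortier1987generic},
\[
\dot x=y,\qquad \dot y=x^2+\mu_1+y(\mu_2+\mu_3x+\vartheta x^3)+O(|x,y|^5),
\]
with $\vartheta=-\frac{(-b^2+4a)^2b^2}{4}\neq0$. We will then show that $\epsilon\mapsto(\mu_1,\mu_2,\mu_3)$ is a local diffeomorphism near $\epsilon=0$, i.e.\ that its Jacobian determinant at $\epsilon=0$ is nonzero.

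\textbf{Step 1 (reduction).} Fix $\epsilon$ small, translate the unperturbed point $E_{34}$ (at $\epsilon=0$) to the origin by $x_1=x-x_{lr}$, $y_1=y-y_{lr}$, and Taylor expand in $(x_1,y_1)$ with coefficients depending smoothly on $\epsilon$. Introduce $x_2=x_1$, $y_2=\dot x_1$ as before. The system then has the form $\dot x_2=y_2$, $\dot y_2=\sum \rho_{ij}(\epsilon)x_2^iy_2^j$, now including $\epsilon$-dependent constant and linear terms $\rho_{00},\rho_{10},\rho_{01}$. Since the $y$-equation still vanishes identically on $\{y=0\}$, it is simplest to track the dependence of $\rho_{00},\rho_{10},\rho_{01},\rho_{11},\rho_{20}$ to first order in $\epsilon$ and keep the higher coefficients at their $\epsilon=0$ values, which the preceding theorem already controls.

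\textbf{Step 2 (normalisation).} Remove the $y_2^2$ term with a near-identity change $x_3=x_2+\alpha x_2^2$ and $y_3$ adjusted accordingly, together with a time rescaling $d\tau=(1+\beta x_2)dt$. Then translate $x_2\mapsto x_2-\rho_{10}/(2\rho_{20})$ to kill the linear $x$ term, and rescale so the coefficient of $x^2$ is $1$. Finally apply Proposition~\ref{propostion from Reousou 2008} to the resulting degree-4 part. This yields $\mu_1(\epsilon)$ (the constant term), $\mu_2(\epsilon)$ (the coefficient of $y$) and $\mu_3(\epsilon)$ (the coefficient of $xy$), each with $\mu_i(0)=0$. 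The coefficient of $x^3y$ remains $\vartheta+O(\epsilon)$.

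\textbf{Step 3 (transversality) — the main obstacle.} We must show
\[
\det\frac{\partial(\mu_1,\mu_2,\mu_3)}{\partial(\epsilon_1,\epsilon_2,\epsilon_3)}\Big|_{\epsilon=0}\neq0 .
\]
The plan is to compute this determinant symbolically, in terms of $a$ and $b$ only (since $d_0$, $K_0$ and $h_0$ are explicit), and to check that it is a nonvanishing rational expression on the region $2a<b^2<4a$, $b<0$.

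One structural point needs care. Varying $d$ moves the zero of $p(x)-d$, so the equilibrium splits and $\mu_1$ depends mainly on $\epsilon_1$. Varying $K$ and $h$, by contrast, changes the trace, giving $\mu_2$ and $\mu_3$. Because $h$ and $K$ both enter $\dot x$ only through $x(1-x/K)-h$, we expect the $2\times2$ minor $\partial(\mu_2,\mu_3)/\partial(\epsilon_2,\epsilon_3)$ to be the delicate factor. It will be nondegenerate because $\epsilon_2$ enters through $x/K$, which contributes to the $x$-linear part of the trace, whereas $\epsilon_3$ shifts only $y_{lr}$. If the determinant turns out to vanish on some curve inside the admissible region, we will exclude that curve as a nondegeneracy condition.

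Once the determinant is shown to be nonzero, the theorem of \cite{dumortier1987generic} gives that the family is a versal unfolding, and hence that a codimension-3 Bogdanov--Takens bifurcation occurs near $E_{34}$.
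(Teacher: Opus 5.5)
\textbf{Verdict: same route as the paper, but the decisive step is missing.} You reduce the perturbed family to the Dumortier--Roussarie--Sotomayor form and then show that $\epsilon\mapsto(\mu_1,\mu_2,\mu_3)$ has nonzero Jacobian at $\epsilon=0$, exactly as the paper does. The gap is Step~3, which is the entire content of this theorem beyond the preceding cusp result. You only announce it. Your fallback of excluding a curve where the determinant vanishes is not available, because the theorem asserts the bifurcation on all of $2a<b^2<4a$ with no extra hypothesis.

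\textbf{How the paper closes it, and a shorter way.} The paper carries the reduction out explicitly, obtains $\alpha_1,\alpha_2,\alpha_3$ to first order, and gets $\det=-2^{-23/5}(4a-b^2)^{8/5}(-b)^{18/5}\neq0$. You can reach the same conclusion with almost no algebra. Your guess that the $(\epsilon_2,\epsilon_3)$-minor is the delicate factor is off: the Jacobian is lower triangular. Set $\Delta=4a-b^2$, $f(x)=x(1-x/K)-h$, $x_2=x-x_{lr}$, $y_2=\dot x$. Then $\dot y_2=-f(p-d)+(f'+p-d-fp'/p)\,y_2+(p'/p)\,y_2^2$, with $f,p$ evaluated at $x_{lr}+x_2$.
\begin{itemize}
\item Above the diagonal. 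If $\epsilon_1=0$, then $p(x_{lr})=d_0$ and $p'(x_{lr})=0$ kill the constant and $x_2$ terms for all $(K,h)$, so $\partial\mu_1/\partial\epsilon_2=\partial\mu_1/\partial\epsilon_3=0$. The origin is then the double equilibrium, and $\mu_2$ is a nonzero multiple of its trace $1-2x_{lr}/K$. This trace does not contain $h$, so $\partial\mu_2/\partial\epsilon_3=0$.
\item First diagonal entry: a nonzero multiple of $d_0y_{lr}=-\Delta/b^3$, since the constant term is $d_0y_{lr}\epsilon_1$.
\item Second diagonal entry: a nonzero multiple of $\partial_K(1-2x_{lr}/K)|_{K_0}=-b/4$.
\item Third diagonal entry. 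On $\epsilon_1=\epsilon_2=0$ the origin is nilpotent, and the $x_2y_2$ coefficient $-2/K-y_{lr}p''(x_{lr})$ is invariant up to scaling. Its $h$-derivative is $p''(x_{lr})/d_0=-b^4/(2\Delta)$, the same slope that singles out $h_0$ in the previous theorem.
\end{itemize}
All three diagonal entries are nonzero for $b<0<\Delta$, so no exceptional curve arises.

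\textbf{Step~2 also needs repair in three places.}
\begin{itemize}
\item There is no pure $y_2^2$ term: its coefficient is $p'(x_{lr})/p(x_{lr})=0$ for every $\epsilon$. What must be removed is $x_2y_2^2$, which is not admissible in the DRS remainder, and a quadratic change $x_2+\alpha x_2^2$ does not remove it. Writing $p'/p=c_1x_2+c_2x_2^2+\cdots$, use $x_2=\phi(u)$, $y_2=\phi'(u)v$ with $\phi(u)=u+\frac{c_1}{6}u^3+\frac{c_2}{12}u^4$, as the paper does.
\item The translation that kills the linear $x$-term must be the last step. The later $x$-dependent change that removes $x^3$ multiplies the $O(\epsilon_1)$ constant term by $1+2\gamma x+\cdots$. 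This recreates a linear term (the paper's $p_{10}=\Delta\epsilon_1/(2b^2)$), which the normal form does not allow.
\item The normal-form proposition you invoke is a statement at $\epsilon=0$. Applying its transformations to the family generates cross terms in the $y$ and $xy$ coefficients. These enter only the below-diagonal entries of the Jacobian, so they do not affect the triangular argument.
\end{itemize}
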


\begin{proof}
Perturbing $\epsilon$ in a neighborhood of the origin, system~\eqref{Rescaled Main Model} takes the form

\begin{equation}\label{BT system}
\begin{cases}
\dot x=x\left(1-\frac{x}{K_0+\epsilon_2}\right)-p(x)y-h_0-\epsilon_3,\\
\dot y=y(-d_0-\epsilon_1+p(x)).
\end{cases}
\end{equation}

Our objective is to prove that there exists a $C^\infty$ change of variables (a local $C^\infty$ diffeomorphism, depending smoothly on $\epsilon$) that transforms system~\eqref{BT system} into the generalized universal unfolding near a Bogdanov--Takens point:
\begin{equation}
\begin{cases}
\dot x=y,\\
\dot y=\alpha_1+\alpha_2y+\alpha_3xy+x^2-x^3y+\Xi(x,y,\epsilon).
\end{cases}
\end{equation}
provided that $\alpha_1,\alpha_2,$ and $\alpha_3$ are real functions of $\epsilon$,
\[
\Xi(x_i,y_i,\epsilon)
=
O(|x_i,y_i|^5)
+O(\epsilon)
\left(
O(|x_i,y_i|^3)+y_i^2O(|x_i,y_i|)+y_i^2
\right)
+O(\epsilon^2)\left(1+O(|x_i,y_i|)\right),
\]
$i=1,2,3,4,5,6$, and the Jacobian matrix of $(\alpha_1,\alpha_2,\alpha_3)$ with respect to $(\epsilon_1,\epsilon_2,\epsilon_3)$ is nonsingular at $\epsilon=0$, that is,
\[
\left.
\Det\begin{pmatrix}
\dfrac{\partial\alpha_1}{\partial\epsilon_1} &
\dfrac{\partial\alpha_1}{\partial\epsilon_2} &
\dfrac{\partial\alpha_1}{\partial\epsilon_3} \\
\dfrac{\partial\alpha_2}{\partial\epsilon_1} &
\dfrac{\partial\alpha_2}{\partial\epsilon_2} &
\dfrac{\partial\alpha_2}{\partial\epsilon_3} \\
\dfrac{\partial\alpha_3}{\partial\epsilon_1} &
\dfrac{\partial\alpha_3}{\partial\epsilon_2} &
\dfrac{\partial\alpha_3}{\partial\epsilon_3}
\end{pmatrix}
\right|_{\epsilon=0}\ne0.
\]
We follow the reduction procedure of \cite{li2015codimension}. The coefficients for the present system ($m_{ij}$, $n_{ij}$, $p_{ij}$, $q_{ij}$, and $r_{ij}$, $i,j\in\mathbb N$) are listed in Appendix A. We first translate the unperturbed equilibrium $E_{34}$ to the origin. Then, introducing the change of variables
\[
x_1=x,\qquad y_1=\dot x,
\]
system~\eqref{BT system} can be rewritten in the form

\begin{equation}\label{step_1 BT3}
\begin{cases}
\dot x_1=y_1,\\
\dot y_1=
m_{00}+m_{10}x_1+m_{01}y_1
+m_{11}x_1y_1+m_{12}x_1y_1^2
+m_{20}x_1^2+m_{21}x_1^2y_1
+m_{22}x_1^2y_1^2+m_{30}x_1^3\\
\hspace{1cm}
+m_{31}x_1^3y_1+m_{40}x_1^4+\Xi(x_1,y_1,\epsilon).
\end{cases}
\end{equation}

We next aim to remove the nonlinear terms $y_1^2x_1$ and $y_1^2x_1^2$. To this end, we introduce the following near-identity change of variables:
\[
x_1=\phi(x_2),\qquad
y_1=\phi'(x_2)y_2
=\left(1+\frac{m_{12}}{2}x_2^2+\frac{m_{22}}{3}x_2^3\right)y_2,
\]
where
\[
\phi(x_2)=x_2+\frac{m_{12}}{6}x_2^3+\frac{m_{22}}{12}x_2^4.
\]

The system can be rewritten as follows:
\begin{equation}\label{eq:bt-step2}
\begin{cases}
\dot x_2=y_2,\\
\dot y_2=
n_{00}+n_{10}x_2+n_{01}y_2
+n_{11}x_2y_2+n_{20}x_2^2
+n_{21}x_2^2y_2+n_{30}x_2^3
+n_{40}x_2^4+n_{31}x_2^3y_2\\
\hspace{1cm}
+O(\epsilon)O(|x_2,y_2|^5)+\Xi(x_2,y_2,\epsilon).
\end{cases}
\end{equation}
Equivalently, the previous system can be written in differential form as:
\begin{equation}\label{differential form}
y_2\,dy_2=
\left[
\Pi(x_2,y_2)
+n_{20}x_2^2+n_{30}x_2^3+n_{40}x_2^4
+\Xi(x_2,y_2,\epsilon)
\right]dx_2,
\end{equation}
where
\[
\Pi(x_2,y_2)
=
n_{00}+n_{10}x_2
+y_2(n_{01}+n_{11}x_2+n_{21}x_2^2+n_{31}x_2^3).
\]
To drop the $x_2^3$ and $x_2^4$ terms, we solve the differential form:
\[
(n_{20}x_2^2+n_{30}x_2^3+n_{40}x_2^4)\,dx_2
=
n_{20}x_3^2\,dx_3+O(x_3^5)\,dx_3,
\]
and obtain
\[
x_2=\phi(x_3)
=
x_3-\frac{n_{30}}{4n_{20}}x_3^2
+\frac{15n_{30}^2-16n_{20}n_{40}}{80n_{20}^2}x_3^3.
\]
Define $x_3=\phi^{-1}(x_2)$ and $y_3=y_2$. \eqref{differential form} is reformulated as
\[
y_3\,dy_3
=
\left[
\Pi(\phi(x_3),y_3)+\Xi(x_2,y_2,\epsilon)
\right]\phi'(x_3)\,dx_3+n_{20}x_3^2\,dx_3
\]
\[
=
\left[
p_{00}+p_{10}x_3+p_{01}y_3
+p_{11}x_3y_3+p_{20}x_3^2
+p_{21}x_3^2y_3+p_{31}x_3^3y_3
+\Xi(x_3,y_3,\epsilon)
\right]dx_3,
\]
where $m_{10}$ and $n_{10}$ are $O(\epsilon^2)$ and $p_{10}$ is $O(\epsilon)$, and $p_{30}x_3^3$ and $p_{40}x_3^4$ are of $O(\epsilon)$. The transformation can be summarized with the near identity transformation:
\[
x_2=x_3-\frac{n_{30}}{4n_{20}}x_3^2
+\frac{15n_{30}^2-16n_{20}n_{40}}{80n_{20}^2}x_3^3,
\qquad
y_2=y_3,\qquad
\frac{dt}{d\tau}=\phi'(x_3)>0.
\]
Replace $\tau$ by $t$, and the system has the form:
\begin{equation}\label{ste$P_3$ BT3}
\begin{cases}
\dot x_3=y_3,\\
\dot y_3=
p_{00}+p_{10}x_3+p_{01}y_3
+p_{11}x_3y_3+p_{20}x_3^2
+p_{21}x_3^2y_3+p_{31}x_3^3y_3\\
\hspace{1cm}+\Xi(x_3,y_3,\epsilon).
\end{cases}
\end{equation}
To eliminate the $x_3^2y_3$ term, we employ the near identity transformation:
\[
x_3=x_4,\qquad
y_3=y_4+\frac{p_{21}}{3p_{20}}y_4^2
+\frac{p_{21}^2}{36p_{20}^2}y_4^3.
\]
Subsequently, we rescale time once again by setting
\[
\frac{dt}{d\tau}
=
\left(
1+\frac{p_{21}}{3p_{20}}y_4
+\frac{p_{21}^2}{36p_{20}^2}y_4^2
\right)^{-1}
\]
and redenoting $\tau$ by $t$. System (\ref{ste$P_3$ BT3}) becomes:
\begin{equation}\label{BT step 4}
\begin{cases}
\dot x_4=y_4,\\
\dot y_4=
q_{00}+q_{01}y_4+q_{10}x_4+q_{11}x_4y_4
+q_{20}x_4^2+q_{31}x_4^3y_4+\Xi(x_4,y_4,\epsilon).
\end{cases}
\end{equation}
The coefficients of $x_4^2$ and $x_4^3y_4$ are changed to $1$ and $-1$, respectively, by applying the linear change of variables:
\[
x_4=q_{20}^{\frac{1}{5}}(-q_{31})^{-\frac{2}{5}}x_5,\qquad
y_4=q_{20}^{\frac{4}{5}}(-q_{31})^{-\frac{3}{5}}y_5,\qquad
t=q_{20}^{-\frac{3}{5}}(-q_{31})^{\frac{1}{5}}\tau,
\]
and equations \eqref{BT step 4} take the form:
\begin{equation}\label{eq:bt-step5}
\begin{cases}
\dot x_5=y_5,\\
\dot y_5=
r_{00}+r_{01}y_5+r_{10}x_5+r_{11}x_5y_5
+x_5^2-x_5^3y_5+\Xi(x_5,y_5,\epsilon).
\end{cases}
\end{equation}
Finally, applying the near-identity transformation:
$x_5=x_6-\frac{r_{10}}{2}$ and $y_5=y_6$, we discard the $x_5$ term, and system (\ref{Rescaled Main Model}) localized at $E_{34}$ is locally $C^\infty$-orbitally equivalent to
\begin{equation}
\begin{cases}
\dot x_6=y_6,\\
\dot y_6=
\alpha_1+\alpha_2y_6+\alpha_3x_6y_6
+x_6^2-x_6^3y_6+\Xi(x_6,y_6,\epsilon),
\end{cases}
\end{equation}
provided that
\[
\alpha_1
=
\frac{(-b^2+4a)^{\frac{8}{5}}}{2^{\frac{8}{5}}(-b)^{\frac{2}{5}}}\epsilon_1
+O(\epsilon^2),
\]
\[
\alpha_2
=
\frac{2^{\frac{3}{5}}(-b^2+4a)^{\frac{2}{5}}}{8(-b)^{\frac{8}{5}}}
\left(3(-b^2+4a)^2\epsilon_1-b^3\epsilon_2\right)
+O(\epsilon^2),
\]
and
\[
\alpha_3
=
\frac{2^{\frac{2}{5}}}{8(-b)^{\frac{12}{5}}(-b^2+4a)^{\frac{2}{5}}}
\left(
3(-b^2+4a)^3\epsilon_1
-b^3(8a-b^2)\epsilon_2+4b^5\epsilon_3
\right)
+O(\epsilon^2).
\]
Finally, we compute
\[
\Det\!\left(\frac{\partial(\alpha_1,\alpha_2,\alpha_3)}
{\partial(\epsilon_1,\epsilon_2,\epsilon_3)}\right)
=
-2^{-\frac{23}{5}}(-b^2+4a)^{\frac{8}{5}}(-b)^{\frac{18}{5}}
+O(\epsilon),
\]
which is nonzero for sufficiently small $\epsilon$. This finishes the proof.
\end{proof}

The local bifurcation structure is illustrated schematically in Fig. \ref{fig:placeholder}; see \cite{dumortier1987generic,lamontagne2008bifurcation}. Here, $LC$ denotes a limit cycle, while $BT_2^+$ and $BT_2^-$ denote codimension-two Bogdanov--Takens points on two branches of the unfolding. The labels $SN^\pm$, $H^\pm$, and $HL^\pm$ indicate saddle-node, Hopf, and homoclinic-loop branches, respectively, and $DC$ indicates a double-cycle bifurcation. The superscripts distinguish the branches shown in the figure. In this Bogdanov--Takens diagram, $HL$ refers to a homoclinic loop. The point $GH$ is a generalized Hopf point, $HL_2$ is a codimension-two homoclinic-loop point, and $C$ marks a simultaneous Hopf and homoclinic-loop bifurcation.

\begin{figure}[!htbp]
    \centering
    \includegraphics[width=0.8\linewidth]{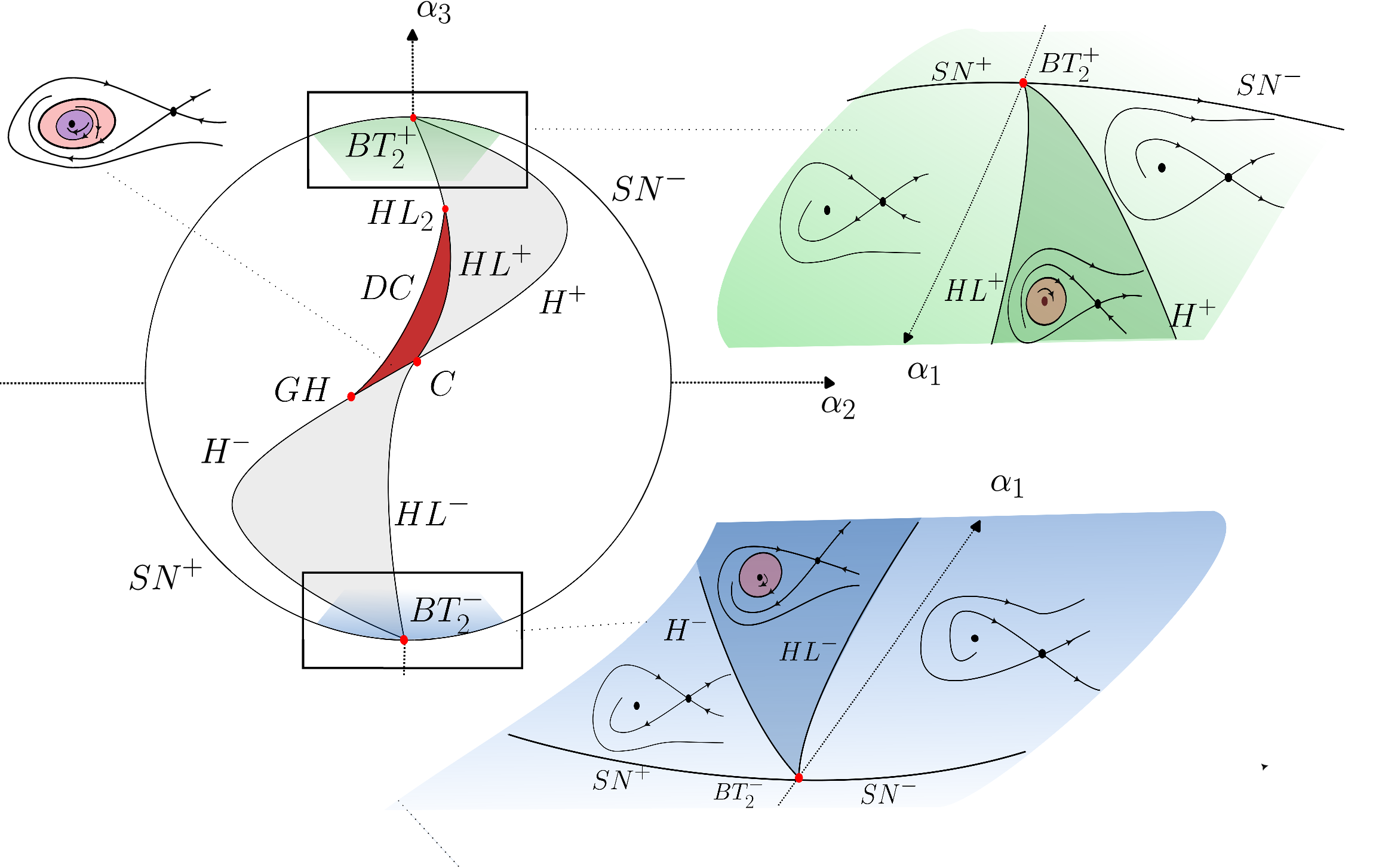}
    \caption{Bifurcation diagram near a cusp singularity of codimension three.
The curves \(SN^\pm\), \(H^\pm\), \(HL^\pm\), and \(DC\) represent
saddle-node, Hopf, homoclinic loop, and double-cycle bifurcations,
respectively. The labels \(BT_2^\pm\), \(GH\), \(HL_2\), and \(C\) denote codimension-two Bogdanov--Takens points, a generalized Hopf point, a codimension-two homoclinic-loop point, and a simultaneous Hopf and homoclinic-loop bifurcation, respectively.}
    \label{fig:placeholder}
\end{figure}

The Bogdanov--Takens unfolding helps us understand how Hopf bifurcations arise near the degenerate equilibrium. As the parameters vary, the two zero eigenvalues may become a pair of complex conjugate eigenvalues. A Hopf bifurcation occurs when this pair crosses the imaginary axis. We can therefore use the unfolding to locate Hopf bifurcations and study their dependence on the model parameters.

We are particularly interested in degenerate Hopf bifurcations, which occur when the first Lyapunov coefficient vanishes. Such points may give rise to multiple limit cycles. In the following section, we investigate these bifurcations near the coexistence equilibrium $E_3$ and determine the conditions under which one or two limit cycles can occur.

\begin{figure}[!htbp]
  \centering
  \begin{subfigure}{0.3\linewidth}
    \centering
    \includegraphics[width=\linewidth]{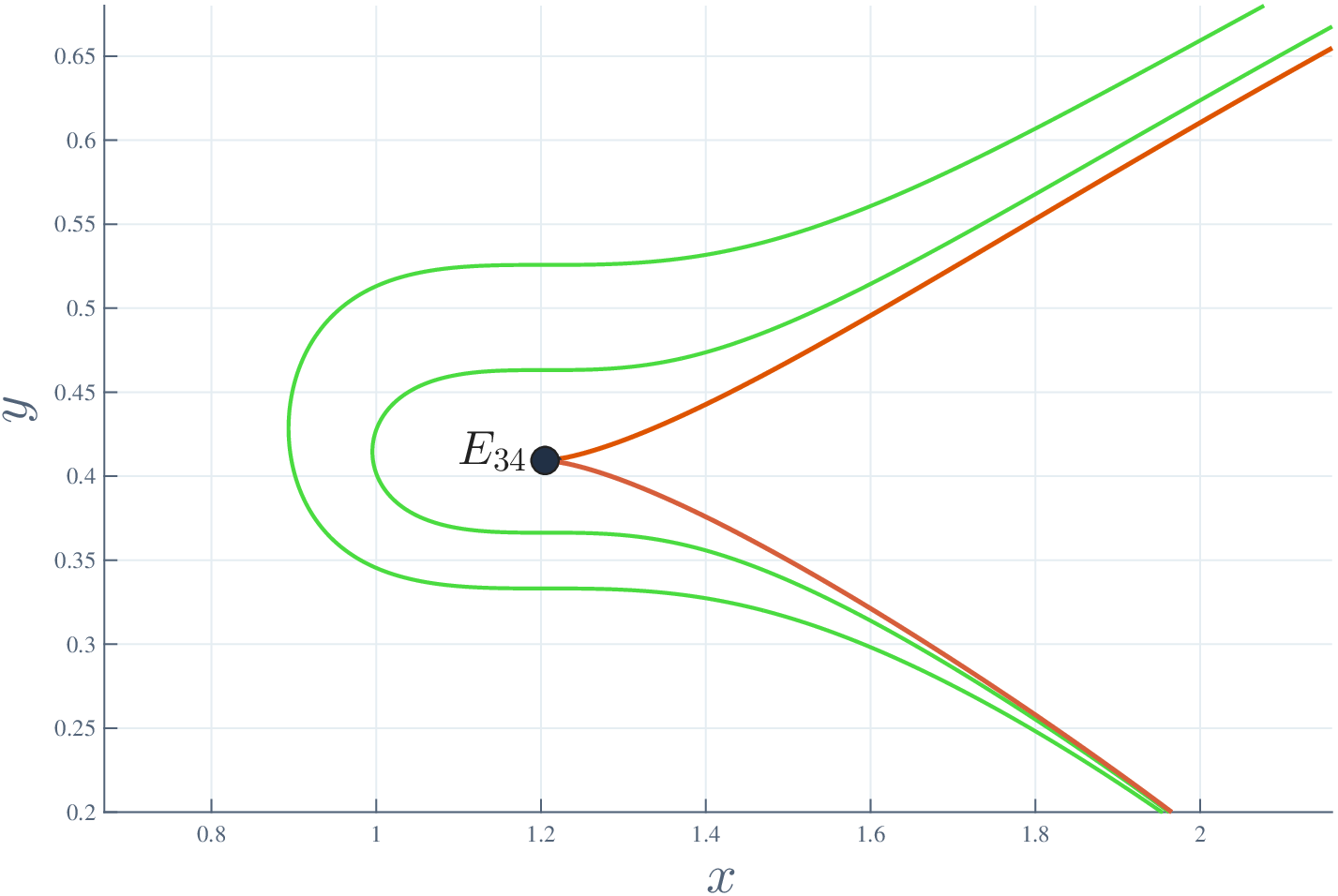}
    \caption{Cusp point}
    \label{fig:Cusp point}
  \end{subfigure}
  \hspace{0.1cm}
  \begin{subfigure}{0.3\linewidth}
    \centering
    \includegraphics[width=\linewidth]{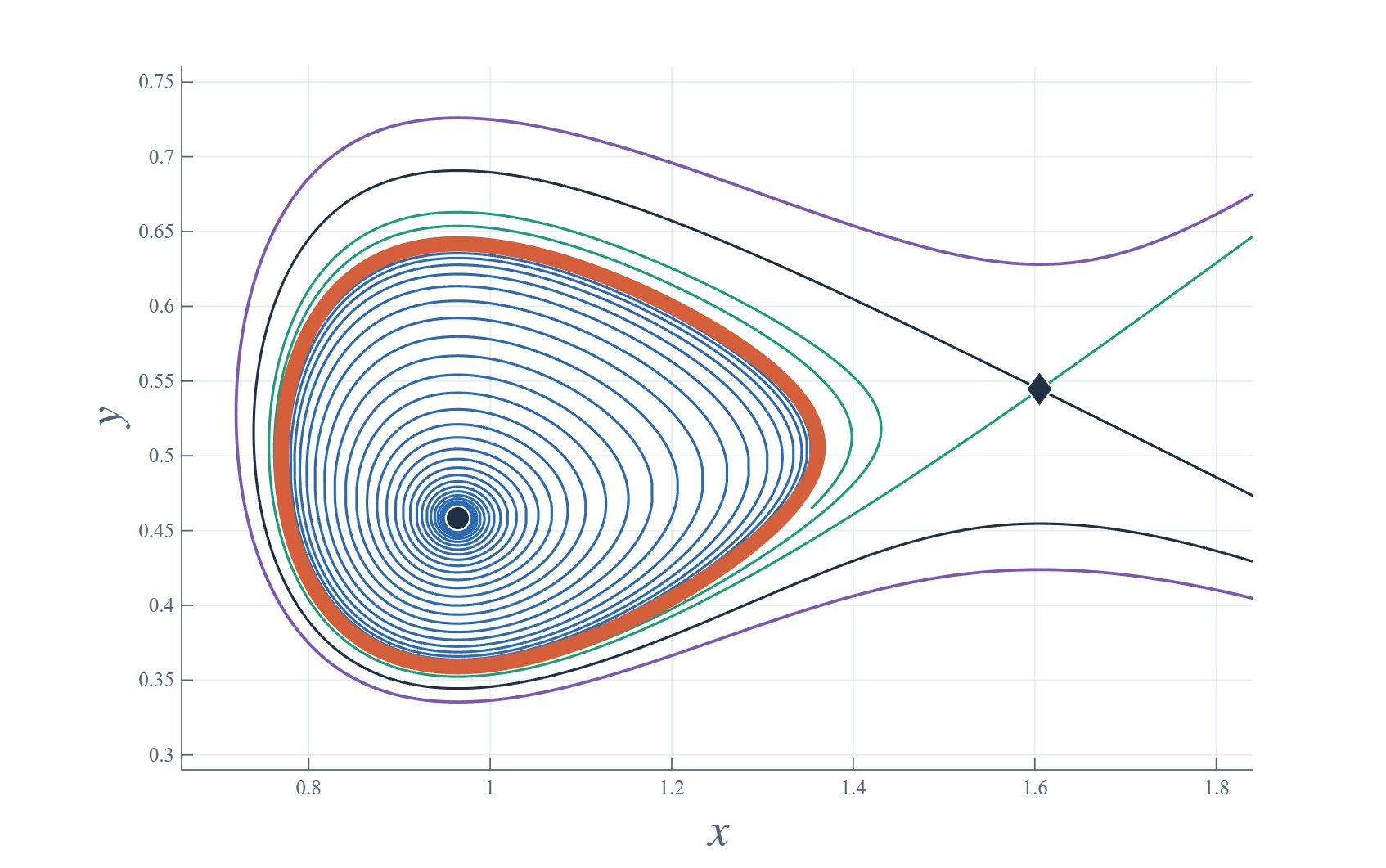}
    \caption{Limit cycle near cusp point}
    \label{fig:limit cycle near cusp point}
  \end{subfigure}
  \hspace{0.1cm}
  \begin{subfigure}{0.3\linewidth}
    \centering
    \includegraphics[width=\linewidth]{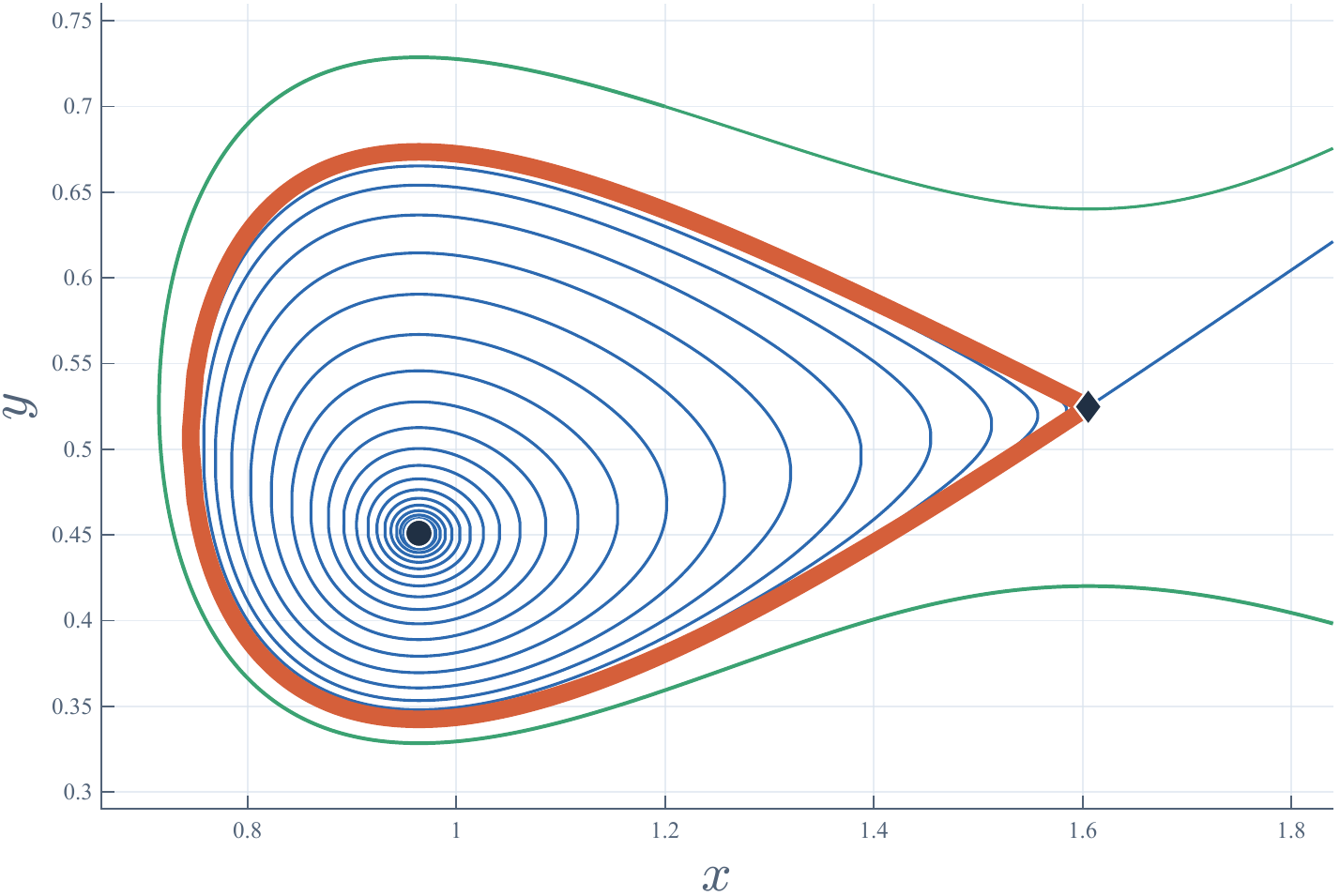}
    \caption{Homoclinic loop}
    \label{fig:Homoclinic loop}
  \end{subfigure}
  \caption{The graphs above illustrate local dynamics near the Bogdanov-Takens bifurcation point with $a=1.373$, $b=-1.660$, $d=d_0=\frac{4}{4a-b^2}$, $K=K_0=-\frac{4}{b}$, and $h=h_0=\frac{4a-2b^2}{b^3}$. At the cusp, $E_3$ and $E_4$ coincide. In the perturbed panels with two interior equilibria, the rightmost point is $E_4$ (saddle type), and the leftmost point is $E_3$ (focus).}
  \label{fig:bt_local_phase_portraits}
\end{figure}

\FloatBarrier
\subsection{Codimension 2 Hopf Bifurcation}
We use the criterion in \cite{xiao2006multiple,lyapunov} to determine the weak-focus multiplicity of $E_3$ and the parameter conditions for one or two nearby limit cycles.

\begin{lemma}\label{Lienard System}
Consider the Lienard system:
\begin{equation}
\begin{cases}
\dot x=\Lambda_1(y)-\Lambda_2(x),\\
\dot y=-\Lambda_3(x),
\end{cases}
\end{equation}
where $\Lambda_1(y),\Lambda_2(x)$, and $\Lambda_3(x)$ are $C^\infty$ smooth functions of their variables in a neighborhood of the origin, and $\Lambda_1(0)=\Lambda_2(0)=\Lambda_3(0)=\Lambda_2'(0)=0$, $\Lambda_1'(0)>0$, $x\Lambda_3(x)>0$ for $x\ne0$, and $\Lambda_3'(0)>0$. Let $I(x)=\int_0^x\Lambda_3(s)\,ds$. If there exists a $C^\infty$ smooth function $\sigma(x)=-x+O(x^2)$ such that $I(\sigma(x))\equiv I(x)$ and
\[
\Lambda_2(\sigma(x))-\Lambda_2(x)
=
\sum_{i=1}^{2k+1}B_i x^i+O(x^{2k+2}),
\qquad k\ge1,
\]
then the equilibrium point $(0,0)$ of the generalized Lienard system is a weak focus of multiplicity $k$ if $B_j=0$, $j=1,2,\ldots,2k$, and $B_{2k+1}\ne0$. It is locally stable (respectively unstable) if $B_{2k+1}<0$ ($B_{2k+1}>0$).
\end{lemma}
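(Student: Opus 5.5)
The plan is to reduce the Liénard system to a one-dimensional comparison of the two half-return maps, using the involution $\sigma$ attached to the energy $I$. By the hypotheses, $I(x)=\tfrac12\Lambda_3'(0)x^2+O(x^3)$ is positive definite, so $I(\sigma(x))\equiv I(x)$ with $\sigma(x)=-x+O(x^2)$ pairs each $x>0$ with the unique point $\sigma(x)<0$ on the same level of $I$. Let $G(y)=\int_0^y\Lambda_1(s)\,ds$; since $\Lambda_1'(0)>0$, $G$ is also positive definite near $0$.

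\textbf{Step 1: Lyapunov-type function.} Set $V(x,y)=I(x)+G(y)$. Along the flow,
\[
\dot V=\Lambda_3(x)\bigl(\Lambda_1(y)-\Lambda_2(x)\bigr)-\Lambda_1(y)\Lambda_3(x)=-\Lambda_3(x)\Lambda_2(x).
\]
The linear part at the origin has trace $-\Lambda_2'(0)=0$ and determinant $\Lambda_1'(0)\Lambda_3'(0)>0$, so the origin is a center-focus. Orbits wind around it and cross the curve $\Lambda_1(y)=\Lambda_2(x)$ (near $y=0$) once on each side.

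\textbf{Step 2: change of variables pairing the half-planes.} Introduce $u=I(x)$, or equivalently $u=\operatorname{sgn}(x)\sqrt{2I(x)}$. Because $x\Lambda_3(x)>0$, each half-plane $x>0$ and $x<0$ maps diffeomorphically onto $u>0$ in the $(u,y)$ variables, and there $dy/du=-1/(\Lambda_1(y)-\Lambda_2(x))$. This gives two first-order equations,
\[
\frac{dy}{du}=\frac{-1}{\Lambda_1(y)-F_\pm(u)},
\]
where $F_+(u)=\Lambda_2(x)$ for $x>0$ and $F_-(u)=\Lambda_2(\sigma(x))$ for the partner point. The Poincaré return map is the composition of the two half-passages, and its displacement $d(y_0)$ is governed by the difference $F_-(u)-F_+(u)$, that is, by $\Lambda_2(\sigma(x))-\Lambda_2(x)=\sum B_ix^i$. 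This is the classical approach of Cherkas and Zhang Zhifen.

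\textbf{Step 3: expansion of the displacement.} Expanding $d$ in powers of the initial amplitude, the standard comparison argument shows the following. If the first nonvanishing coefficient of $\Lambda_2(\sigma(x))-\Lambda_2(x)$ is $B_{2k+1}$, then $d(r)=c\,B_{2k+1}r^{2k+1}+O(r^{2k+2})$ with a positive constant $c$; when $B_{2k+1}<0$, the expansion works along $y$ and the map contracts. The even coefficients vanish automatically once the lower odd ones vanish, since the difference function is odd with respect to $\sigma$. This yields a weak focus of order $k$, stable if $B_{2k+1}<0$ and unstable if $B_{2k+1}>0$.

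The main obstacle is Step 3: I must show rigorously that the leading term of the displacement is exactly proportional to $B_{2k+1}$ with a sign-definite factor, and is not contaminated by cross terms from the $y$-dependence of $\Lambda_1$. I would handle this with a comparison lemma on the two scalar ODEs: if $F_-(u)>F_+(u)$ for small $u>0$, the solutions are ordered, so sign information transfers to the return map. The order is then read from the first nonzero term. In practice I would cite this result as the theorem of \cite{xiao2006multiple} and its Liénard-form antecedents rather than reproducing the full estimate.
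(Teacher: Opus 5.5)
The paper does not prove this lemma; it quotes it as the criterion of \cite{xiao2006multiple,lyapunov}. Your closing decision to cite that result is therefore exactly what the paper does.

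Taken as a self-contained argument, however, Steps 1--3 have a genuine gap, at the point you yourself flag. The comparison principle you propose transfers only the \emph{sign} of $D=\Lambda_2\circ\sigma-\Lambda_2$ to the displacement $d$. It shows the origin is a focus rather than a center and fixes its stability. But $d(r)>0$ for small $r>0$ says nothing about the order to which $d$ vanishes.

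Multiplicity $k$ means precisely $d(r)=V_k r^{2k+1}+O(r^{2k+2})$ with $V_k\neq0$. That is, the first $k-1$ Lyapunov quantities vanish when $B_1=\dots=B_{2k}=0$, and the $k$-th is a sign-definite multiple of $B_{2k+1}$. Your Step 3 asserts this expansion, but that assertion is the entire content of the lemma. ``The order is then read from the first nonzero term'' does not follow from an ordering of solutions.

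To close the gap you need a quantitative ingredient. One option is an explicit relation between the Lyapunov constants and the $B_j$. Such a relation must control the contributions that are nonlinear in $\Lambda_2$ or that come from the $y$-dependence of $\Lambda_1$, which a sign comparison never sees. The other option is two-sided estimates $c_1r^{2k+1}\le|d(r)|\le c_2r^{2k+1}$, obtained from comparison systems whose displacement you can actually compute.

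Your remark that the even coefficients vanish automatically is correct. Since $\sigma$ is an involution, $D\circ\sigma=-D$, which forces the first nonzero $B_j$ to have odd index.

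There is also a technical flaw in Step 2. The equation $dy/du=-1/(\Lambda_1(y)-F_\pm(u))$ is singular exactly where each half-orbit crosses the vertical isocline $\Lambda_1(y)=\Lambda_2(x)$. There $u=I(x)$ reaches its maximum and $y$ becomes a two-valued function of $u$, so a half-passage is not a solution of either scalar equation.

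The comparison should instead be set up in Filippov's form $du/dy=F_\pm(u)-\Lambda_1(y)$. On each half-passage $y$ is monotone, because $\dot y=-\Lambda_3(x)\neq0$ for $x\neq0$. In this form $F_\pm$ is $C^1$ up to $u=0$, since $F_\pm'(u)=\Lambda_2'(x)/\Lambda_3(x)\to\Lambda_2''(0)/\Lambda_3'(0)$. That is the Lipschitz property the comparison lemma needs.

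Finally, $u=\operatorname{sgn}(x)\sqrt{2I(x)}$ is not interchangeable with $u=I(x)$ here. It sends the two half-planes to opposite sides of $u=0$ rather than folding both onto $u>0$.
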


\begin{theorem}
For system~\eqref{Rescaled Main Model}, the multiplicity of the weak focus $E_3$ is at most two in the biologically admissible region $h>0$. Consequently, at most two small-amplitude limit cycles can bifurcate locally from $E_3$. If
\[
B_1=B_2=B_3=B_4=0,\qquad B_5\ne0,
\]
and
\[
\operatorname{rank}
\frac{\partial(B_1,B_3)}{\partial(K,h)}=2,
\]
then $E_3$ is a codimension-$2$ generalized Hopf (Bautin) point.
\end{theorem}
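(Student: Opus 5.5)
The plan is to bring the system near $E_3$ into the Liénard form of Lemma~\ref{Lienard System}, compute the coefficients $B_i$, and show that $B_5$ cannot vanish once $B_1=\dots=B_4=0$ and $h>0$.

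\textbf{Step 1: the weak-focus locus.} Since $E_3$ is an antisaddle by Theorem~\ref{E_3 antisaddle, E_4 saddle}, it is a weak focus exactly when $\Tr J(E_3)=0$. At $E_3$ we have $p(x_\ell)=d$ and $y_\ell=(x_\ell(1-x_\ell/K)-h)/d$. The trace condition is therefore
\[
1-\frac{2x_\ell}{K}-y_\ell p'(x_\ell)=0 .
\]
I would use this to eliminate one parameter, say $h$, as an explicit function of $(x_\ell,K,a,b)$. It is convenient to parametrize by $x_\ell$ and set $d=p(x_\ell)$, so that $a,b,K,x_\ell$ are free.

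\textbf{Step 2: reduction to Liénard form.} Translate $E_3$ to the origin. Then apply the standard Kolmogorov-type substitution: rescale time by $dt=\psi(x)\,d\tau$ with $\psi(x)=(ax^2+bx+1)/x^2$, and set $y=e^{v}$ or use a similar logarithmic predator coordinate. This produces a system of the form
\[
\dot u=\Lambda_1(v)-\Lambda_2(u),\qquad \dot v=-\Lambda_3(u).
\]
Here $\Lambda_3(u)$ is proportional to $(d-p(u+x_\ell))/p(u+x_\ell)$ (up to sign), and $\Lambda_2$ collects the prey growth and harvesting divided by $p$. The hypotheses of the lemma are checked as follows:
\begin{itemize}
\item $\Lambda_3'(0)>0$ follows from $p'(x_\ell)>0$.
\item $\Lambda_2'(0)=0$ is exactly the trace condition from Step~1.
\end{itemize}

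\textbf{Step 3: the coefficients $B_i$.} Expand $\Lambda_2$ and $\Lambda_3$ in Taylor series to order six. Then solve $I(\sigma(x))=I(x)$ recursively for $\sigma(x)=-x+\sum_{k\ge2}c_kx^k$. Each $c_k$ is determined by the coefficients of $\Lambda_3$; for example, $c_2=-\tfrac{2}{3}\Lambda_3''(0)/\Lambda_3'(0)$ up to normalization. Then compute $\Lambda_2(\sigma(x))-\Lambda_2(x)=\sum B_ix^i$.

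The even-indexed coefficients are automatically dependent on the odd ones: $B_{2j}=0$ follows once $B_1=\dots=B_{2j-1}=0$. So only $B_1$, $B_3$ and $B_5$ matter. Moreover $B_1$ is a multiple of $\Lambda_2'(0)$, so it is proportional to the trace.

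\textbf{Step 4: bounding the multiplicity (main obstacle).} Solve $B_1=B_3=0$ for $(K,h)$ in terms of $(x_\ell,a,b)$. Substitute into $B_5$ and show that the resulting expression has no zeros compatible with $h>0$, $-2\sqrt a<b<0$, $x_\ell<-2/b$ and $y_\ell>0$.

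This is the hard step, because the expressions are large rational functions. I would first try to factor $B_5$ on the locus $B_1=B_3=0$ into a positive factor times a low-degree polynomial. The aim is then to show that this polynomial's zero set forces $h\le0$; for instance, its sign may be fixed by $bx_\ell+2>0$ together with $1+bx_\ell+ax_\ell^2>0$. If direct factoring fails, I would compute a resultant of $B_5$ and $h$ with respect to $K$ and analyze its sign. Either way, this gives multiplicity at most two, hence at most two small-amplitude limit cycles by Lemma~\ref{Lienard System} and the standard Bautin argument.

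\textbf{Step 5: the Bautin point.} Suppose $B_1=B_2=B_3=B_4=0$, $B_5\ne0$, and $\operatorname{rank}\partial(B_1,B_3)/\partial(K,h)=2$. Then $E_3$ is a weak focus of order exactly two, and the map $(K,h)\mapsto(B_1,B_3)$ is a local diffeomorphism. So $B_1$ and $B_3$ can be perturbed independently with the signs alternating against $B_5$, and two limit cycles bifurcate. This is the versal-unfolding condition for a codimension-two generalized Hopf point.
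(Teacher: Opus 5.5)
Your outline follows the paper's proof step by step. The paper also rescales time by $d\tau=p(x)\,dt$ and sets $x=x_\ell-u$, $y=y_\ell e^{v}$. (With your $x=x_\ell+u$ you would need $y=y_\ell e^{-v}$ to get $\Lambda_1'(0)>0$ and $u\Lambda_3(u)>0$.) It then obtains $B_1$, which is $-2\Lambda_2'(0)$ and vanishes at $h=h_H$, with $B_2\propto B_1$. Next, $B_3$ vanishes at $K=K_{\mathrm{GH}}$, with $B_4\propto B_3$, exactly as you predict.

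The gap is Step~4. It is the entire content of the multiplicity bound, and you leave it as a plan. Moreover, the mechanism you suggest there cannot work. In the paper, $B_5=0$ on $\{B_1=B_3=0\}$ reduces to $\mathcal F(a,b,d)=0$. Writing $s=bx_\ell\in(-2,0)$ and $\alpha=ax_\ell^2$, this becomes $(s+2)(s+9)\alpha^2-2(2s+9)\alpha+s(4s+3)=0$, which has admissible roots. For example, $s=-1$, $\alpha=(7+\sqrt{41})/8$ satisfies $b^2<4a$ and gives $K_{\mathrm{GH}}\approx 13.4\,x_\ell>0$. So the residual factor does vanish for admissible $(a,b,x_\ell,K)$. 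No fixed-sign argument based on $bx_\ell+2>0$ and $1+bx_\ell+ax_\ell^2>0$ can exclude a third-order weak focus.

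The missing idea is that the obstruction comes only from the harvesting rate. Substitute the root $a=a_{H_3}$ of $\mathcal F=0$ into $K_{\mathrm{GH}}$ and then into $h_H$. The harvesting rate on this sublocus collapses to
\[
h_{H_3}=\frac{(bx_\ell-3)x_\ell}{(bx_\ell+9)(bx_\ell+2)},
\]
which is negative because $-2<bx_\ell<0$; in the example above, $h_{H_3}=-x_\ell/2$. Equivalently, on $B_3=0$ one has $h_H=h_{H_3}$ if and only if $\mathcal F=0$, so $B_5=0$ is incompatible with $h>0$.

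Your resultant idea, eliminating $K$ between $B_5$ and the expression for $h$, is the right kind of computation to uncover this. It still has to be carried out for the theorem to be proved. Step~5 is fine as stated, since the rank condition is a hypothesis. The paper additionally shows that this condition holds automatically by computing the Jacobian determinant of $(B_1,B_3)$ with respect to $(K,h)$.
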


\begin{proof}
We begin by performing a time reparametrization of system~\eqref{Rescaled Main Model}. Define
\[
\tau=\int_0^t p(x(s))\,ds,
\]
so that $\frac{d\tau}{dt}=p(x(t))$ and, whenever $p(x)>0$, differentiation with respect to $\tau$ is well defined. Writing derivatives in the new time variable yields the equivalent system
\begin{equation}\label{eq:tau_system}
\begin{cases}
\frac{dx}{d\tau}
=
\frac{x(1-\frac{x}{K})-h}{p(x)}-y,\\[0.8em]
\frac{dy}{d\tau}
=
y\left(1-\frac{d}{p(x)}\right).
\end{cases}
\end{equation}
Next, we translate the phase variables about a reference point $(x_\ell,y_\ell)$ by introducing
\[
x=x_\ell-u,\qquad y=\mathcal T(v)+y_\ell,
\]
where $\mathcal T$ is chosen so that the $y$-translation is compatible with an exponential reparametrization. Specifically, let $\mathcal T(v)$ solve the initial value problem
\[
\frac{d\mathcal T(v)}{dv}=\mathcal T(v)+y_\ell,\qquad \mathcal T(0)=0.
\]
This linear ODE has the explicit solution $\mathcal T(v)=y_\ell(e^v-1)$, and hence $y=y_\ell e^v$. With these transformations, system~\eqref{eq:tau_system} admits the normal form representation
\begin{equation}\label{eq:uv_system}
\begin{cases}
\dot u=\Lambda_1(v)-\Lambda_2(u),\\
\dot v=-\Lambda_3(u),
\end{cases}
\end{equation}
where the nonlinearities are given explicitly by
\[
\Lambda_1(v)=y_\ell(e^v-1),
\qquad
\Lambda_2(u)
=
\left.
\frac{(x(1-\frac{x}{K})-h)(ax^2+bx+1)}{x^2}
\right|_{x=x_\ell-u}
-y_\ell,
\]
and
\[
\Lambda_3(u)
=
-\frac{u\left(-bx_\ell^2+(bu-2)x_\ell+u\right)}
{(ax_\ell^2+bx_\ell+1)(u-x_\ell)^2}.
\]
To facilitate subsequent analysis, define the primitive
\[
I(u)=\int_0^u\Lambda_3(s)\,ds.
\]
We seek a local function $\sigma(u)$ satisfying
\[
\sigma(u)=-u+O(u^2)
\]
and determined implicitly by the functional identity
\[
I(\sigma(u))-I(u)\equiv0.
\]
The coefficients of $\sigma(u)$ of order at least two are obtained by expanding this identity in powers of $u$ and matching coefficients. In particular, using Maple we compute the formal series representation
\[
\sigma(u)=-u+\sum_{i=2}^{\infty}\gamma_i u^i,
\]
where the coefficients $\gamma_i\in\mathbb R$ are uniquely determined by the condition $I(\sigma(u))\equiv I(u)$ in a neighborhood of $u=0$ and are given by
\begin{align*}
\gamma_2&=-\frac{2(bx_\ell+3)}{3x_\ell(bx_\ell+2)},&
\gamma_3&=-\frac{4(b^2x_\ell^2+6bx_\ell+9)}{9x_\ell^2(bx_\ell+2)^2},\\
\gamma_4&=-\frac{4(11b^3x_\ell^3+99b^2x_\ell^2+279bx_\ell+270)}
{135x_\ell^3(bx_\ell+2)^3},&
\gamma_5&=-\frac{8(13b^4x_\ell^4+156b^3x_\ell^3+648b^2x_\ell^2+1161bx_\ell+810)}
{405x_\ell^4(bx_\ell+2)^4}.
\end{align*}
Building on Lemma \ref{Lienard System}, we compute $B_1$ and $B_2$ as follows:
\begin{align}
B_1&=
\frac{-4ax_\ell^4+(2aK-2b)x_\ell^3+(2bh-2)Kx_\ell+4hK}
{Kx_\ell^3},\qquad
B_2=\frac{(bx_\ell+3)B_1}{3x_\ell(bx_\ell+2)}.
\end{align}
These coefficients vanish simultaneously when
\[
h=h_H=
-\frac{x_\ell\left(-2ax_\ell^3+(aK-b)x_\ell^2-K\right)}
{K(bx_\ell+2)}.
\]
On the locus $B_1=0$, we obtain the next pair of coefficients $B_3$ and $B_4$ as
\begin{align}
B_3&=
\frac{
-2b(abK-b^2-2a)x_\ell^4
-12b(aK-b)x_\ell^3
+2(-6aK+6b)x_\ell^2
+8bKx_\ell+12K}
{3K(bx_\ell+2)^2x_\ell^4},\qquad
B_4=\frac{(bx_\ell+3)B_3}{x_\ell(bx_\ell+2)}.
\end{align}
These vanish precisely when
\[
K=K_{\mathrm{GH}}=
\frac{bx_\ell^2(b^2x_\ell^2+2ax_\ell^2+6bx_\ell+6)}
{ab^2x_\ell^4+6abx_\ell^3+6ax_\ell^2-4bx_\ell-6}.
\]
After imposing $B_1=B_3=0$, we find that the higher-order coefficients $B_5$ and $B_6$ vanish at $(a,b,d)=(a^\ast,b^\ast,d^\ast)$ only if the following constraint holds:
\begin{equation}\label{eq:F_constraint}
\mathcal F(a,b,d)
:=
a^2b^2x_\ell^5+11a^2bx_\ell^4+18a^2x_\ell^3
-4abx_\ell^2+(4b^2-18a)x_\ell+3b
=0.
\end{equation}

Equivalently $a$ satisfies
\[
a=a_{H_3}
=
\frac{
2bx_\ell\pm\sqrt{
-4b^4x_\ell^4-47b^3x_\ell^3-101b^2x_\ell^2-18bx_\ell+81
}+9}
{x_\ell^2(b^2x_\ell^2+11bx_\ell+18)}.
\]

Substituting $a_{H_3}$ into $K_{\mathrm{GH}}$ and then into $h_H$ gives the candidate harvesting value
\[
h_{H_3}=\frac{(bx_\ell-3)x_\ell}{(bx_\ell+9)(bx_\ell+2)}.
\]

Since $b<0$ and $x_\ell<-\frac{2}{b}$, $h_{H_3}<0$, and hence $B_3=0$ and $B_5=0$ cannot hold simultaneously on the weak-focus locus $B_1=0$ in the biologically admissible region $h>0$. Therefore a weak focus in this region has multiplicity at most two.

To verify the codimension-$2$ nondegeneracy condition, we take $K$ and $h$ as the unfolding parameters, holding $a$, $b$, and $d$ fixed. Since $p(x_\ell)=d$, the value of $x_\ell$ is independent of $K$ and $h$. Direct calculation at $B_1=B_3=0$ gives
\[
\operatorname{rank}
\frac{\partial(B_1,B_3)}{\partial(K,h)}=2,
\]
because
\[
\Det\!\left(\frac{\partial(B_1,B_3)}{\partial(K,h)}\right)
=
\frac{4b(b^2x_\ell^2+2ax_\ell^2+6bx_\ell+6)}
{3K^2x_\ell^5(bx_\ell+2)}\ne0.
\]
Indeed, $b<0$, $x_\ell>0$, and $x_\ell<-\frac{2}{b}$ imply $bx_\ell+2>0$. Moreover, $b^2<4a$ yields
\[
b^2x_\ell^2+2ax_\ell^2+6bx_\ell+6
>
\frac{3}{2}(bx_\ell+2)^2>0,
\]
which establishes the stated rank condition.
Therefore, whenever
\[
B_1=B_2=B_3=B_4=0,\qquad B_5\ne0,
\]
the equilibrium $E_3$ is a codimension-$2$ generalized Hopf (Bautin) point. This finishes the proof.
\end{proof}

Having established that the simultaneous higher-order degeneracy conditions defining \eqref{eq:F_constraint} admit no biologically admissible solution with $h>0$, and hence that at most two small-amplitude limit cycles can bifurcate locally from $E_3$ (Fig. \ref{fig:double limit cycle}), we now shift our attention to global bifurcation phenomena associated with the coalescence of the predator-free equilibria with either $E_3$ or $E_4$. In particular, we study the non-hyperbolic structures that arise at the merging parameter values and determine the local type of the resulting degenerate equilibrium by examining its Jacobian, higher-order terms, and the corresponding normal-form coefficients. This analysis provides the necessary local description of the phase portrait near the coalesced equilibrium and supplies the unfolding framework needed to detect the global objects generated by the degeneracy.
\begin{figure}[!htbp]
    \centering
    \includegraphics[width=0.6\linewidth]{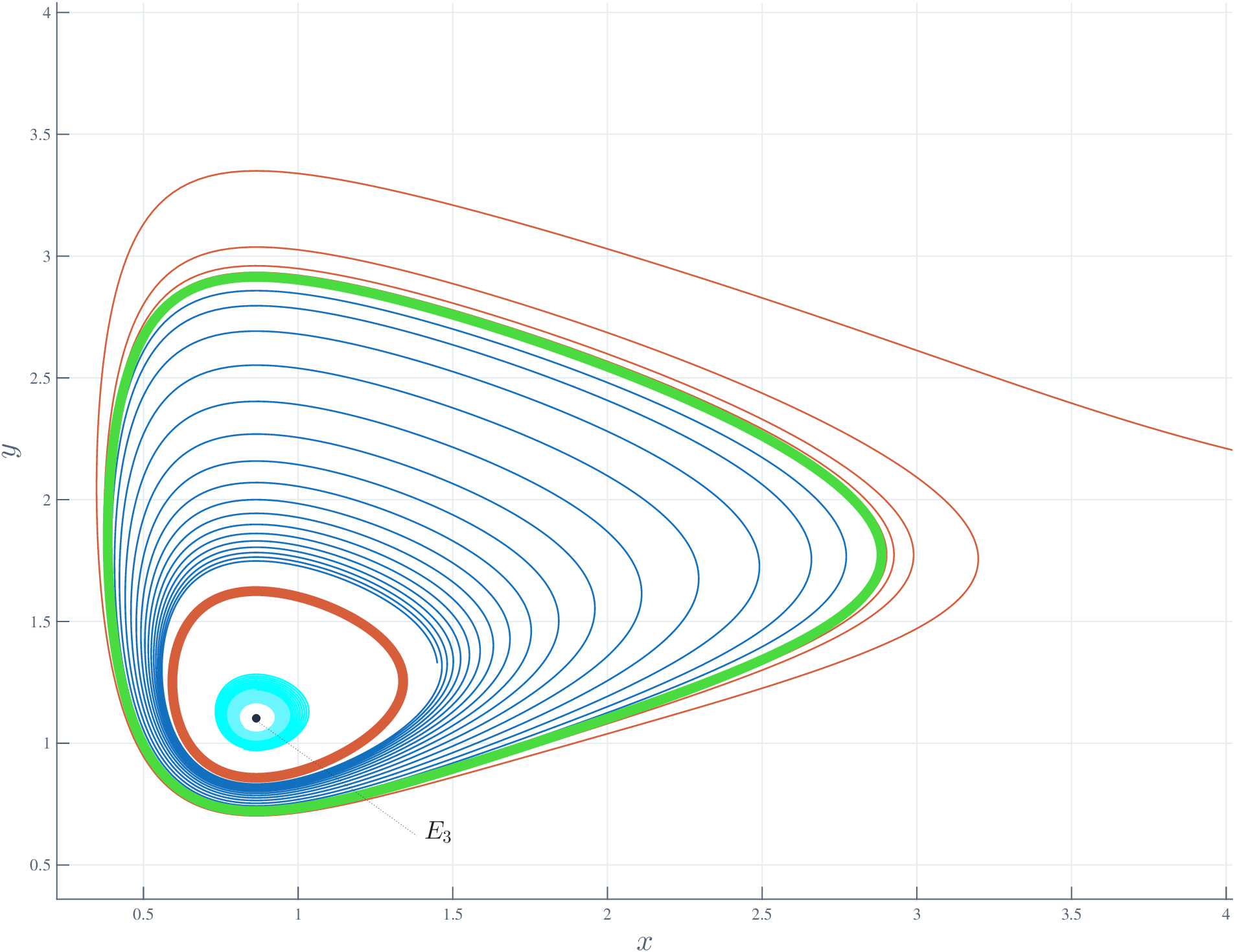}
    \caption{Two nested limit cycles at $a=1.6;\ b=-1.23;\ d=0.66;\ h=0.001;\ K=5.5;$}
    \label{fig:double limit cycle}
\end{figure}

\FloatBarrier
\subsection{Codimension 3 Nilpotent Saddle Bifurcation}

\begin{theorem}\label{Nilpotent Saddle}
Assume $bh>-\frac{3}{2}$, and $bh\ne-1$. For $K=K_2=4h$ and $d=p(2h)=\frac{4h^2}{4ah^2+2bh+1}$, the predator-free equilibria $E_1$ and $E_2$ coalesce with exactly one of the interior equilibria $E_3$ or $E_4$, yielding a triple non-hyperbolic equilibrium. In local coordinates centered at the coalesced equilibrium, the vector field admits the following normal-form expansion:
\begin{equation}\label{Target form NS}
\begin{cases}
\dot x=y+\kappa_1x^2,\\
\dot y=y(x+\kappa_2x^2+\kappa_3x^3+\kappa_4x^4+O(x^5)).
\end{cases}
\end{equation}
where $\kappa_i\in\mathbb R$ are defined below. For $-1<bh<0$, $E_3$ merges with $E_1$ and $E_2$ to form a nilpotent saddle singularity. For $-\frac{3}{2}<bh<-1$, $E_4$ merges with $E_1$ and $E_2$ to form a nilpotent elliptic singularity. Moreover, for
\[
a=a_2=\frac{2b^2h^2+4bh+3}{4h^2(2bh+3)},
\]
the coefficient $\kappa_2$ vanishes while $\kappa_3\ne0$ and $\kappa_4\ne0$; in the nilpotent-saddle regime $-1<bh<0$, the resulting nilpotent saddle is of codimension~$3$.
\end{theorem}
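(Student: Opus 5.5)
The plan is to work directly at the triple point, centre coordinates there, and reduce to \eqref{Target form NS} by a normal-form procedure that keeps the invariant line $y=0$, as in \cite{etoua2010bifurcation}.

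\textbf{Coalescence and which interior equilibrium merges.} Setting $K=4h$ forces the discriminant $K^2-4hK$ to vanish, so $x_{b_1}=x_{b_2}=2h$. The choice $d=p(2h)$ makes $x=2h$ a root of $p(x)=d$, i.e.\ $2h\in\{x_\ell,x_r\}$. Since $p$ increases on $(0,-2/b)$ and decreases afterwards, the merging point is $x_\ell$ exactly when $2h<-2/b$. Because $b<0$, this is equivalent to $bh>-1$. When $-\tfrac32<bh<-1$ the merging point is $x_r$ instead. In both cases $y_\ell=y_r=0$, because $x(1-x/4h)-h=-(x-2h)^2/(4h)$ vanishes at $2h$.

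\textbf{Nilpotent linear part.} At $(2h,0)$ the Jacobian displayed earlier becomes $\begin{bmatrix}0&-p(2h)\\0&0\end{bmatrix}$. This is nilpotent and nonzero, and the triple multiplicity comes from the double root of the harvesting parabola together with the interior root.

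\textbf{Reduction to \eqref{Target form NS}.} First rescale time by $d\tau=p(x)\,dt$, exactly as in the proof of the generalized Hopf theorem. With $X=x-2h$ this gives
\[
X'=-\frac{X^2}{4h\,p(2h+X)}-y,\qquad y'=y\Bigl(1-\frac{p(2h)}{p(2h+X)}\Bigr).
\]
The system now has the shape $X'=-y+F(X)$, $y'=yG(X)$, with $F=O(X^2)$ and $G(X)=\frac{p'(2h)}{p(2h)}X+O(X^2)$.

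Next apply a change $X=\psi(u)=u+\dots$ in the $x$-direction only, together with a time rescaling by $\psi'(u)$. Then $u'=-y/\psi'(u)\cdot(\psi')+F(\psi)/\psi'$ becomes $u'=-y+F(\psi(u))/\psi'(u)$. I would choose $\psi$ order by order so that $F(\psi(u))/\psi'(u)=F_2u^2$ exactly, up to the order needed. This is the same differential-form trick used in the proof of Theorem~\ref{BT3 theorem} to kill the $x^3$ and $x^4$ terms, and it leaves the $y'$-equation of the form $y\,\tilde G(u)$, so the line $y=0$ stays invariant. Finally the linear scalings $y\mapsto -\mu y$ and $u\mapsto \nu u$, plus a constant time scaling, normalize the $uy$ coefficient to $1$. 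This produces $\kappa_1,\dots,\kappa_4$ as explicit rational functions of $(a,b,h)$. The proof would record these in the form used for the other coefficients in Appendix~A.

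\textbf{Saddle versus elliptic.} The sign of $\kappa_1$ (equivalently of $F_2\,p(2h)/p'(2h)$ after normalization) decides the type. This reduces to the sign of $p'(2h)$, i.e.\ of $2h(2bh+2)$, so the type is governed by the sign of $bh+1$. Invoking the classification of nilpotent singularities (Andreev's theorem, \cite{zhang1992qualitative,dumortier1987generic}) gives a saddle for $-1<bh<0$, i.e.\ when $E_3$ merges, and an elliptic point for $-\tfrac32<bh<-1$, i.e.\ when $E_4$ merges. Here $bh\neq-1$ excludes the cusp case $p'(2h)=0$. I would check that $bh>-\tfrac32$ is exactly what keeps $d=p(2h)$ and the quantities below admissible: it keeps $2bh+3>0$, so $a_2$ is well defined and positive, and $4a-b^2>0$ is compatible.

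\textbf{Codimension three.} Solve $\kappa_2=0$ for $a$; I expect this to be linear in $a$ and to yield $a=a_2$. Substituting $a=a_2$, I would factor $\kappa_3$ and $\kappa_4$ as rational functions of $bh$ alone (after scaling out $h$) and show that their numerators have no roots in $(-1,0)$. By the Etoua--Rousseau classification of nilpotent saddles with an invariant line, $\kappa_2=0$ with $\kappa_3\ne0$ then gives codimension exactly three.

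The main obstacle is the order-by-order computation of $\psi$ up to degree four and the resulting explicit $\kappa_3,\kappa_4$. In particular, the claim that $\kappa_3$ and $\kappa_4$ do not vanish on $a=a_2$ requires a sign analysis of a polynomial in $s=bh\in(-1,0)$. I would first substitute $s=bh$, clear the positive denominators $(2s+3)$, $(s+1)$ and powers of $h$, and then try to show that the remaining numerator is of one sign via a Sturm-sequence or interval argument.
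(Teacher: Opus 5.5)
Your route is the paper's: rescale time by $p(x)$, centre at $(2h,0)$, make an $x$-only change with a compensating time rescaling so the first equation becomes exactly quadratic, normalize linearly, read the type off $\operatorname{sign}\kappa_1=\operatorname{sign}(-bh-1)$ via Andreev, and solve the equation $\kappa_2=0$, which is linear in $a$, to get $a_2$. However, the central reduction step fails as you wrote it. If $X=\psi(u)$ and you rescale time by $\psi'(u)$, the \emph{whole} field is multiplied by $\psi'(u)$. This gives $u'=-y+F(\psi(u))$ and $y'=y\,\psi'(u)\,G(\psi(u))$; you multiplied only the $y$-term. The condition you then impose, $F(\psi(u))=F_2u^2\psi'(u)$, has no solution. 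At order $u^3$ it reads $2F_2\psi_2+F_3=2F_2\psi_2$, forcing $F_3=0$, whereas $F_3=p'(2h)/(4h\,p(2h)^2)\neq0$ for $bh\neq-1$. The cubic coefficient of a one-dimensional field $F_2X^2+F_3X^3+\cdots$ cannot be removed by a tangent-to-identity change of coordinates; only the time change removes it. The correct requirement is $F(\psi(u))=F_2u^2$. It is solved in closed form by $\psi^{-1}(X)=X\sqrt{F(X)/(F_2X^2)}$, which is exactly the paper's $H$ in \eqref{H definition}. With this fix your plan reproduces $\kappa_1=-(4ah^2+2bh+1)^2/(16h^2(bh+1))$, and $\kappa_2$ becomes a nonzero multiple of $2b^2h^2+4bh+3-4ah^2(2bh+3)$.

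Second, the codimension criterion you invoke is the wrong one. If every even-order coefficient vanishes, then $\dot x=y+\kappa_1x^2$, $\dot y=y(x+\kappa_3x^3+\kappa_5x^5+\cdots)$ is reversible under $(x,y,t)\mapsto(-x,y,-t)$. So once $\kappa_2=0$, the odd coefficient $\kappa_3$ is only a modulus; it becomes $\omega$ in the paper's unfolding. The nondegeneracy that fixes the codimension at three is $\kappa_4\neq0$, which the paper calls ``the nonzero even-order coefficient'' and later normalizes to $\pm1$. Your criterion ``$\kappa_2=0$, $\kappa_3\neq0$'' would wrongly certify $b=0$, just outside the present hypotheses. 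There, on $a=a_2$, $\bar\kappa_3=-1$, but the reduced system is exactly reversible: all even-order coefficients vanish. This is the center/reversibility case treated by Laurin and Rousseau. In your regime the conclusion survives only because you also check $\kappa_4$. In the paper's normalization, on $a=a_2$ one gets $\bar\kappa_3=-3(b^2h^2+3bh+3)/(2bh+3)^2$ and $\bar\kappa_4=3bh(2b^2h^2+9bh+12)/(2(2bh+3)^3)$. Since $b^2h^2+3bh+3>0$, $2b^2h^2+9bh+12>0$ and $bh\neq0$, both are nonzero without any Sturm-sequence argument. Still, the codimension-three claim must be attributed to $\kappa_4\neq0$, not to $\kappa_3\neq0$.
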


\begin{proof}
Following the invariant-line normal-form procedure in \cite{etoua2010bifurcation}, we give the reduction and coefficients for the present parameter regime below. The reduction of \eqref{Rescaled Main Model} to the target form \eqref{Target form NS} in a neighborhood of the coalescence point $E_{123}$ or $E_{124}$ begins by normalizing the coefficient of $y$ in the first equation to unity. To this end, we perform the time reparametrization
\[
\tau=\int_0^t p(x(s))\,ds
\]
followed by the change of coordinates:
\[
x=x_1+2h,\qquad y=-y_1,
\]
which moves the coalesced equilibrium to the origin and fixes the orientation of the phase portrait. Next, to eliminate higher-order terms in $\dot x_1$ and obtain the canonical nilpotent structure in the first component, we apply a near-identity change of variables generated by a smooth function
\begin{equation}\label{H definition}
H(x_1)
=
x_1\sqrt{
\frac{\dot x_1-y_1}
{-\frac{4ah^2+2bh+1}{16h^3}x_1^2}
},
\qquad H(0)=0,
\end{equation}
where the quotient is extended smoothly to $x_1=0$ and the smooth square-root branch with value $1$ at the origin is chosen. The transformation is then applied as follows:
\[
x_2=H(x_1),\qquad y_2=y_1,\qquad
\frac{d\tau_2}{d\tau}=H'(x_1),
\]
where $\tau_2$ is subsequently relabeled as the time variable. Finally, we apply the rescaling
\[
x_2=\frac{h(4ah^2+2bh+1)}{bh+1}x_3,\qquad
y_2=\frac{h(4ah^2+2bh+1)}{bh+1}y_3
\]
to normalize the coefficient of $x_2y_2$ in the second equation and obtain
\[
\kappa_1=-\frac{(4ah^2+2bh+1)^2}{16h^2(bh+1)},
\qquad
\kappa_2=-\frac{8abh^3-2b^2h^2+12ah^2-4bh-3}{4(bh+1)^2},
\]
\[
\kappa_3
=
\frac{
16a^2bh^5-16ab^2h^4+32a^2h^4
-40abh^3-32ah^2+bh+2}
{4(bh+1)^3},
\qquad
\kappa_4=\kappa_4(a,b,h).
\]
The coefficient of $x_3^2y_3$ is equal to $0$ for $a=a_2$, which is well defined and positive for $bh>-\frac{3}{2}$. Accordingly, for $a=a_2$, we
\[
\bar\kappa_1=-\frac{9(bh+1)^3}{4(2bh+3)^2h^2},
\qquad
\bar\kappa_3=-\frac{3(b^2h^2+3bh+3)}{(2bh+3)^2},
\qquad
\bar\kappa_4=
\frac{3(2b^2h^2+9bh+12)bh}{2(2bh+3)^3}.
\]
Since
\[
b^2h^2+3bh+3=\left(bh+\frac{3}{2}\right)^2+\frac{3}{4}>0,
\]
we have $\bar\kappa_3\ne0$ for every real admissible $bh\ne-\frac{3}{2}$. Moreover, $2b^2h^2+9bh+12>0$ for every real $bh$, so $\bar\kappa_4\ne0$ in the present negative-$b$ regime. This is the nonzero even-order coefficient used in the codimension-$3$ nilpotent-saddle classification. A direct computation shows that the identity of the interior equilibrium participating in the triple coalescence depends on the sign of $bh+1$: if $-1<bh<0$, then $E_3$ coalesces with $E_{12}$, whereas if $-\frac{3}{2}<bh<-1$, then $E_4$ coalesces with $E_{12}$.
Moreover, the qualitative type of the resulting non-hyperbolic equilibrium is determined by the sign of $\kappa_1$ (equivalently, $\operatorname{sign}(\kappa_1)=\operatorname{sign}(-bh-1)$). In particular, $\kappa_1<0$ corresponds to a nilpotent saddle, while $\kappa_1>0$ yields a nilpotent elliptic singularity.
In particular, $\bar\kappa_1<0$ corresponds to a nilpotent saddle, while $\bar\kappa_1>0$ yields a nilpotent elliptic singularity.
\end{proof}

For the codimension-three parameter value, write $d_2:=\left.p(2h)\right|_{a=a_2}=\frac{2h^2(2bh+3)}{3(bh+1)^2}$. Guided by the triple coalescence established above, we now turn to the unfolding of the resulting nilpotent singularity. Although the collision can produce either a nilpotent saddle or a nilpotent elliptic point, we focus on the nilpotent saddle case with an \textbf{invariant axis} through the collision point.\\

To study the nearby bifurcations, we vary \((K,d,a)\) in a small neighborhood of \((K_2,d_2,a_2)\). For \(h<-\frac{1}{b}\), the interior equilibrium involved in the coalescence is \(E_3\). As the unfolded vector field is perturbed in a neighborhood of $(K_2,d_2,a_2)$, the separatrix structure associated with the nearby boundary saddles $E_1$ and $E_2$ may close into a heteroclinic loop enclosing $E_3$, while simultaneously allowing the creation and interaction of limit cycles and saddle-node bifurcations. The next result formalizes this mechanism.

\begin{theorem}
Assume $-1<bh<0$ and
\begin{equation}\label{NS nondegeneracy condition}
9(bh+1)^3-4h^2(2bh+3)^2\ne0.
\end{equation}
Then system~\eqref{Rescaled Main Model}, in a sufficiently small neighborhood of $(K_2,d_2,a_2)$, admits a three-parameter versal unfolding, within the class preserving the invariant $x$-axis, of the codimension-$3$ nilpotent saddle at $E_{123}$.
\end{theorem}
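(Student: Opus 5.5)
We would follow the scheme of \cite{etoua2010bifurcation} for nilpotent saddles with an invariant line, adapted to the present coefficients. The versal unfolding we aim for, within the class of vector fields preserving $\{y=0\}$, is
\begin{equation*}
\begin{cases}
\dot x=y+\kappa_1x^2+\mu_1,\\
\dot y=y\bigl(\mu_2+x+\mu_3x^2+\kappa_3x^3+\kappa_4x^4+O(x^5)\bigr),
\end{cases}
\end{equation*}
or an equivalent form with the parameters $(\mu_1,\mu_2,\mu_3)$ entering in these positions. Here $\mu_1$ splits the double boundary equilibrium $E_{12}$ into $E_1,E_2$ (or removes them). The parameter $\mu_2$ moves the interior equilibrium along the $x$-direction relative to the boundary points. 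The parameter $\mu_3$ unfolds the vanishing coefficient $\kappa_2$, which is the Hopf/heteroclinic-loop degeneracy.

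\textbf{Step 1: reduction with parameters.} We set $\epsilon=(K-K_2,\;d-d_2,\;a-a_2)$ and redo the reduction of Theorem~\ref{Nilpotent Saddle}, now with $\epsilon$ present. First we apply the time change $d\tau=p(x)\,dt$. Next we translate $x=x_1+2h+\xi(\epsilon)$ and set $y=-y_1$, where $\xi(\epsilon)$ is chosen to kill the linear $x_1$-term in $\dot x_1$ at $y_1=0$; this is the standard ``translate to the critical point of the boundary cubic'' step. Then we apply the $\epsilon$-dependent analogue of the map $H$ in \eqref{H definition}, which absorbs all higher powers of $x_1$ in the first equation into $\kappa_1(\epsilon)x^2$. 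A time rescaling by $H'$ and the linear rescaling of Theorem~\ref{Nilpotent Saddle} then normalize the $xy$ coefficient to $1$. All of these maps preserve $y=0$, so the $y$-equation keeps the factored form $y\,g(x,\epsilon)$ throughout.

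\textbf{Step 2: remove the remaining linear terms.} At this stage we have $\dot x=y+\kappa_1x^2+\mu_1(\epsilon)$ and $\dot y=y\bigl(\nu_0(\epsilon)+\nu_1(\epsilon)x+\nu_2(\epsilon)x^2+\cdots\bigr)$, with $\nu_1(0)=1$. A further shift in $x$ combined with a rescaling keeps the coefficient of $xy$ equal to $1$. We define $\mu_2=\nu_0$ and $\mu_3=\nu_2$ after these normalizations. Higher-order coefficients remain near $\bar\kappa_3,\bar\kappa_4$, which are nonzero by Theorem~\ref{Nilpotent Saddle}.

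\textbf{Step 3: transversality.} We compute, to first order in $\epsilon$, the Jacobian $\partial(\mu_1,\mu_2,\mu_3)/\partial(\epsilon_1,\epsilon_2,\epsilon_3)$ at $\epsilon=0$ and show that it is nonsingular. The three dependencies are:
\begin{itemize}
\item $\mu_1$ depends to first order essentially only on $K$, through the discriminant $K^2-4hK$, and we expect $\partial\mu_1/\partial K\neq0$ with no $a$-dependence at leading order;
\item $\mu_2$ measures $p(2h)-d$, giving a nonzero $d$-derivative;
\item $\mu_3$ carries the $a$-dependence of $\kappa_2$, namely $\partial\kappa_2/\partial a\big|_{a_2}=-\frac{h^2(2bh+3)}{(bh+1)^2}\ne0$, since $bh>-\tfrac32$.
\end{itemize}
The determinant should then be a nonzero multiple of a product of factors, one of which we expect to be exactly $9(bh+1)^3-4h^2(2bh+3)^2$. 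This is plausibly the combination arising from $\bar\kappa_1$ (which contains $9(bh+1)^3/(4h^2(2bh+3)^2)$) when the $x$-shift couples $\mu_1$ with $\mu_3$. In that case hypothesis \eqref{NS nondegeneracy condition} is exactly the nonvanishing condition.

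\textbf{Main obstacle.} The hardest part is the symbolic bookkeeping in Steps 1--2. The $\epsilon$-dependent $H$-transformation and the $x$-shift feed $O(\epsilon)$ corrections into every coefficient, and the transversality determinant must be extracted cleanly from them. We would do this in Maple, keeping only terms linear in $\epsilon$, and then factor the determinant to identify the condition \eqref{NS nondegeneracy condition}.

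Once transversality holds, the conclusion follows from the versality theorem for the codimension-3 nilpotent saddle with invariant line in \cite{etoua2010bifurcation}. That theorem requires only $\kappa_1<0$, which holds for $-1<bh<0$; it also requires $\kappa_3\neq0$ and the nonzero $\kappa_4$ needed there, both supplied by Theorem~\ref{Nilpotent Saddle}.
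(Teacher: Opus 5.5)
Your reduction is a valid alternative to the paper's. The paper freezes $H$ at $\mu=0$. It then writes $\dot x_2-y_2=R(x_2,\mu)(f_{00}+f_{10}x_2+f_{20}x_2^2)$ by a preparation argument, divides $y$ by $R$ with a time change, and only afterwards shifts by $f_{10}/(2f_{20})$. You shift first to the critical point of the boundary function $f(\cdot,\epsilon)$ and then use a parameter-dependent $H$. This $H$ is smooth because $f(x_1,\epsilon)-f(0,\epsilon)$ has a double zero at $x_1=0$. Your maps act only on $x$, apart from linear scalings and $x$-dependent time changes, so the $y$-equation stays exactly $y\,g(x,\epsilon)$ with no $y^2$ terms. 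Drop the ``further shift'' in Step~2: it would reintroduce a linear term in $\dot x$, and a rescaling alone normalizes the $xy$ coefficient.

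The gap is Step~3. You do not establish nonsingularity, and your account of how $9(bh+1)^3-4h^2(2bh+3)^2$ enters contradicts your own first bullet.

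\textbf{Why your mechanism cannot work.} Since $\partial\mu_1/\partial d=\partial\mu_1/\partial a=0$ at $\epsilon=0$, the determinant equals $\partial_K\mu_1$ times the $(d,a)$-minor computed within the slice $K=K_2$. On that slice $\xi\equiv0$, so any coupling through the shift lies in the $K$-column and drops out.

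\textbf{How to close it, without Maple.} On $K=K_2$ your $\mu_2$ is a nonzero multiple of $1-d/p(2h)$, so it vanishes exactly on $d=p(2h)$ with $a$ free. Along that curve your reduction coincides with that of Theorem~\ref{Nilpotent Saddle}, so $\mu_3$ is a nonzero multiple of $\kappa_2(a)$. In the coordinates $(K,\,d-p(2h),\,a)$ the Jacobian is therefore triangular, with diagonal entries:
\begin{itemize}
\item $\partial_K\mu_1\neq0$;
\item a nonzero multiple of $-1/p(2h)$;
\item a nonzero multiple of $\partial_a\kappa_2=-h^2(2bh+3)/(bh+1)^2$.
\end{itemize}
So in your normalization transversality holds on all of $-1<bh<0$, and the factor you expect never appears.

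\textbf{Where the paper's factor comes from.} It is produced by the preparation step. Write $R=1+r_1x_2+\cdots$ with $r_1=d_{30}/d_{20}$, which is of first order in $\mu_3$ when $\mu_1=0$. To first order, dividing by $R$ shifts the normalized $x^2y$ coefficient by $-(r_1/g_{11})(1+\bar\kappa_1)$; compare $e_{21}$ with $g_{21}$ in Appendix~A. Moreover $9(bh+1)^3-4h^2(2bh+3)^2=-4h^2(2bh+3)^2(1+\bar\kappa_1)$. The same step also leaves a $y^2$ term with coefficient $-r_1/g_{11}$ in $\dot y$. The $y^2$ and $x^2y$ coefficients can be traded against each other by $x\mapsto x+\alpha x^2$, $y\mapsto y(1+2\alpha x)$. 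Hence the paper's $b_{21}$ and your $\mu_3$ are not the same parameter, and their $a$-derivatives along $K=K_2$, $d=p(2h)$ differ. Close Step~3 with the triangular argument rather than trying to reproduce that factor.
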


\begin{proof}
In this proof, we show that, after smooth changes of variables, orbital time reparametrizations, and parameters, system~\eqref{Rescaled Main Model} is locally $C^\infty$-orbitally equivalent to the canonical nilpotent-saddle unfolding normal form. Consequently, the bifurcation scenarios appearing in the versal unfolding \cite{etoua2010bifurcation} persist in the original system in a neighborhood of $(K_2,d_2,a_2)$. For system~\eqref{Rescaled Main Model} at parameters $(K_2,d_2,a_2)$ under the assumptions of the theorem, we compute a three-parameter versal unfolding, which clarifies the bifurcation structure near the degenerate point $E_{123}$, which in turn acts as a local organizing center for nearby codimension 1 and codimension-2 transitions in the local phase portrait. To introduce perturbations across the bifurcation thresholds, we set
\[
K=4h+\mu_1,\qquad
d=\frac{4bh^3+6h^2}{3(bh+1)^2}+\mu_2,\quad
\text{and}\quad
a=\frac{2b^2h^2+4bh+3}{4h^2(2bh+3)}+\mu_3,
\]
where $\mu=(\mu_1,\mu_2,\mu_3)$ is the nilpotent-saddle perturbation vector and its components are sufficiently small. With this parametrization, $h$ and $b$ remain as the only free variables. The coefficients $\ell_{ij}$, $c_{ij}$, $d_{ij}$, $e_{ij}$, $f_{ij}$, $g_{ij}$, $h_{ij}$, and $b_{ij}$ are listed in Appendix~A. Following the procedure in the proof of Theorem~\ref{Nilpotent Saddle}, we translate the equilibrium $(2h,0)$ to the origin and apply the time rescaling
\[
\tau=\int_0^t p(x(s))\,ds,
\]
and subsequently replace $\tau$ by $t$. The system then takes the form:
\begin{equation}\label{eq:ns-step1}
\begin{cases}
\dot x_1=
\ell_{00}+\ell_{10}x_1+y_1+\ell_{20}x_1^2
+\ell_{30}x_1^3+\ell_{40}x_1^4+O(x_1^5),\\
\dot y_1=
c_{01}y_1+c_{11}x_1y_1+c_{21}x_1^2y_1
+c_{31}x_1^3y_1+y_1O(x_1^4).
\end{cases}
\end{equation}
where the coefficients of $x_1^j$, $j=3,4,5,\ldots$, in the first equation may have nonzero $\mu^0$ parts. To get rid of all constant $(\mu^0)$ coefficients of $x^j$, $j=3,4,5,\ldots$, the following transformation is then applied:
\[
x_2=H(x_1),\qquad y_2=y_1,\qquad
\frac{d\tau}{dt}
=
H'(x_1)
=
\frac{1}{\frac{d}{dx_2}(H^{-1}(x_2))},
\]
provided that $H$ is defined as in \eqref{H definition} at the unperturbed values $K=K_2$, $a=a_2$, and $\mu=0$, and is kept fixed as $\mu$ varies. Equivalently,
\[
H(x_1)
=
x_1\frac{2h}{2h+x_1}
\sqrt{
\frac{a_2(2h+x_1)^2+b(2h+x_1)+1}
{4a_2h^2+2bh+1}
}.
\]
Replacing $\tau$ by $t$, the new system is that of the form:
\begin{equation}\label{eq:ns-step2}
\begin{cases}
\dot x_2=
d_{00}+d_{10}x_2+y_2+d_{20}x_2^2
+d_{30}x_2^3+d_{40}x_2^4
+O(|\mu|x_2^5),\\
\dot y_2=
y_2(e_{01}+e_{11}x_2+e_{21}x_2^2+e_{31}x_2^3+O(x_2^4)).
\end{cases}
\end{equation}
where $d_{i0}=O(|\mu|)$, $i=3,4,\ldots$. By smooth preparation, we may write
\[
\dot x_2-y_2
=
R(x_2,\mu)
(f_{00}+f_{10}x_2+f_{20}x_2^2),
\]
where $R$ is smooth, $R(x_2,0)\equiv1$, and the preparation is chosen so that $f_{00}=d_{00}+O(\mu^2)$, $f_{10}=d_{10}+O(\mu^2)$, and $f_{20}=d_{20}+O(\mu^2)$. In particular,
\[
R(0,0)=1.
\]
We then apply the transformations:
\[
x_3=x_2,\qquad
y_3=\frac{y_2}{R(x_2,\mu)},\qquad
\tau=\int_0^t R(x_2(s),\mu)\,ds,
\]
and replace $\tau$ by $t$. This removes the higher-order terms from $\dot x_3$ and gives:
\begin{equation}\label{eq:ns-step3}
\begin{cases}
\dot x_3=f_{00}+f_{10}x_3+f_{20}x_3^2+y_3,\\
\dot y_3=
\frac{R(x_2,\mu)\dot y_2-y_2R'(x_2,\mu)\dot x_2}
{R^3(x_2,\mu)}\\
\hspace{1cm}
=
y_3\left(
g_{01}+g_{11}x_3+g_{21}x_3^2+g_{31}x_3^3
+y_3O(|x_3,\mu|)+O(x_3^4)
\right).
\end{cases}
\end{equation}
At $\mu=0$,
\[
f_{20}(0)
=
-\frac{3(bh+1)^2}{8h^3(2bh+3)}
\ne0
\]
for $-1<bh<0$. Next, set
\[
z_3=x_3+\frac{f_{10}}{2f_{20}},
\]
which eliminates the linear term in the first equation. After this shift, relabel the nonzero coefficient of $z_3y_3$ in the second equation as $g_{11}$. Direct expansion gives
\[
g_{11}(0)=\frac{2bh+3}{6h(bh+1)}\ne0.
\]
The rescaling
\[
x_4=g_{11}z_3,\qquad y_4=g_{11}y_3
\]
then normalizes that coefficient to unity while preserving the unit coefficient of $y_4$ in the first equation. In the new coordinates $(x_4,y_4)$, the system is reduced to the following normal form:
\begin{equation}\label{eq:ns-step4}
\begin{cases}
\dot x_4=h_{00}+y_4+h_{20}x_4^2,\\
\dot y_4=
y_4(b_{01}+x_4+b_{21}x_4^2+b_{31}x_4^3+b_{41}x_4^4+O(x_4^5))
+y_4^2O(|x_4,\mu|).
\end{cases}
\end{equation}
A direct computation at $\mu=0$ gives
\[
b_{41}(0)
=
\frac{3bh(2b^2h^2+9bh+12)}{2(2bh+3)^3}.
\]
Since $-1<bh<0$,
\[
2b^2h^2+9bh+12>0,
\]
and hence
\[
b_{41}(0)\ne0.
\]
Therefore $b_{41}(\mu)\ne0$ for sufficiently small $\mu$. Finally, using the transformation:
\[
x_5=|b_{41}(\mu)|^{\frac{1}{3}}x_4,\qquad
y_5=|b_{41}(\mu)|^{\frac{2}{3}}y_4,\qquad
\tau=|b_{41}(\mu)|^{-\frac{1}{3}}t,
\]
the coefficient $b_{41}$ is transformed to $\pm1$, and the system takes its final form:
\begin{equation}
\begin{cases}
\dot x_5=y_5+\delta_1(\mu)+h_{20}(\mu)x_5^2,\\
\dot y_5=
y_5\left(
\delta_2(\mu)+x_5+\delta_3(\mu)x_5^2
+\omega(\mu)x_5^3\pm x_5^4+O(x_5^5)
\right)
+y_5^2O(|x_5,\mu|).
\end{cases}
\end{equation}
where
\[
\delta_1(\mu)
=
h_{00}(\mu)|b_{41}(\mu)|^{\frac{2}{3}},
\qquad
\delta_2(\mu)
=
b_{01}(\mu)|b_{41}(\mu)|^{\frac{1}{3}},
\]
\[
\delta_3(\mu)
=
b_{21}(\mu)|b_{41}(\mu)|^{-\frac{1}{3}},
\qquad
\omega(\mu)
=
b_{31}(\mu)|b_{41}(\mu)|^{-\frac{2}{3}}.
\]
Moreover, the parameter correspondence
\[
(\mu_1,\mu_2,\mu_3)
\longrightarrow
(\delta_1(\mu),\delta_2(\mu),\delta_3(\mu))
\]
is a local diffeomorphism in a neighborhood of $\mu=(0,0,0)$. Indeed, direct expansion of system~\eqref{Rescaled Main Model} through the preceding transformations gives
\[
\left.
\Det\!\left(
\frac{\partial(h_{00},b_{01},b_{21})}
{\partial(\mu_1,\mu_2,\mu_3)}
\right)
\right|_{\mu=0}
=
-\frac{(bh+1)(9(bh+1)^3-4h^2(2bh+3)^2)}
{64h^5(2bh+3)^2}.
\]
Moreover,
\[
h_{00}(0)=b_{01}(0)=b_{21}(0)=0.
\]
Therefore,
\[
\left.
\Det\!\left(
\frac{\partial(\delta_1,\delta_2,\delta_3)}
{\partial(\mu_1,\mu_2,\mu_3)}
\right)
\right|_{\mu=0}
=
|b_{41}(0)|^{\frac{2}{3}}
\left.
\Det\!\left(
\frac{\partial(h_{00},b_{01},b_{21})}
{\partial(\mu_1,\mu_2,\mu_3)}
\right)
\right|_{\mu=0}.
\]
Hence
\[
\left.
\Det\!\left(
\frac{\partial(\delta_1,\delta_2,\delta_3)}
{\partial(\mu_1,\mu_2,\mu_3)}
\right)
\right|_{\mu=0}
=
-|b_{41}(0)|^{\frac{2}{3}}
\frac{(bh+1)(9(bh+1)^3-4h^2(2bh+3)^2)}
{64h^5(2bh+3)^2}
\ne0
\]
by~\eqref{NS nondegeneracy condition}. Thus the parameter correspondence is a local diffeomorphism near the origin in parameter space, which establishes the required three-parameter versal unfolding.
\end{proof}

Having constructed a versal unfolding of the nilpotent saddle and verified that the parameter map from the original system to the unfolding parameters is a local diffeomorphism, we can transfer the bifurcation organization obtained in the normal form back to the full predator-prey model. In particular, the diffeomorphism guarantees that the codimension count is preserved and that the local bifurcation strata identified in the unfolding space pull back to corresponding strata in the original parameter space, with the same regularity. For heteroclinic loops contained in the normal-form neighborhood and covered by the cited unfolding results, the predicted locus corresponds to a local bifurcation set of the original system, and the corresponding nearby qualitative scenarios (creation or destruction of limit cycles and the emergence of these connections) persist under the inverse coordinate and parameter changes (Fig. \ref{fig:Heteroclinic loop}). The nilpotent saddle therefore organizes the nearby heteroclinic structures and adjacent bifurcations in the biologically admissible parameter region.

\begin{figure}[!htbp]
    \centering
    \includegraphics[width=0.5\linewidth]{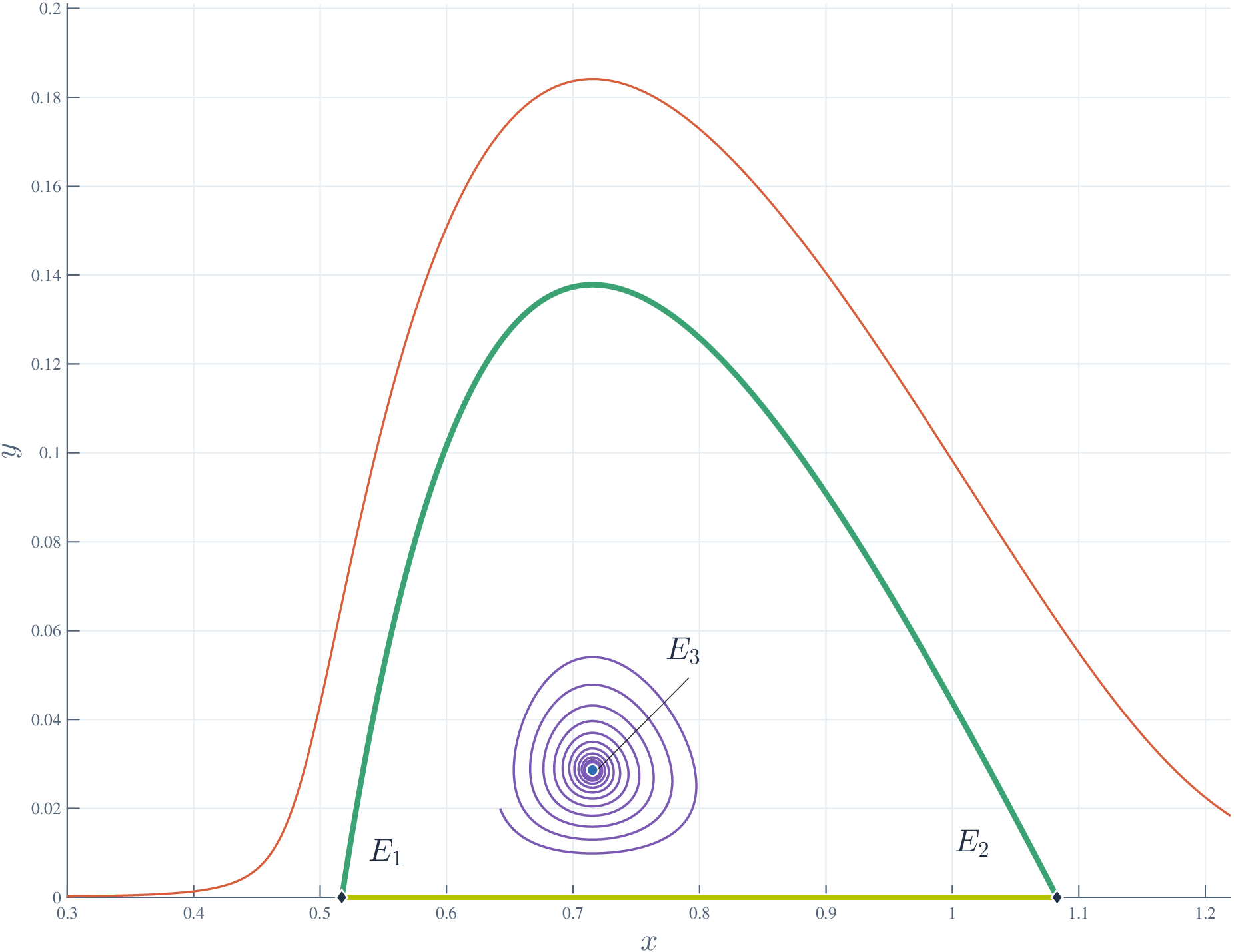}
    \caption{The formation of the Heteroclinic loop near the Nilpotent Saddle bifurcation point for
    $d\approx 1.59;\ K=1.6;\ h=0.35;\ b=-1.76755;\ a=1.145514;$}
    \label{fig:Heteroclinic loop}
\end{figure}

\FloatBarrier
\subsection{Transcritical and Transcritical--Saddle-Node Bifurcations on the x-axis}

Next, we study bifurcations near the $x$-axis.
\begin{theorem}\label{thm:boundary-collisions}
Assume $K>4h$ and let $i=1,2$. The predator-free equilibrium $E_i=(x_{b_i},0)$ undergoes a transcritical bifurcation when
\[
d=\frac{x_{b_i}^2}{ax_{b_i}^2+bx_{b_i}+1},
\qquad b\ne-\frac{2}{x_{b_i}},
\]
corresponding to its collision with precisely one of $E_3$ and $E_4$. If
\[
b=-\frac{2}{x_{b_i}},
\qquad
 d=\frac{x_{b_i}^2}{ax_{b_i}^2-1},
\]
then $E_i$, $E_3$, and $E_4$ coincide at $E_{134}$ for $i=1$, or at $E_{234}$ for $i=2$, and the system undergoes a codimension-$2$ transcritical--saddle-node (TSN) bifurcation. The reduced vector field has a nonzero negative cubic term. In particular, $E_{134}$ is of saddle type, whereas $E_{234}$ is asymptotically stable.
\end{theorem}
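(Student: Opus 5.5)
The plan is to work directly at the boundary equilibria. I will exploit the invariance of the $x$-axis and do a one-dimensional center-manifold reduction, in the spirit of the saddle-node analysis of $E_{34}$ above. Write $f(x)=x(1-\frac{x}{K})-h$, so that $f(x_{b_i})=0$. Evaluating $J$ at $(x_{b_i},0)$ gives an upper-triangular matrix:
\[
J(E_i)=\begin{bmatrix} f'(x_{b_i}) & -p(x_{b_i})\\ 0 & p(x_{b_i})-d\end{bmatrix},\qquad f'(x_{b_i})=1-\frac{2x_{b_i}}{K}=\mp\frac{\sqrt{K^2-4hK}}{K}.
\]
Since $K>4h$, we have $f'(x_{b_1})>0$ and $f'(x_{b_2})<0$. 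Hence a zero eigenvalue occurs exactly when $d=p(x_{b_i})$, and it is simple.

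At that value, the formula $y=f(x)/d$ for the interior equilibria shows that the interior equilibrium with $p(x)=d$ and $x=x_{b_i}$ has $y=0$, so a coexistence equilibrium hits $E_i$. By the ordering $x_\ell<-\frac{2}{b}<x_r$ used in Theorem~\ref{E_3 antisaddle, E_4 saddle}, the colliding equilibrium is determined as follows:
\begin{itemize}
\item it is $E_3$ if $x_{b_i}<-\frac{2}{b}$, i.e.\ $p'(x_{b_i})>0$;
\item it is $E_4$ if $x_{b_i}>-\frac{2}{b}$, i.e.\ $p'(x_{b_i})<0$.
\end{itemize}
Both cases require $b\neq-\frac{2}{x_{b_i}}$, which gives the "precisely one" claim.

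Next I translate $E_i$ to the origin and set $\varepsilon=d-p(x_{b_i})$. The center eigenvector is proportional to $(p(x_{b_i}),f'(x_{b_i}))$. Because the $x$-axis is invariant, the center manifold can be written as a graph over $y$:
\[
x-x_{b_i}=c_1y+c_2y^2+O(y^3),\qquad c_1=\frac{p(x_{b_i})}{f'(x_{b_i})}\ne0,
\]
with $c_2$ obtained from the invariance equation. Substituting into $\dot y=y\,(p(x)-d)$ gives the reduced field
\[
\dot y=y\Bigl(-\varepsilon+p'(x_{b_i})c_1y+\bigl(p'(x_{b_i})c_2+\tfrac12p''(x_{b_i})c_1^2\bigr)y^2+O(y^3)\Bigr).
\]
When $p'(x_{b_i})\neq0$ this has the transcritical normal form $\dot y=-\varepsilon y+\beta y^2+\dots$ with $\beta\ne0$. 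The transversality coefficient is $\partial_\varepsilon(-\varepsilon)=-1\ne0$. Sotomayor's conditions therefore hold, which proves the first claim.

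For the degenerate case, $p'(x)=\frac{x(bx+2)}{(ax^2+bx+1)^2}$ vanishes at $x_{b_i}=-\frac{2}{b}$, i.e.\ $b=-\frac{2}{x_{b_i}}$. Then $p(x_{b_i})=\frac{x_{b_i}^2}{ax_{b_i}^2-1}$, which reproduces the stated $d$. At this point $x_\ell=x_r=-\frac{2}{b}$, so $E_3$, $E_4$ and $E_i$ coincide. The quadratic coefficient vanishes, and the $c_2$ contribution is killed as well. The cubic coefficient is therefore $\frac12p''(x_{b_i})c_1^2$. Differentiating the numerator $x(bx+2)$ gives $2bx+2=-2$ at $x=-\frac{2}{b}$, while the derivative of the squared denominator is multiplied by the vanishing factor. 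Hence
\[
p''(x_{b_i})=\frac{-2}{(ax_{b_i}^2+bx_{b_i}+1)^2}<0,
\]
and the cubic term is nonzero and negative.

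Since $y=0$ is invariant, this yields the TSN normal form $\dot y=-\varepsilon y+\gamma y^3$ with $\gamma<0$. It is unfolded by two parameters: $\varepsilon$, and the deviation of $b$ from $-\frac{2}{x_{b_i}}$, which controls the quadratic coefficient linearly through $p'$. A quick check that this $2\times2$ parameter Jacobian is nonsingular confirms codimension two.

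The local type then follows by combining the two directions:
\begin{itemize}
\item on the center manifold the cubic $\gamma<0$ makes the origin attracting;
\item at $E_{134}$ the transverse eigenvalue $f'(x_{b_1})>0$ is repelling, so $E_{134}$ is a topological saddle;
\item at $E_{234}$ the transverse eigenvalue $f'(x_{b_2})<0$ is attracting, so $E_{234}$ is asymptotically stable.
\end{itemize}
The main obstacle I expect is the careful bookkeeping of the center-manifold coefficient $c_2$. It must be shown to drop out at the TSN point because it always appears multiplied by $p'(x_{b_i})$, and the sign of $p''$ must be verified cleanly.
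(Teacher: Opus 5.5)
Your proposal is correct and follows essentially the same route as the paper: linearize at $(x_{b_i},0)$, reduce to the one-dimensional center manifold through the predator direction, read off the transcritical quadratic coefficient $p'(x_{b_i})c_1$, obtain at $b=-2/x_{b_i}$ the negative cubic $\tfrac12p''(x_{b_i})c_1^2=-\frac{K^2x_{b_i}^4}{(ax_{b_i}^2-1)^4(K-2x_{b_i})^2}$ (which matches the paper's coefficient exactly), and then check the $2\times2$ parameter Jacobian in $(b,d)$. Writing the coefficients through $p'$ and $p''$, instead of the paper's explicit Jordan-form coefficients, makes it transparent why $c_2$ drops out and why the cubic is negative; one small correction is that the center manifold is a graph over $y$ because $f'(x_{b_i})\neq0$, not because the $x$-axis is invariant.
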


\begin{proof}
At $E_i$, $i=1,2$, the eigenvalues satisfy
\[
\lambda_1^i=\frac{K-2x_{b_i}}{K},
\qquad
\lambda_2^i=\frac{x_{b_i}^2}{ax_{b_i}^2+bx_{b_i}+1}-d.
\]
Since $K>4h$, the two predator-free equilibria are distinct, with $\lambda_1^1>0$ and $\lambda_1^2<0$. Thus $\lambda_1^i\ne0$, and a boundary collision occurs when $\lambda_2^i=0$, namely when
\[
d=\frac{x_{b_i}^2}{ax_{b_i}^2+bx_{b_i}+1}.
\]
We translate $E_i$ to the origin by $x_1=x-x_{b_i}$ and $y_1=y$, and apply the linear transformation
\begin{equation}\label{Jordan Canonical form E13}
\begin{bmatrix}
x_1\\y_1
\end{bmatrix}
=
\begin{bmatrix}
1&
\frac{x_{b_i}^2K}
{(ax_{b_i}^2+bx_{b_i}+1)(K-2x_{b_i})}\\
0&1
\end{bmatrix}
\begin{bmatrix}
x_2\\y_2
\end{bmatrix}.
\end{equation}
The system becomes
\begin{equation}\label{E13 or E14 model centered at zero}
\begin{cases}
\dot x_2=\frac{K-2x_{b_i}}{K}x_2+O(|x_2,y_2|^2),\\
\dot y_2=
\frac{x_{b_i}(bx_{b_i}+2)}
{(ax_{b_i}^2+bx_{b_i}+1)^2}x_2y_2\\
\hspace{1cm}
+\frac{K(bx_{b_i}+2)x_{b_i}^3}
{(K-2x_{b_i})(ax_{b_i}^2+bx_{b_i}+1)^3}y_2^2
+O(|x_2,y_2|^3).
\end{cases}
\end{equation}
For $b>-\frac{2}{x_{b_i}}$, the boundary equilibrium collides with $E_3$, while for $b<-\frac{2}{x_{b_i}}$ it collides with $E_4$. When $b\ne-\frac{2}{x_{b_i}}$, the coefficient of $y_2^2$ is nonzero, and $\frac{\partial\lambda_2^i}{\partial d}=-1\ne0$. Since the boundary equilibrium persists as $d$ varies and a single coexistence equilibrium crosses the invariant $x$-axis, the collision is transcritical. Its transverse direction is repelling for $i=1$ because $\lambda_1^1>0$, and attracting for $i=2$ because $\lambda_1^2<0$.

Now suppose that
\[
b=-\frac{2}{x_{b_i}},
\qquad
 d=\frac{x_{b_i}^2}{ax_{b_i}^2-1}.
\]
Then $E_3$ and $E_4$ also merge with $E_i$; the condition $-2\sqrt a<b<0$ implies $ax_{b_i}^2-1>0$. The quadratic terms in the second equation of \eqref{E13 or E14 model centered at zero} vanish. The one-dimensional center manifold is tangent to the $y_2$-axis and can be written as $x_2=O(y_2^2)$. Its reduced dynamics satisfy
\begin{equation}\label{eq:SNT-cubic-reduction}
\dot y_2=-\frac{K^2x_{b_i}^4}
{(ax_{b_i}^2-1)^4(K-2x_{b_i})^2}y_2^3
+O(y_2^4).
\end{equation}
The coefficient of $y_2^3$ is strictly negative. Therefore trajectories on the center manifold approach the equilibrium. Since $\lambda_1^1>0$ and $\lambda_1^2<0$, the corresponding triple collision is of saddle type at $E_{134}$ and asymptotically stable at $E_{234}$.

To determine the codimension of this bifurcation,
we fix $a,K,h$ and introduce the perturbations
\[
\bar{\epsilon}_1=b+\frac{2}{x_{b_i}},
\qquad
\bar{\epsilon}_2=d-\frac{x_{b_i}^2}{ax_{b_i}^2-1}.
\]
Since $K$ and $h$ remain fixed, the predator-free
equilibrium $(x_{b_i},0)$ does not change.
Near this equilibrium, with $y=y_2$, we obtain
\[
x=x_{b_i}
+\frac{Kx_{b_i}^2}
{(ax_{b_i}^2+bx_{b_i}+1)
(K-2x_{b_i}+Kd)-Kx_{b_i}^2}
y_2+O(y_2^2).
\]

Substituting this expansion for $x$ into the
predator equation
\[
\dot y=y\left(
-d+\frac{x^2}{ax^2+bx+1}
\right)
\]
gives the following one-dimensional reduced equation:
\begin{equation}
\label{eq:SNT-full-parameter-expansion}
\begin{aligned}
\dot y_2={}&
\left(
\frac{x_{b_i}^2}
{ax_{b_i}^2+bx_{b_i}+1}-d
\right)y_2
\\[3pt]
&+
\frac{Kx_{b_i}^3(bx_{b_i}+2)}
{(ax_{b_i}^2+bx_{b_i}+1)^2
\left[
(ax_{b_i}^2+bx_{b_i}+1)
(K-2x_{b_i}+Kd)-Kx_{b_i}^2
\right]}
y_2^2
\\[3pt]
&-
\left[
\frac{K^2x_{b_i}^4}
{(ax_{b_i}^2-1)^4(K-2x_{b_i})^2}
+O\left(
|\bar{\epsilon}_1|+|\bar{\epsilon}_2|
\right)
\right]y_2^3
+O(y_2^4).
\end{aligned}
\end{equation}

At $\bar{\epsilon}_1=\bar{\epsilon}_2=0$,
the coefficients of $y_2$ and $y_2^2$
both vanish, while the cubic coefficient
is strictly negative.

To verify that the degeneracy has codimension $2$,
we differentiate the coefficients of $y_2$
and $y_2^2$ with respect to
$\bar{\epsilon}_1$ and $\bar{\epsilon}_2$.
The resulting Jacobian matrix, evaluated at
$\bar{\epsilon}_1=\bar{\epsilon}_2=0$, is
\[
\begin{pmatrix}
-\frac{x_{b_i}^3}{(ax_{b_i}^2-1)^2}
&
-1
\\[8pt]
\frac{Kx_{b_i}^4}
{(ax_{b_i}^2-1)^3(K-2x_{b_i})}
&
0
\end{pmatrix}.
\]
Its determinant satisfies
\[
\frac{Kx_{b_i}^4}
{(ax_{b_i}^2-1)^3(K-2x_{b_i})}
\ne0.
\]
Thus, the two perturbations independently
control the linear and quadratic terms.
Together with the nonzero cubic coefficient,
this establishes a codimension-$2$
transcritical--saddle-node (TSN) bifurcation
\cite{saputra2010snt}.

The corresponding bifurcation curves are
\[
d=\frac{x_{b_i}^2}
{ax_{b_i}^2+bx_{b_i}+1}
\qquad
\text{(transcritical bifurcation)}
\]
and
\[
d=\frac{4}{4a-b^2}
\qquad
\text{(saddle-node bifurcation)}.
\]
These curves meet tangentially, since
\[
\frac{4}{4a-b^2}
-
\frac{x_{b_i}^2}
{ax_{b_i}^2+bx_{b_i}+1}
=
\frac{(bx_{b_i}+2)^2}
{(4a-b^2)(ax_{b_i}^2+bx_{b_i}+1)}.
\]
They intersect at
\[
b=-\frac{2}{x_{b_i}},
\qquad
d=\frac{x_{b_i}^2}{ax_{b_i}^2-1},
\]
where the three equilibria merge.
This completes the proof.
\end{proof}

\FloatBarrier

Fig. \ref{fig:TSN-unfolding-bd} illustrates the saddle-node and transcritical bifurcation curves in the $(b,d)$-plane. Here $d_{\mathrm{fold}}(b):=\frac{4}{4a-b^2}$ denotes the saddle-node curve used on the vertical axis of the left panel. The curves meet tangentially at $\mathrm{TSN}_{134}$ and $\mathrm{TSN}_{234}$, where three equilibria coincide.

\begin{figure}[!htbp]
    \centering
    \includegraphics[width=\linewidth]{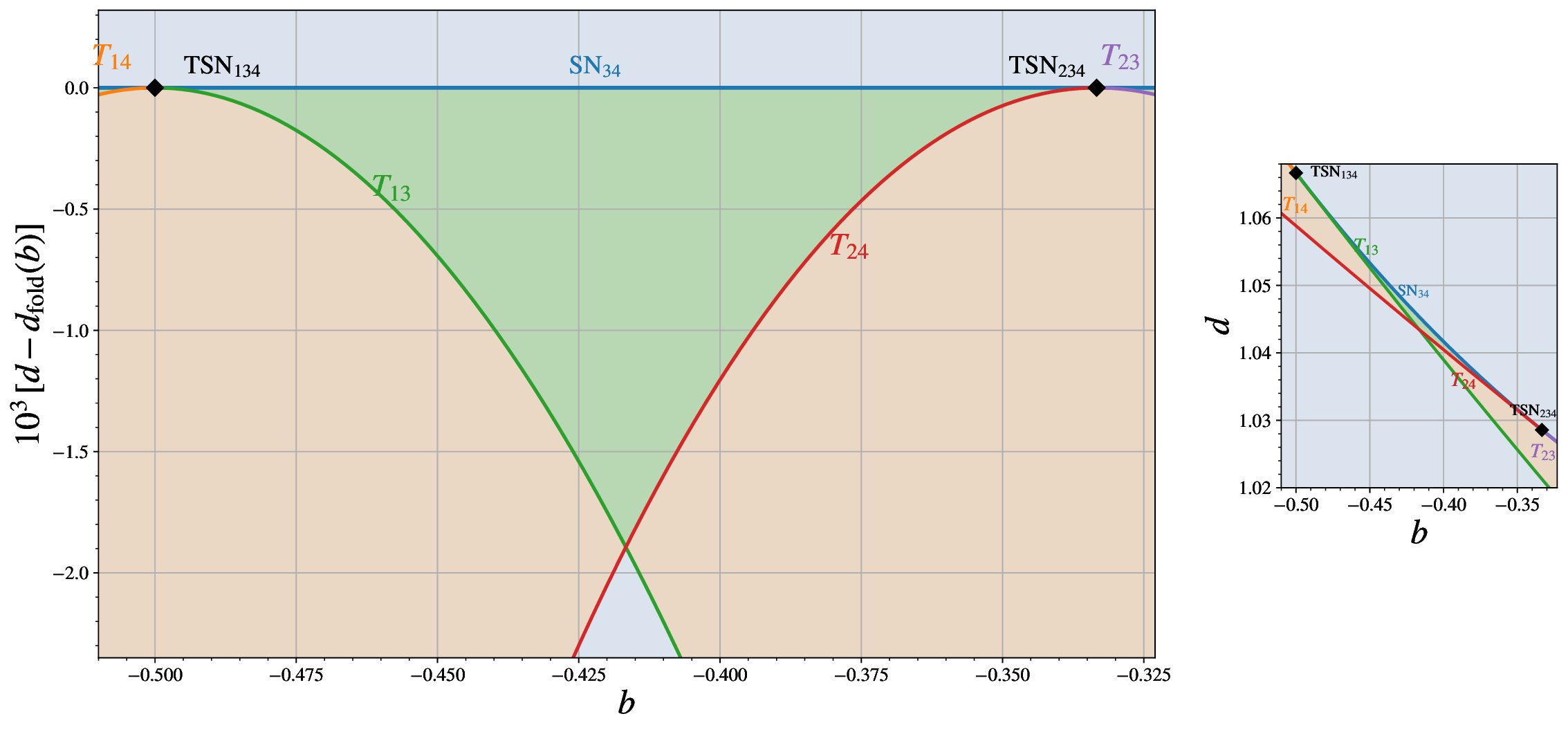}
    \caption{Transcritical--saddle-node unfolding in the $(b,d)$-plane for $a=1$, $K=10$, and $h=2.4$. The left panel magnifies the separation from the saddle-node curve, and the right panel shows the original parameter plane. $T_{ij}$ denotes a transcritical collision of $E_i$ and $E_j$, $\mathrm{SN}_{34}$ the saddle-node collision of $E_3$ and $E_4$, and the black diamonds $\mathrm{TSN}_{134}$ and $\mathrm{TSN}_{234}$ the triple collisions. Green, beige, and blue-gray indicate two, one, and zero feasible coexistence equilibria, respectively.}
    \label{fig:TSN-unfolding-bd}
\end{figure}

\FloatBarrier
\section{Discussion}

\begin{table}[!htbp]
\centering
\small
\renewcommand{\arraystretch}{1.25}
\setlength{\tabcolsep}{4pt}
\begin{tabularx}{0.95\linewidth}{|c|L|L|}
\hline
\textbf{Equilibrium} & \textbf{Bifurcation type} & \textbf{Conditions}\\
\hline

\multirow{5}{*}{$E_{34}$}
 & Attracting semi-hyperbolic saddle-node
 & \cond{d=d_0=\frac{4}{4a-b^2},\quad -\frac{2}{b}<K<K_0=-\frac{4}{b},\quad 0<h<\frac{4+2bK}{-b^2K}}
 \\ \cline{2-3}

 & Repelling semi-hyperbolic saddle-node
 & \cond{d=d_0,\quad K>K_0,\quad 0<h<\frac{4+2bK}{-b^2K}}
 \\ \cline{2-3}

 & Nilpotent cusp (codimension-2)
 & \cond{d=d_0,\quad K=K_0,\quad 0<h<-\frac{1}{b},\quad h\ne h_0}
 \\ \cline{2-3}

 & Nilpotent cusp (codimension-3)
 & \cond{d=d_0,\quad K=K_0,\quad h=h_0=\frac{-2b^2+4a}{b^3},\quad 2a<b^2<4a}
 \\ \cline{2-3}

 & Bogdanov-Takens bifurcation
 & \cond{d=d_0,\quad K=K_0,\quad h=h_0,\quad 2a<b^2<4a}\\
\hline

$E_3$ & Weak focus; Hopf bifurcation of codimension-2
 & \cond{B_1=B_2=B_3=B_4=0,\quad B_5\ne0,\quad \operatorname{rank}D(B_1,B_3)=2}\\
\hline

\multirow{2}{*}{$E_{12}$}
 & Attracting semi-hyperbolic saddle-node
 & \cond{K=K_2=4h,\quad (-4ad+4)h^2-2hdb-d<0}
 \\ \cline{2-3}

 & Repelling semi-hyperbolic saddle-node
 & \cond{K=K_2=4h,\quad (-4ad+4)h^2-2hdb-d>0}\\
\hline

$E_{13}$ & Transcritical collision; repelling transverse direction
 & \cond{d=p(x_{b_1}),\quad b>-\frac{2}{x_{b_1}}}\\
\hline

$E_{14}$ & Transcritical collision; repelling transverse direction
 & \cond{d=p(x_{b_1}),\quad b<-\frac{2}{x_{b_1}}}\\
\hline

$E_{134}$ & Semi-hyperbolic saddle-type equilibrium
 & \cond{d=p(x_{b_1}),\quad b=-\frac{2}{x_{b_1}}}\\
\hline

$E_{23}$ & Transcritical collision; attracting transverse direction
 & \cond{d=p(x_{b_2}),\quad b>-\frac{2}{x_{b_2}}}\\
\hline

$E_{24}$ & Transcritical collision; attracting transverse direction
 & \cond{d=p(x_{b_2}),\quad b<-\frac{2}{x_{b_2}}}\\
\hline

$E_{234}$ & Semi-hyperbolic stable node
 & \cond{d=p(x_{b_2}),\quad b=-\frac{2}{x_{b_2}}}\\
\hline

$E_{123}$ & Nilpotent saddle of codimension-3
 & \cond{d=\frac{4bh^3+6h^2}{3(bh+1)^2},\quad K=4h,\quad
a=\frac{2b^2h^2+4bh+3}{4h^2(2bh+3)},\quad -1<bh<0}\\
\hline

\multirow{2}{*}{$E_{124}$}
 & Nilpotent elliptic point of codimension-2
 & \cond{K=4h,\quad d=p(2h),\quad -\frac{3}{2}<bh<-1,\quad a\ne a_2}
 \\ \cline{2-3}

 & Degenerate nilpotent elliptic point
 & \cond{K=4h,\quad d=p(2h),\quad -\frac{3}{2}<bh<-1,\quad a=a_2}\\
\hline

\end{tabularx}
\caption{Summary of the local bifurcation types and the associated parameter conditions for equilibria and equilibrium coalescences in system~\eqref{Rescaled Main Model}. For the separate boundary collisions, $K>4h$ is assumed. $E_{ijk}$ denotes the coalescence point of $E_i$, $E_j$, and $E_k$, where $i,j,k\in\{1,2,3,4\}$}
\label{bifurcation-summary}
\end{table}

The biological consequences of harvesting depend strongly on how predation efficiency changes with prey density. Table~\ref{tab:differences between b>0 and b<0} summarizes the main differences between the two functional response regimes. When $b>0$, predation efficiency increases with prey density before approaching saturation. In contrast, for $b<0$, predation efficiency reaches a maximum at an intermediate prey density and then declines, reflecting the possible benefits of group defense at high prey abundance. This difference allows a second coexistence equilibrium $E_4$ to emerge. Since $E_4$ is a saddle, it can separate populations that recover toward coexistence from those that experience predator extinction. Consequently, the local stability of a coexistence equilibrium does not necessarily guarantee resilience to population disturbances.

\begin{table}[!htbp]
\centering
\renewcommand{\arraystretch}{1.25}
\begin{tabular}{lcc}
\toprule
\textbf{Dynamical feature} & \(b>0\) & \(b<0\)\\
\midrule
Maximum number of biologically feasible equilibria & 3 & 4\\
Hopf bifurcation at coexistence & $\checkmark$ & $\checkmark$\\
Codimension-two generalized Hopf bifurcation & $\checkmark$ & $\checkmark$\\
Codimension-three nilpotent-saddle bifurcation & $\checkmark$ & $\checkmark$\\
Codimension-two transcritical--saddle-node (TSN) bifurcation & -- & $\checkmark$\\
Interior codimension-three BT bifurcation at $E_{34}$ & -- & $\checkmark$\\
Two harvesting thresholds for predator invasion & -- & $\checkmark$\\
\bottomrule
\end{tabular}
\caption{Comparison of the two response regimes. A check indicates that the phenomenon occurs for suitable parameter values; a dash indicates that it does not arise through the mechanism considered here.}
\label{tab:differences between b>0 and b<0}
\end{table}

These differences become particularly relevant when considering predator establishment under harvesting. For $b<0$, suppose
\[
\frac{1}{a}<d<\frac{4}{4a-b^2},
\qquad
\frac{K}{2}<x_\ell<x_r<K,
\]
where $p(x_\ell)=p(x_r)=d$. The corresponding harvesting thresholds are
\[
h_r=x_r\left(1-\frac{x_r}{K}\right),
\qquad
h_\ell=x_\ell\left(1-\frac{x_\ell}{K}\right),
\]
with
\[
0<h_r<h_\ell<\frac{K}{4}.
\]
As harvesting increases, the prey density at the larger predator-free equilibrium $E_2$ decreases. Since $p(x)>d$ precisely when $x_\ell<x<x_r$, predators can invade when rare if and only if
\[
h_r<h<h_\ell.
\]
These thresholds correspond to the transcritical collisions described in Theorem~\ref{thm:boundary-collisions}. At $h=h_r$, we have $x_{b_2}=x_r$, so $E_2$ meets $E_4$. At $h=h_\ell$, we have $x_{b_2}=x_\ell$, so $E_2$ meets $E_3$. In both cases, the predator-direction eigenvalue
\[
\lambda_2^2=p(x_{b_2})-d
=\frac{x_{b_2}^2}{ax_{b_2}^2+bx_{b_2}+1}-d
\]
is zero, while $\lambda_1^2=1-\frac{2x_{b_2}}{K}<0$. Therefore, $\lambda_2^2>0$ when $h_r<h<h_\ell$ and $\lambda_2^2<0$ outside this interval for $0<h<\frac{K}{4}$. The two harvesting thresholds thus mark where predators gain or lose the ability to invade when rare.

Fig. \ref{fig:harvesting_comparison} illustrates how harvesting affects predator establishment in the two response regimes. The curves in the upper part of each panel show the growth rate of a rare predator, $p(x_{b_2})-d$: positive values allow invasion, while negative values prevent it. The basin diagrams below show which initial population densities lead to coexistence (green), predator extinction (blue), or prey extinction (red). The three harvesting levels have the following interpretations:
\begin{itemize}
\item \textbf{Low harvesting ($h=0.04$).} In the $b<0$ panel (top), prey are abundant and group defense reduces their vulnerability. The invasion growth rate is negative, so predators introduced at low density cannot establish. Nevertheless, the green region shows that coexistence remains possible when the initial predator population is sufficiently large; the blue region leads to predator extinction. For $b>0$ (bottom), the invasion growth rate is positive and coexistence occupies most of the displayed basin.

\item \textbf{Moderate harvesting ($h=0.30$).} In the $b<0$ panel, harvesting lowers prey abundance and weakens group defense. The invasion growth rate becomes positive between the thresholds $h_r$ and $h_\ell$, allowing rare predators to increase. The green region shows coexistence for a substantial range of initial conditions, although the red region indicates that some initial states lead to prey extinction. For $b>0$, the invasion growth rate remains positive but has declined from its low-harvesting value.

\item \textbf{High harvesting ($h=0.49$).} In the $b<0$ panel, harvesting has reduced prey abundance below the level needed for positive predator growth, and the invasion growth rate becomes negative again beyond $h_\ell$. The basin diagram now shows prey extinction (red) or predator extinction (blue), with no coexistence region at the illustrated harvesting level. The $b>0$ panel shows the same two extinction outcomes after its single invasion threshold has been crossed.
\end{itemize}
Thus, for $b<0$, moderate harvesting can facilitate predator invasion by weakening group defense, whereas the invasion potential for $b>0$ declines as harvesting increases. The basin diagrams also show that invasion by a rare predator and long-term persistence are distinct questions: the eventual outcome can depend on the initial prey and predator densities, even at a fixed harvesting rate.

\begin{figure}[!htbp]
    \centering

    \begin{minipage}[c]{0.015\linewidth}
        \centering
        \rotatebox[origin=c]{90}{\large $b<0$}
    \end{minipage}    \hspace{0.005\linewidth}    \begin{minipage}[c]{0.965\linewidth}
        \centering
        \includegraphics[width=0.8\linewidth]{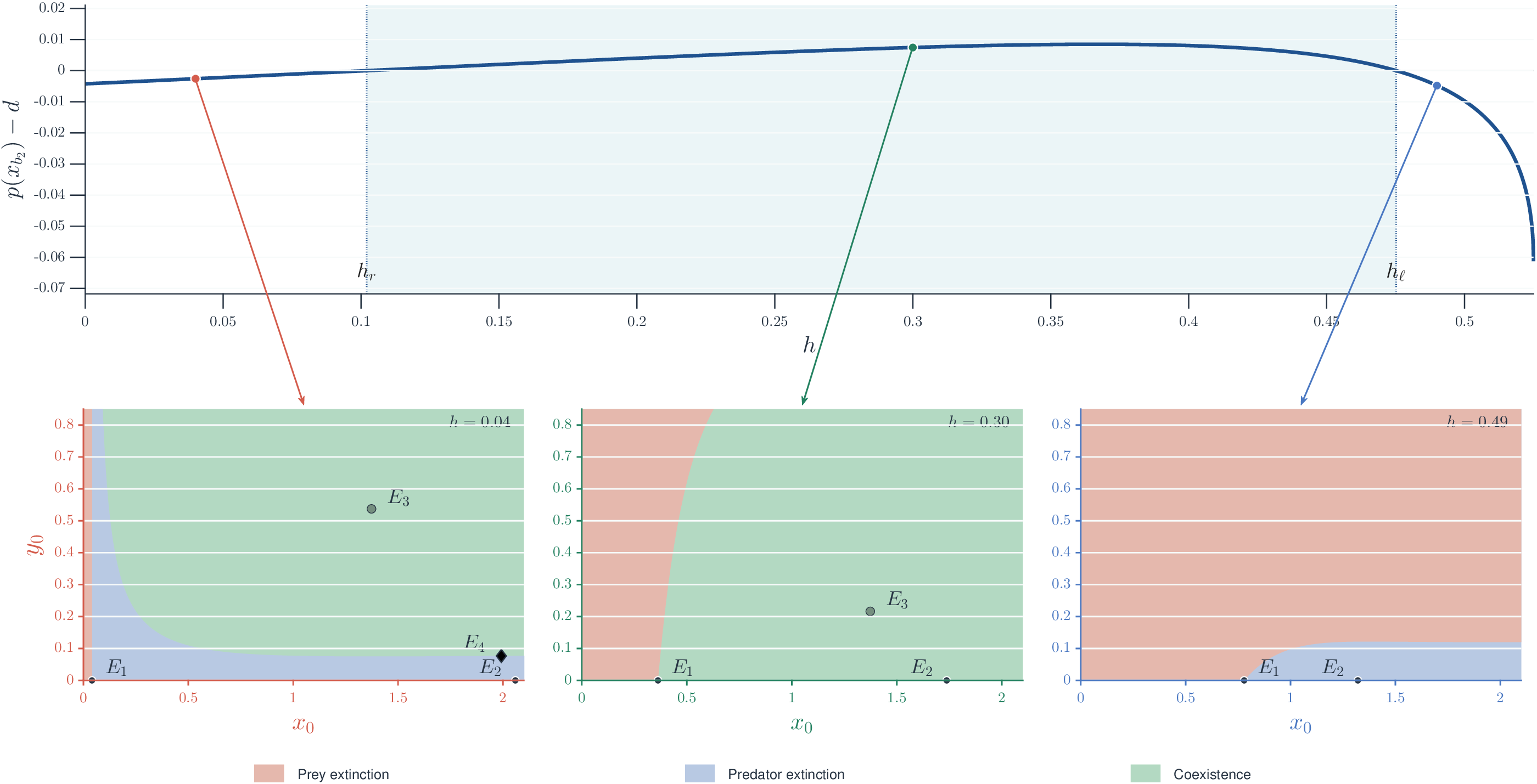}
    \end{minipage}

    \vspace{0.5em}

    \begin{minipage}[c]{0.015\linewidth}
        \centering
        \rotatebox[origin=c]{90}{\large $b>0$}
    \end{minipage}    \hspace{0.005\linewidth}    \begin{minipage}[c]{0.965\linewidth}
        \centering
        \includegraphics[width=0.8\linewidth]{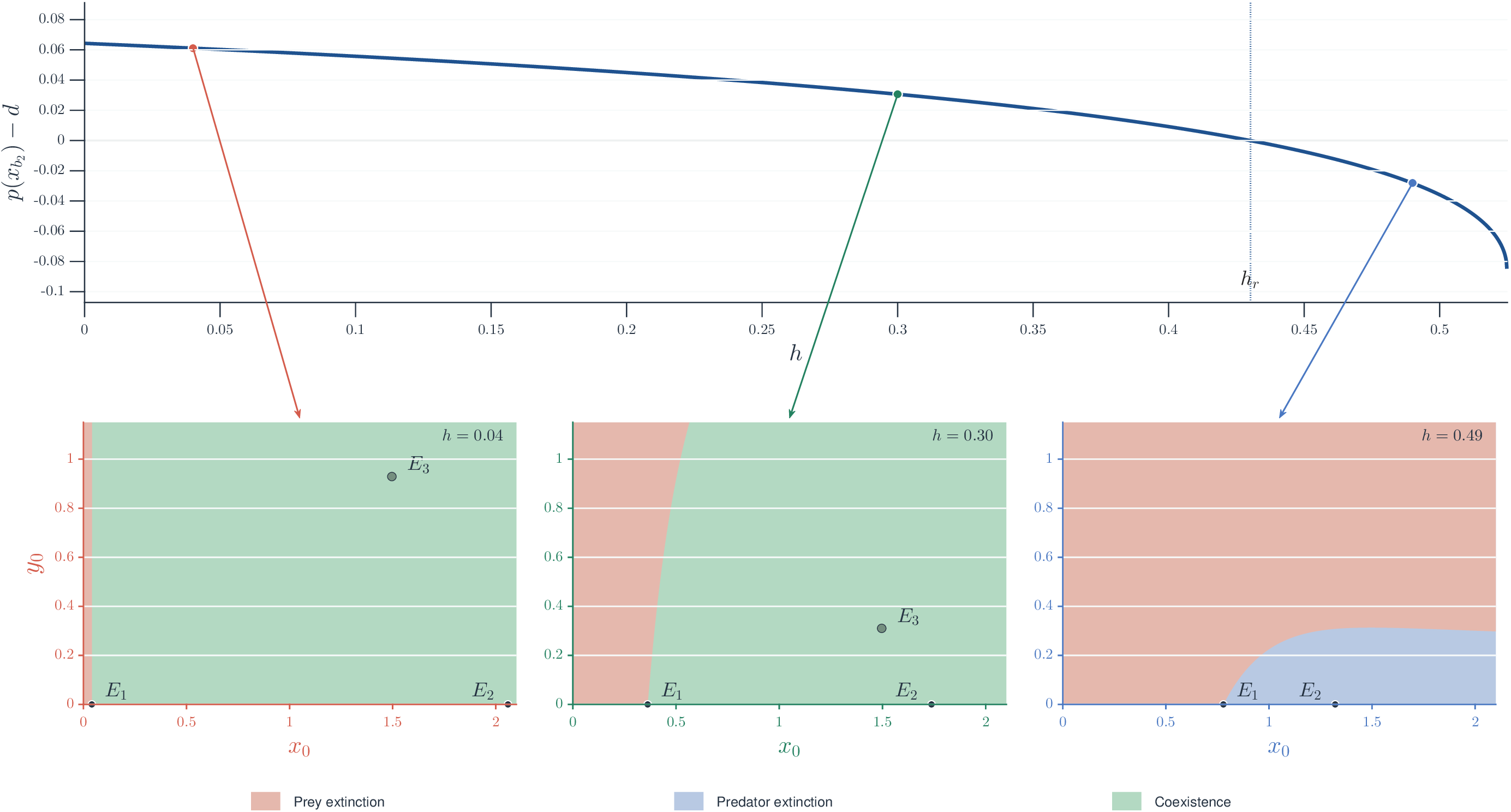}
    \end{minipage}

    \caption{
    Effects of harvesting on predator invasion and ecosystem persistence.
    Parameters $(a,b,d,K)=(1.6,-1.23,0.81,2.1)$ (top) and
    $(1.6,0.5,0.42,2.1)$ (bottom).
    Basin diagrams correspond to $h=0.04$, $0.30$, and $0.49$. Prey extinction occur in finite time, as the orbits leave the first quadrant through the positive y-axis.
    }
    \label{fig:harvesting_comparison}
\end{figure}

\FloatBarrier

Beyond predator establishment, the sign of $b$ also affects the parameter values at which population oscillations and degenerate equilibria arise. Although Hopf bifurcations occur in both regimes, their critical harvesting levels depend on the shape of the functional response. At a feasible Hopf bifurcation of $E_3=(x_\ell,y_\ell)$, the harvesting threshold satisfies
\[
h_H=x_\ell\left(1-\frac{x_\ell}{K}\right)
-\frac{d}{p'(x_\ell)}
\left(1-\frac{2x_\ell}{K}\right).
\]
Allowing $b<0$ changes both the coexistence prey density $x_\ell$ and the slope $p'(x_\ell)$, thereby shifting the harvesting level at which steady coexistence can give way to population oscillations. Depending on the remaining parameters, these oscillations may arise at lower or higher harvesting intensities. Hence, the effect of group defense extends beyond predator invasion and also influences the conditions under which predator and prey populations fluctuate over time.

A similar shift occurs in the nilpotent-saddle bifurcation, where the predator-free equilibria merge with a coexistence equilibrium. The corresponding thresholds satisfy
\[
K=4h,\qquad
d=p(2h)=\frac{4h^2}{4ah^2+2bh+1}.
\]
For fixed $a$ and $h$, decreasing $b$ increases the critical predator mortality rate, provided the functional response remains well defined. Thus, the change in predation efficiency modifies the conditions under which equilibria merge and nearby global connections or population cycles may arise. Furthermore, when $b<0$, the additional coexistence equilibrium allows an interior saddle-node bifurcation that can develop into the codimension-three Bogdanov--Takens structure established in this paper.

Overall, these results show that the ecological consequences of harvesting depend not only on the number of prey removed but also on how prey abundance influences predation efficiency. In particular, group defense can create an intermediate harvesting range favorable to predator establishment, alter the resilience of coexistence to disturbances, and shift the thresholds associated with oscillatory and degenerate population dynamics.

\FloatBarrier
\newpage

\section*{Appendix A}
\begingroup
\setlength{\parindent}{0pt}
\newcounter{appBSavedEquation}
\setcounter{appBSavedEquation}{\value{equation}}
\setcounter{equation}{0}
\renewcommand{\theequation}{A.\arabic{equation}}
\renewcommand{\theHequation}{appB.\arabic{equation}}
\small
\setlength{\jot}{4pt}
\setlength{\abovedisplayskip}{4pt}
\setlength{\belowdisplayskip}{4pt}
\setlength{\abovedisplayshortskip}{4pt}
\setlength{\belowdisplayshortskip}{4pt}
\allowdisplaybreaks[3]

\noindent Throughout this appendix, we write
\(\Delta=4a-b^2\), \(U=1+bh\), and \(V=3+2bh\)
for recurring algebraic expressions. These are abbreviations only. The Bogdanov--Takens coefficient families use $\epsilon=(\epsilon_1,\epsilon_2,\epsilon_3)$, whereas the nilpotent-saddle coefficient families use $\mu=(\mu_1,\mu_2,\mu_3)$. Parameter-dependent coefficients are recorded through first order in the corresponding perturbation vector; an omitted parameter remainder is $O(\|\epsilon\|^2)$ or $O(\|\mu\|^2)$, respectively.
\par\vspace{5pt}

\noindent\textit{Coefficients $\mathrm{o}_{ij}$ (Coefficients of \eqref{BT system}).}\par\nobreak\vspace{1pt}
{\renewcommand{\arraystretch}{1.25}
\refstepcounter{equation}\label{eq:appB:o}
\noindent\begin{minipage}[c]{.87\linewidth}
\begin{tabular}{@{}p{.46\linewidth}@{\hspace{.8em}}p{.48\linewidth}@{}}
$\displaystyle \mathrm{o}_{01}=-\frac{4}{\Delta}$
&
$\displaystyle \mathrm{o}_{30}=-\frac{b^2}{4}$\\[4pt]
$\displaystyle \mathrm{o}_{21}=\frac{b^4}{\Delta^2}$
&
$\displaystyle \mathrm{o}_{40}=\frac{b^3(b^2-3a)}{4\Delta}$\\[4pt]
$\displaystyle \mathrm{o}_{31}=\frac{b^5}{\Delta^2}$
&
\end{tabular}
\end{minipage}\hfill(\theequation)\par}\vspace{4pt}

\noindent\textit{Coefficients $\iota_{ij}$ (Coefficients of \eqref{BT system}).}\par\nobreak\vspace{1pt}
{\renewcommand{\arraystretch}{1.25}
\refstepcounter{equation}\label{eq:appB:iota}
\noindent\begin{minipage}[c]{.87\linewidth}
\begin{tabular}{@{}p{.46\linewidth}@{\hspace{.8em}}p{.48\linewidth}@{}}
$\displaystyle \iota_{20}=\frac{b}{4}$
&
$\displaystyle \iota_{30}=\frac{b^2}{4}$\\[4pt]
$\displaystyle \iota_{21}=-\frac{b^4}{\Delta^2}$
&
$\displaystyle \iota_{40}=\frac{b^3(3a-b^2)}{4\Delta}$\\[4pt]
$\displaystyle \iota_{31}=-\frac{b^5}{\Delta^2}$
&
\end{tabular}
\end{minipage}\hfill(\theequation)\par}\vspace{4pt}

\noindent\textit{Coefficients $\rho_{ij}$ (Coefficients of \eqref{eq:cusp-rho-expansion}).}\par\nobreak\vspace{1pt}
{\renewcommand{\arraystretch}{1.25}
\refstepcounter{equation}\label{eq:appB:rho}
\noindent\begin{minipage}[c]{.87\linewidth}
\begin{tabular}{@{}p{.46\linewidth}@{\hspace{.8em}}p{.48\linewidth}@{}}
$\displaystyle \rho_{20}=-\frac{b}{\Delta}$
&
$\displaystyle \rho_{30}=-\frac{b^2}{\Delta}$\\[4pt]
$\displaystyle \rho_{12}=-\frac{b^4}{2\Delta}$
&
$\displaystyle \rho_{21}=-\frac{b^2(3b^4-24ab^2+48a^2+4b^2)}{4\Delta^2}$\\[4pt]
$\displaystyle \rho_{40}=\frac{b^3(5b^2-12a)}{4\Delta^2}$
&
$\displaystyle \rho_{22}=-\frac{3b^5}{4\Delta}$\\[4pt]
$\displaystyle \rho_{31}=-\frac{b^3(b^4-7ab^2+12a^2+b^2)}{\Delta^2}$
&
\end{tabular}
\end{minipage}\hfill(\theequation)\par}\vspace{4pt}

\noindent\textit{Coefficients $\varpi_{ij}$ (Coefficients of \eqref{eq:cusp-varpi-expansion}).}\par\nobreak\vspace{1pt}
{\renewcommand{\arraystretch}{1.25}
\refstepcounter{equation}\label{eq:appB:varpi}
\noindent\begin{minipage}[c]{.87\linewidth}
\begin{tabular}{@{}p{.46\linewidth}@{\hspace{.8em}}p{.48\linewidth}@{}}
$\displaystyle \varpi_{30}=-\Delta$
&
$\displaystyle \varpi_{12}=-\frac{b^2\Delta}{2}$\\[4pt]
$\displaystyle \varpi_{22}=\frac{3b^2\Delta^2}{4}$
&
$\displaystyle \varpi_{21}=-\frac{3b^4}{4}+6ab^2-12a^2-b^2$\\[4pt]
$\displaystyle \varpi_{40}=\frac{(12a-5b^2)\Delta}{4}$
&
$\displaystyle \varpi_{31}=(b^4-7ab^2+12a^2+b^2)\Delta$
\end{tabular}
\end{minipage}\hfill(\theequation)\par}\vspace{4pt}

\medskip\noindent\textbf{Transformation coefficients}\par\nobreak\vspace{2pt}

\noindent\textit{Coefficients $w_{ij}$ (Coefficients of \eqref{BT system}).}\par\nobreak\vspace{1pt}
{\renewcommand{\arraystretch}{1.25}
\refstepcounter{equation}\label{eq:appB:w}
\noindent\begin{minipage}[c]{.87\linewidth}
\begin{tabular}{@{}p{.46\linewidth}@{\hspace{.8em}}p{.48\linewidth}@{}}
$\displaystyle w_{00}=\frac{\epsilon_2}{4}-\epsilon_3+O(\epsilon^2)$
&
$\displaystyle w_{10}=-\frac{b\epsilon_2}{4}+O(\epsilon^2)$\\[4pt]
$\displaystyle w_{20}=\frac{b^2\epsilon_2}{16}+O(\epsilon^2)$
&
$\displaystyle w_{01}=\mathrm{o}_{01}$\\[4pt]
$\displaystyle w_{30}=\mathrm{o}_{30}$
&
$\displaystyle w_{21}=\mathrm{o}_{21}$\\[4pt]
$\displaystyle w_{40}=\mathrm{o}_{40}$
&
$\displaystyle w_{31}=\mathrm{o}_{31}$
\end{tabular}
\end{minipage}\hfill(\theequation)\par}\vspace{4pt}

\noindent\textit{Coefficients $k_{ij}$ (Coefficients of \eqref{BT system}).}\par\nobreak\vspace{1pt}
{\renewcommand{\arraystretch}{1.25}
\refstepcounter{equation}\label{eq:appB:k}
\noindent\begin{minipage}[c]{.87\linewidth}
\begin{tabular}{@{}p{.46\linewidth}@{\hspace{.8em}}p{.48\linewidth}@{}}
$\displaystyle k_{00}=\frac{\Delta^2\epsilon_1}{4b^3}$
&
$\displaystyle k_{01}=-\epsilon_1$\\[4pt]
$\displaystyle k_{20}=\iota_{20}$
&
$\displaystyle k_{30}=\iota_{30}$\\[4pt]
$\displaystyle k_{21}=\iota_{21}$
&
$\displaystyle k_{40}=\iota_{40}$\\[4pt]
$\displaystyle k_{31}=\iota_{31}$
&
\end{tabular}
\end{minipage}\hfill(\theequation)\par}\vspace{4pt}

\noindent\textit{Coefficients $m_{ij}$ (Coefficients of \eqref{step_1 BT3}).}\par\nobreak\vspace{1pt}
{\renewcommand{\arraystretch}{1.25}
\noindent\begin{tabular}{@{}p{.44\linewidth}@{\hspace{1.3em}}p{.48\linewidth}@{}}
$\displaystyle m_{00}=-\frac{\Delta\epsilon_1}{b^3}$
&
$\displaystyle m_{01}=-\frac{b\epsilon_2}{4}-\epsilon_1+O(\epsilon^2)$\\[4pt]
$\displaystyle m_{12}=-\frac{b^4}{2\Delta}$
&
$\displaystyle m_{22}=-\frac{3b^5}{4\Delta}$\\[4pt]
$\displaystyle m_{10}=O(\epsilon^2)$
&
\end{tabular}\par}\vspace{4pt}
\begin{flalign}
m_{11}&=\frac{ab^2\epsilon_2-b^4\epsilon_3}{2\Delta}
+O(\epsilon^2)&&\notag\\
m_{20}&=-\frac{b}{\Delta}+\frac{b\epsilon_1}{4}
+\frac{b^4\epsilon_2}{4\Delta^2}
-\frac{b^4\epsilon_3}{\Delta^2}
+O(\epsilon^2)&&\notag\\
m_{21}&=-\frac{b^2(3b^4-24ab^2+48a^2+4b^2)}{4\Delta^2}
+\frac{b^5\epsilon_2}{16\Delta}&&\notag\\
&\quad-\frac{3b^5\epsilon_3}{4\Delta}
+O(\epsilon^2)&&\notag\\
m_{30}&=-\frac{b^2}{\Delta}-\frac{b^5\epsilon_3}{\Delta^2}&&\notag\\
m_{40}&=-\frac{b^3(12a-5b^2)}{4\Delta^2}
-\frac{b^8\epsilon_2}{16\Delta^3}
-\frac{b^6(3a-b^2)\epsilon_3}{\Delta^3}
+O(\epsilon^2)&&\notag\\
m_{31}&=-\frac{b^3(b^4-7ab^2+12a^2+b^2)}{\Delta^2}
+\frac{b^6(2a-b^2)\epsilon_2}{16\Delta^2}&&\notag\\
&\quad+\frac{b^6(7b^2-24a)\epsilon_3}{8\Delta^2}
+O(\epsilon^2).&&\label{eq:appB:m}
\end{flalign}

\noindent\textit{Coefficients $n_{ij}$ (Coefficients of \eqref{eq:bt-step2}).}\par\nobreak\vspace{1pt}
{\renewcommand{\arraystretch}{1.25}
\noindent\begin{tabular}{@{}p{.44\linewidth}@{\hspace{1.3em}}p{.48\linewidth}@{}}
$\displaystyle n_{00}=m_{00}$
&
$\displaystyle n_{01}=-\epsilon_1-\frac{b\epsilon_2}{4}$\\[4pt]
$\displaystyle n_{11}=\frac{ab^2\epsilon_2-b^4\epsilon_3}{2\Delta}$
&
$\displaystyle n_{21}=m_{21}+O(\epsilon^2)$\\[4pt]
$\displaystyle n_{10}=O(\epsilon^2)$
&
\end{tabular}\par}\vspace{4pt}
\begin{flalign}
n_{20}&=-\frac{b}{\Delta}+\frac{b^4\epsilon_2}{4\Delta^2}
-\frac{b^4\epsilon_3}{\Delta^2}&&\notag\\
n_{30}&=-\frac{b^2}{\Delta}-\frac{b^2\epsilon_1}{4}
-\frac{b^5\epsilon_3}{\Delta^2}&&\notag\\
n_{40}&=-\frac{b^3(18a-7b^2)}{6\Delta^2}
-\frac{b^5\epsilon_1}{24\Delta}&&\notag\\
&\quad-\frac{b^8\epsilon_2}{24\Delta^3}
+\frac{b^6(11b^2-36a)\epsilon_3}{12\Delta^3}
+O(\epsilon^2)&&\notag\\
n_{31}&=-\frac{b^3(b^4-7ab^2+12a^2+b^2)}{\Delta^2}
+\frac{b^6(4a-3b^2)\epsilon_2}{48\Delta^2}&&\notag\\
&\quad+\frac{b^6(11b^2-36a)\epsilon_3}{12\Delta^2}
+O(\epsilon^2).&&\label{eq:appB:n}
\end{flalign}

\noindent\textit{Coefficients $p_{ij}$ (Coefficients of \eqref{ste$P_3$ BT3}).}\par\nobreak\vspace{1pt}
{\renewcommand{\arraystretch}{1.25}
\noindent\begin{tabular}{@{}p{.44\linewidth}@{\hspace{1.3em}}p{.48\linewidth}@{}}
$\displaystyle p_{00}=m_{00}$
&
$\displaystyle p_{01}=-\epsilon_1-\frac{b\epsilon_2}{4}$\\[4pt]
$\displaystyle p_{10}=\frac{\Delta\epsilon_1}{2b^2}+O(\epsilon^2)$
&
\end{tabular}\par}\vspace{4pt}
\begin{flalign}
p_{11}&=\frac{b\epsilon_1}{2}
+\frac{b^2(8a-b^2)\epsilon_2}{8\Delta}
-\frac{b^4\epsilon_3}{2\Delta}&&\notag\\
p_{20}&=-\frac{b}{\Delta}-\frac{(11b^2+36a)\epsilon_1}{80b}
+\frac{b^4\epsilon_2}{4\Delta^2}
-\frac{b^4\epsilon_3}{\Delta^2}&&\notag\\
p_{21}&=-\frac{b^2(3b^4-24ab^2+48a^2+4b^2)}{4\Delta^2}
-\frac{b^2(11b^2+36a)\epsilon_1}{80\Delta}&&\notag\\
&\quad-\frac{3b^3(52a-3b^2)\epsilon_2}{320\Delta}
-\frac{3b^5\epsilon_3}{8\Delta},&&\notag\\
p_{31}&=\frac{b^5}{4\Delta}
+\frac{b^3(3b^4-24ab^2+48a^2+4b^2)\epsilon_1}{16\Delta}&&\notag\\
&\quad+\frac{b^4(176a^3+152a^2b^2-109ab^4+15b^6+20b^4)\epsilon_2}
{80\Delta^3}&&\notag\\
&\quad-\frac{11b^6\epsilon_3}{80\Delta}
+O(\epsilon^2).&&\label{eq:appB:p}
\end{flalign}

\noindent\textit{Coefficients $q_{ij}$ (Coefficients of \eqref{BT step 4}).}\par\nobreak\vspace{1pt}
{\renewcommand{\arraystretch}{1.25}
\noindent\begin{tabular}{@{}p{.44\linewidth}@{\hspace{1.3em}}p{.48\linewidth}@{}}
$\displaystyle q_{00}=m_{00}$
&
$\displaystyle q_{10}=\frac{\Delta\epsilon_1}{2b^2}$\\[4pt]
$\displaystyle q_{20}=p_{20}$
&
$\displaystyle q_{31}=p_{31}$
\end{tabular}\par}\vspace{4pt}
\begin{flalign}
q_{01}&=\frac{3\Delta^2\epsilon_1}{4b^2}
-\frac{b\epsilon_2}{4}&&\notag\\
q_{11}&=-\frac{3\Delta^2\epsilon_1}{8b}
+\frac{b^2(8a-b^2)\epsilon_2}{8\Delta}
-\frac{b^4\epsilon_3}{2\Delta}.&&\label{eq:appB:q}
\end{flalign}

\noindent\textit{Coefficients $r_{ij}$ (Coefficients of \eqref{eq:bt-step5}).}\par\nobreak\vspace{1pt}
{\renewcommand{\arraystretch}{1.25}
\noindent\begin{tabular}{@{}p{.44\linewidth}@{\hspace{1.3em}}p{.48\linewidth}@{}}
$\displaystyle r_{00}=
\frac{2^{\frac{2}{5}}\Delta^{\frac{8}{5}}\epsilon_1}{4(-b)^{\frac{2}{5}}}$
&
$\displaystyle r_{10}=
\frac{2^{\frac{1}{5}}\Delta^{\frac{9}{5}}\epsilon_1}{4(-b)^{\frac{6}{5}}}$
\end{tabular}\par}\vspace{4pt}
\begin{flalign}
r_{01}&=
\frac{2^{\frac{3}{5}}\Delta^{\frac{2}{5}}}{8(-b)^{\frac{8}{5}}}
(3\Delta^2\epsilon_1-b^3\epsilon_2)&&\notag\\
r_{11}&=
\frac{2^{\frac{2}{5}}}{8(-b)^{\frac{12}{5}}\Delta^{\frac{2}{5}}}
\left(
3\Delta^3\epsilon_1-b^3(8a-b^2)\epsilon_2+4b^5\epsilon_3
\right).&&\label{eq:appB:r}
\end{flalign}

\medskip\noindent\textbf{Expressions}\par\nobreak\vspace{2pt}
\noindent The quantities $\mathcal D_1,\mathcal D_2,\mathcal D_3$ are retained from the weak-focus calculations;\par\nobreak\vspace{2pt}
\begin{flalign}
\mathcal D_1&=
b^2x_\ell\left(
a^2b^2x_\ell^5+11a^2bx_\ell^4+18a^2x_\ell^3
-4abx_\ell^2+4b^2x_\ell-18ax_\ell+3b
\right)&&\notag\\[-2pt]
&\qquad\times
\left((b^2+2a)x_\ell^2+6bx_\ell+6\right),&&\notag\\
\mathcal D_2&=
24\sqrt{
-\frac{
(bx_\ell+2)
\left(2abx_\ell^3+(6a-b^2)x_\ell^2-4bx_\ell-6\right)
(ax_\ell^2+bx_\ell+1)}
{bx_\ell^3\left((b^2+2a)x_\ell^2+6bx_\ell+6\right)}
},&&\notag\\
\mathcal D_3&=
(bx_\ell+2)^3(ax_\ell^2+bx_\ell+1)^2
\left(2abx_\ell^3+(6a-b^2)x_\ell^2-4bx_\ell-6\right)^2.
&&\label{eq:appB:auxiliary}
\end{flalign}

\medskip\noindent\textbf{Additional coefficient families}\par\nobreak\vspace{2pt}

\noindent\textit{Coefficients $\ell_{ij}$ (Coefficients of \eqref{eq:ns-step1}).}\par\nobreak\vspace{1pt}
{\renewcommand{\arraystretch}{1.25}
\noindent\begin{tabular}{@{}p{.44\linewidth}@{\hspace{1.3em}}p{.48\linewidth}@{}}
$\displaystyle \ell_{00}=
\frac{3U^2\mu_1}{8h^2V}+O(\mu^2)$
&
$\displaystyle \ell_{10}=
\frac{(4bh+3)U\mu_1}{16h^3V}+O(\mu^2)$\\[4pt]
$\displaystyle \ell_{30}=\frac{U}{16h^4}$
&
$\displaystyle \ell_{40}=-\frac{V}{64h^5}$
\end{tabular}\par}\vspace{4pt}
\begin{flalign}
\ell_{20}&=
-\frac{3U^2}{8h^3V}
+\frac{(2U^2+1)\mu_1}{64h^4V}
-\frac{\mu_3}{4h}
+O(\mu^2).&&\label{eq:appB:ell}
\end{flalign}

\noindent\textit{Coefficients $c_{ij}$ (Coefficients of \eqref{eq:ns-step1}).}\par\nobreak\vspace{1pt}
{\renewcommand{\arraystretch}{1.25}
\noindent\begin{tabular}{@{}p{.44\linewidth}@{\hspace{1.3em}}p{.48\linewidth}@{}}
$\displaystyle c_{11}=\frac{V}{6hU}+\frac{U\mu_2}{4h^3}$
&
$\displaystyle c_{21}=
-\frac{V^2}{24h^2U^2}-\frac{V\mu_2}{16h^4}$
\end{tabular}\par}\vspace{4pt}
\begin{flalign}
c_{01}&=
-\frac{3U^2\mu_2}{2h^2V}
-\frac{2h^2V\mu_3}{3U^2}
+O(\mu^2)&&\notag\\
c_{31}&=
\frac{2b^2h^2+7bh+6}{24h^3U^2}
+\frac{(bh+2)\mu_2}{16h^5}.
&&\label{eq:appB:c}
\end{flalign}

\noindent\textit{Coefficients $f_{ij}$ (Coefficients of \eqref{eq:ns-step3}).}\par\nobreak\vspace{1pt}
{\renewcommand{\arraystretch}{1.25}
\noindent\begin{tabular}{@{}p{.44\linewidth}@{\hspace{1.3em}}p{.48\linewidth}@{}}
$\displaystyle f_{00}=\ell_{00}$
&
$\displaystyle f_{10}=\ell_{10}$
\end{tabular}\par}\vspace{4pt}
\begin{flalign}
f_{20}&=
-\frac{3U^2}{8h^3V}
+\frac{(7b^2h^2+15bh+9)\mu_1}{96h^4V}
-\frac{\mu_3}{4h}
+O(\mu^2).&&\label{eq:appB:f}
\end{flalign}

\noindent\textit{Coefficients $g_{ij}$ (Coefficients of \eqref{eq:ns-step3}).}\par\nobreak\vspace{1pt}
{\renewcommand{\arraystretch}{1.25}
\noindent\begin{tabular}{@{}p{.44\linewidth}@{\hspace{1.3em}}p{.48\linewidth}@{}}
$\displaystyle g_{01}=c_{01}$
&
\end{tabular}\par}\vspace{4pt}
\begin{flalign}
g_{11}&=
\frac{V}{6hU}-\frac{hV^2\mu_3}{9U^3}
+O(\mu^2)&&\notag\\
g_{21}&=
-\frac{P_1\mu_1}{288h^5U^4}
+\frac{V\mu_2}{64h^4}
+\frac{VP_2\mu_3}{432h^2U^4}
+O(\mu^2)&&\notag\\
g_{31}&=
-\frac{2b^3h^3+9b^2h^2+15bh+9}{72h^3U^3}
+\frac{P_3\mu_1}{62208h^6U^5}&&\notag\\
&\quad
-\frac{b(bh+3)\mu_2}{144h^4U}
+\frac{(2b^2h^2+6bh+9)V^2\mu_3}{324hU^5}
+O(\mu^2),&&\notag\\
g_{41}&=
\frac{4b^4h^3+24b^3h^2+51b^2h+36b}{864h^3U^4}
+\frac{P_4\mu_1}{2654208h^7U^6}&&\notag\\
&\quad
-\frac{(8b^3h^3+36b^2h^2+118bh+75)\mu_2}{4096h^6U^2}
-\frac{VP_5\mu_3}{82944h^4U^6}&&\notag\\
&\quad+O(\mu^2).&&\label{eq:appB:g}
\end{flalign}

\noindent\textit{Coefficients $d_{ij}$ (Coefficients of \eqref{eq:ns-step2}).}\par\nobreak\vspace{1pt}
{\renewcommand{\arraystretch}{1.25}
\noindent\begin{tabular}{@{}p{.44\linewidth}@{\hspace{1.3em}}p{.48\linewidth}@{}}
$\displaystyle d_{00}=\ell_{00}$
&
$\displaystyle d_{10}=\ell_{10}$\\[4pt]
$\displaystyle d_{20}=f_{20}$
&
\end{tabular}\par}\vspace{4pt}
\begin{flalign}
d_{30}&=
\frac{(16b^2h^2+30bh+27)\mu_1}{4608h^5U}
-\frac{V\mu_3}{24h^2U}
+O(\mu^2)&&\notag\\
d_{40}&=
-\frac{(8b^3h^3+30b^2h^2+54bh+27)\mu_1}{13824h^6U^2}
+O(\mu^2).&&\label{eq:appB:d}
\end{flalign}

\noindent\textit{Coefficients $e_{ij}$ (Coefficients of \eqref{eq:ns-step2}).}\par\nobreak\vspace{1pt}
{\renewcommand{\arraystretch}{1.25}
\noindent\begin{tabular}{@{}p{.44\linewidth}@{\hspace{1.3em}}p{.48\linewidth}@{}}
$\displaystyle e_{01}=c_{01}$
&
$\displaystyle e_{11}=g_{11}$
\end{tabular}\par}\vspace{4pt}
\begin{flalign}
e_{21}&=
\frac{V\mu_2}{64h^4}
+\frac{V^3\mu_3}{144U^4}
+O(\mu^2)&&\notag\\
e_{31}&=
-\frac{2b^3h^3+9b^2h^2+15bh+9}{72h^3U^3}
-\frac{b(bh+3)\mu_2}{144h^4U}&&\notag\\
&\quad+
\frac{(2b^2h^2+6bh+9)V^2\mu_3}{324hU^5}
+O(\mu^2).&&\label{eq:appB:e}
\end{flalign}

\noindent\textit{Coefficients $h_{ij}$ (Coefficients of \eqref{eq:ns-step4}).}\par\nobreak\vspace{1pt}
{\renewcommand{\arraystretch}{1.25}
\noindent\begin{tabular}{@{}p{.44\linewidth}@{\hspace{1.3em}}p{.48\linewidth}@{}}
$\displaystyle h_{00}=\frac{U\mu_1}{16h^3}+O(\mu^2)$
&
\end{tabular}\par}\vspace{4pt}
\begin{flalign}
h_{20}&=
-\frac{9U^3}{4h^2V^2}
+\frac{(7b^3h^3+22b^2h^2+24bh+9)\mu_1}{16h^3V^2}
-\frac{3U\mu_3}{V}
+O(\mu^2).&&\label{eq:appB:h}
\end{flalign}

\noindent\textit{Coefficients $b_{ij}$ (Coefficients of \eqref{eq:ns-step4}).}\par\nobreak\vspace{1pt}
{\renewcommand{\arraystretch}{1.25}
\noindent\begin{tabular}{@{}p{.44\linewidth}@{\hspace{1.3em}}p{.48\linewidth}@{}}
$\displaystyle b_{01}=
\frac{V(4bh+3)\mu_1}{72hU^2}+c_{01}+O(\mu^2)$
&
\end{tabular}\par}\vspace{4pt}
\begin{flalign}
b_{21}&=
\frac{P_6\mu_1}{1152U^2h^3V^2}
+\frac{9U^2\mu_2}{16h^2V}
+\frac{P_7\mu_3}{12U^2V}
+O(\mu^2)&&\notag\\
b_{31}&=
-\frac{3b^2h^2+9bh+9}{V^2}
+\frac{(bh+\frac{3}{4})P_8\mu_1}{4U^2h^3V^3}&&\notag\\
&\quad
-\frac{3b(bh+3)U^2\mu_2}{2hV^3}
-\frac{2(7b^2h^2+21bh+18)h^2\mu_3}{3VU^2}
+O(\mu^2),&&\notag\\
b_{41}&=
\frac{P_9+(P_4+P_{10})\mu_1}{2048U^2h^3V^4}&&\notag\\
&\quad
-\frac{81U^2(8b^3h^3+36b^2h^2+118bh+75)\mu_2}{256h^2V^4}
-\frac{3P_{11}\mu_3}{64U^2V^3}.
&&\label{eq:appB:b}
\end{flalign}\\

\medskip\noindent\textbf{Polynomial abbreviations (used in \eqref{eq:appB:g} and \eqref{eq:appB:b})}\par\nobreak\vspace{1pt}
\begingroup\footnotesize
\begin{flalign}
P_1={}&
\left(
1-\frac{16}{9}b^2h^4
+(b^3-\frac{16}{3}b)h^3
+(3b^2-4)h^2+3bh
\right)&&\notag\\[-2pt]
&\quad\times
\left(b^2h^2+\frac{15}{8}bh+\frac{27}{16}\right),&&\notag\\
P_2={}&
18b^3h^3-20b^2h^4+54b^2h^2-60bh^3
+54bh-45h^2+18,&&\notag\\
P_3={}&
\left(
9b^3h^3-8b^2h^4+27b^2h^2
-24bh^3+27bh-18h^2+9
\right)&&\notag\\[-2pt]
&\quad\times(4bh+3)(2b^2h^2+6bh+9),&&\notag\\
P_4={}&
288b^7h^7-1536b^6h^8+1944b^6h^6
-11904b^5h^7+5508b^5h^5&&\notag\\
&-40256b^4h^6+13734b^4h^4
-79296b^3h^5+29349b^3h^3&&\notag\\
&-98496b^2h^4+37611b^2h^2
-72360bh^3+24219bh-23652h^2+6075,&&\notag\\
P_5={}&
432b^5h^5-368b^4h^6+2592b^4h^4
-2208b^3h^5+6804b^3h^3&&\notag\\
&-4776b^2h^4+9180b^2h^2-5832bh^3
+6156bh-3591h^2+1620,&&\notag\\
P_6={}&
-144b^5h^5-702b^4h^4-1485b^3h^3
-1683b^2h^2-999bh-243&&\notag\\
&-896b^4h^6-4800b^3h^5-10080b^2h^4
-9288bh^3-2916h^2,&&\notag\\
P_7={}&
18-20b^2h^4+(18b^3-60b)h^3
+(54b^2-45)h^2+54bh,&&\notag\\
P_8={}&
b^5h^5+6b^4h^4+\frac{33}{2}b^3h^3
+\frac{47}{2}b^2h^2+\frac{33}{2}bh+\frac{9}{2}&&\notag\\
&+\frac{40}{9}b^4h^6+\frac{80}{3}b^3h^5
+54b^2h^4+30bh^3-9h^2,&&\notag\\
P_9={}&
12288b^6h^9+98304b^5h^8+316416b^4h^7&&\notag\\
&+497664b^3h^6+377856b^2h^5+110592bh^4,&&\notag\\
P_{10}={}&
640h^2(2bh+3)(4bh+3)
(2b^4h^4+12b^3h^3+36b^2h^2+54bh+27),&&\notag\\
P_{11}={}&
144b^5h^5-464b^4h^6+864b^4h^4
-2784b^3h^5&&\notag\\
&+2268b^3h^3-5944b^2h^4+3060b^2h^2
-5016bh^3&&\notag\\
&+2052bh-1197h^2+540.
&&\label{eq:appB:polynomials}
\end{flalign}
\endgroup

\setcounter{equation}{\value{appBSavedEquation}}
\endgroup

\bibliographystyle{ieeetr}
\bibliography{citation}

\end{document}